\newtheorem{theorem}{Theorem}[]
\newtheorem{corollary}{Corollary}[]
\newtheorem{remark}{Remark}[]
\newtheorem{lemma}[]{Lemma}
\newtheorem{definition}{Definition}
\newcommand{\vast}{\bBigg@{3}}
\newcommand{\Vast}{\bBigg@{4}}
\begin{document}

\title{Modeling and Statistical Characterization of Large-Scale Automotive Radar Networks}



\author{
\IEEEauthorblockN{Mohammad Taha Shah, Gourab Ghatak, Ankit Kumar, and Shobha Sundar Ram}
\thanks{M.T. Shah and A. Kumar are with the Bharti School of Telecommunication Technology and Management, IIT Delhi, New Delhi, India, 110016; Email: \{tahashah, bsy227531\}@dbst.iitd.ac.in. G. Ghatak is with the Department of Electrical Engineering, IIT Delhi, New Delhi, India, 110016; Email: gghatak@ee.iitd.ac.in. S. S. Ram is with the Department of Electronics and Communication Engineering, IIIT Delhi, New Delhi, India, 110020; Email: shobha@iiitd.ac.in.}
}
\maketitle

\begin{abstract}
The impact of discrete clutter and co-channel interference on the performance of automotive radar networks has been studied using stochastic geometry, in particular, by leveraging Poisson point processes (PPPs). However, such characterization does not take into account the impact of street geometry and the fact that the location of the automotive radars are restricted to the streets as their domain rather than the entire Euclidean plane. In addition, the structure of the streets may change drastically as a vehicle moves out of a city center towards the suburban areas. Consequently, not only the radar performance change but also the radar parameters and protocols must be adapted for optimum performance. In this paper, we propose and characterize line and Cox process-based street and point models to analyze large-scale automotive radar networks. We consider the classical Poisson line process (PLP) and the newly introduced Binomial line process (BLP) model to emulate the streets, and the corresponding PPP-based Cox process to emulate the vehicular nodes. In particular, the BLP model effectively considers the spatial variation of street geometry across different parts of the city. We derive the effective interference set experienced by an automotive radar, the statistics of distance to interferers, and characterize the detection probability of the ego radar as a function of street and vehicle density. Finally, leveraging the real-world data on urban streets and vehicle density across different cities of the world, we present how the radar performance varies in different parts of the city as well as across different times of the day. Thus, our study equips network operators and automotive manufacturers with system design insights to optimize automotive radar networks.
\end{abstract}

\begin{IEEEkeywords}  
Stochastic geometry, Automotive radar, Poisson line Cox process, Binomial line Cox process.
\end{IEEEkeywords}

\section{Introduction}
Millimeter-wave automotive radars have proliferated on roads to support advanced driver assistance systems to reduce road accidents and improve passenger comfort and efficiency. These radars support several important applications, including automatic cruise control, obstacle detection, and blind spot detection~\cite{lu2014connected,bilik2019rise}. 
A comprehensive introduction to automotive radar is presented in~\cite{grimes1974automotive, hartenstein2008tutorial, meinel2013automotive, waldschmidt2021automotive}. Based on the advantages these radars offer, it is predicted that every car on the road will carry at least one radar in the near future. However, the downside to this development is that as more vehicles get equipped with radars, mutual interference between them will begin to limit the performance since they occupy the same frequency band~\cite{goppelt2010automotive, alland2019interference}.  Several potential solutions for mitigating interference have been suggested based on resource-based multiplexing in time, frequency, space, and modulation, or adaptive beamforming. However, in order to design and develop effective interference-mitigation solutions, researchers require accurate and tractable models for automotive radar networks. Currently, the state-of-the-art statistical models of automotive radar networks are based on simplistic assumptions of the locations of vehicular nodes, e.g., assuming them to be located only on highways~\cite{al2017stochastic, ghatak2022radar}. We address this limitation by adapting stochastic \ac{SG} to capture complex urban street topologies. Specifically, we utilize line-based Cox processes to model the spatial transition from dense city centers to suburban areas, maintaining analytical tractability for network planning.

\subsection{Related Work}
A plethora of recent work has emerged on automotive radar networks, focusing on modeling and characterizing interference among vehicular radars using tools from \ac{SG}. Early studies such as~\cite{brooker2007mutual,goppelt2011analytical} employed deterministic models to simulate mutual interference between radars, assuming fixed vehicle positions and idealized road layouts. While these models enabled accurate simulations of specific scenarios, they lacked generalizability due to their reliance on static configurations and simplified propagation assumptions. Authors in~\cite{schipper2015simulative} extended this approach by incorporating traffic flow patterns along roadways, offering realistic simulations at the expense of computational complexity.

To overcome the limitations of deterministic modeling, researchers have increasingly turned to \ac{SG}, as \ac{SG}-based models offer several advantages, including closed-form expressions for \ac{SIR} distribution, efficient parameter tuning, and scalable performance analysis across varying urban densities. Authors in~\cite{al2017stochastic} were among the first to apply \ac{SG} techniques to automotive radar networks, modeling the distribution of vehicles as a \ac{PPP} and estimating the mean \ac{SIR} for detection metrics. Authors in~\cite{munari2018stochastic} later used the strongest interferer approximation to determine the radar detection range and false alarm rate under similar assumptions. Fang \textit{et al.}~\cite{fang2020stochastic} refined this model further by incorporating fluctuating radar cross-sections (RCS) using Swerling-I and Chi-square models, yielding more realistic detection probability curves than those based on constant RCS values. These works primarily focused on two-lane highway-like setups, limiting their applicability to complex urban environments. Authors in~\cite{chu2020interference} introduced a marked point process model to address multi-lane interference characterization, while in~\cite{mishra2020stochastic} and~\cite{kui2021interference}, authors explored the use of \ac{MHCP} to capture vehicle spacing constraints and physical separation requirements. Although these models provide higher fidelity, they often come at the cost of reduced analytical tractability, necessitating Monte Carlo simulations. Besides the \ac{SG} based modeling for interference characterization, several studies have proposed strategies leveraging vehicle-to-everything (V2X) communication. Huang \textit{et al.}~\cite{huang2019v2x} evaluated interference scenarios and proposed a centralized spectrum allocation scheme based on real-time vehicle location data. Likewise, authors in~\cite{zhang2020vanet} introduced a VANET-assisted TDMA protocol to coordinate radar spectrum access, while Aydogdu \textit{et al.}~\cite{aydogdu2019radchat} explored decentralized networking protocols for adaptive interference control. In~\cite{wang2023performance}, authors introduced frequency division multiplexing and hopping techniques to mitigate interference under multi-lane setups modeled via \ac{MHCP}. Despite promising results, many of these approaches rely on cooperative behavior and external infrastructure, which may not always be feasible.

More recently, the scope of \ac{SG} in vehicular networks has expanded to address Integrated Sensing and Communication (ISAC). Wang \textit{et al.}~\cite{wang2024network} and Armeniakos \textit{et al.}~\cite{armeniakos2025stochastic} have developed comprehensive network-level frameworks to evaluate the trade-offs between sensing accuracy and communication rates, highlighting the \say{mutual benefit} of shared hardware. Similarly, authors in~\cite{nabil2024beamwidth} and~\cite{kumar2024stochastic} investigated beamwidth optimization and resource partitioning in radar-aided mmWave cellular networks to minimize beam training overhead. Parallel to ISAC, advanced blockage modeling has gained traction. Authors in~\cite{xu2024stochastic} utilized \ac{SG} to analyze coverage in RIS-assisted mmWave systems under random blockage distributions, while Nazar \textit{et al.}~\cite{nazar2025multi} proposed multi-modal sensor fusion (camera, radar, LiDAR) for proactive blockage prediction in dynamic vehicular environments. Unlike these works, which primarily focus on communication link maintenance, proactive blockage prediction, or sensing-communication trade-offs, our framework explicitly derives the \textit{mutual interference geometry} unique to radar-to-radar interactions. 

\ac{SG} has also been applied to analyze the impact of clutter scatterers on radar detection performance. Ram et al.~\cite{ram2020estimating} modeled discrete clutter using \acp{PPP} and derived detection metrics under different propagation conditions. Subsequent works~\cite{ram2021optimization, ram2022estimation, ram2022optimization, singhal2023leo} extended this framework to optimize radar waveform parameters and integrate sensing with communication systems. However, these studies typically assume uniform placement of vehicles over the Euclidean plane rather than being constrained to streets, which limits their realism in dense urban settings. In reality, automotive radars are constrained to roads, and their spatial distribution varies significantly between city centers and suburban areas. Authors in~\cite{ghatak2022radar} addressed this issue partially by introducing a highway-centric layout with directional alignment, but the model was limited in scope. 

\subsection{Motivation}
While previous studies on \ac{SG}-based spatial networks have established robust baselines for homogeneous or highway-centric environments, capturing the spatial heterogeneity of city-scale deployments remains a challenge. For instance, in real-life scenarios, automotive radars are likely to be randomly located on streets rather than off streets. Therefore, the prior assumption that an interfering radar could be randomly located anywhere near the ego radar is incorrect. In this work, we factor in the structure of city streets (downtown versus suburban city areas) and the time of the day (peak versus off-peak hours) to analyze the performance of automotive radar networks. Specifically, we explore the impact of street geometry on radar interference using line processes, which are random collections of lines in a two-dimensional Euclidean plane. Furthermore, the distribution of cars on the street is modeled as a point process with the lines as their domain. Together, this forms a doubly stochastic Cox model of automotive radars to study the effect of interference on the detection performance of an ego radar. While line processes have been employed in vehicular communication networks to model street-constrained user distributions~\cite{chetlur2019coverage, jeyaraj2021transdimensional, choi2023coverage}, our work addresses a distinct problem, automotive radar interference in dense urban environments. Radar networks differ from communication systems in their directional, bounded sensing sectors, co-channel mutual interference structure, and SIR-based detection metrics. Moreover, prior works rely solely on the homogeneous \ac{PLP}, which cannot capture the finite, spatially varying street density observed in real cities. To bridge this gap, we employ  \ac{BLCP}~\cite{shah2024binomial}, a non-homogeneous, finite-support model that accurately emulates city-center-to-suburb transitions and validate both \ac{PLCP} and \ac{BLCP} against real-world urban data from global metropolises. This combination of novel modeling, radar-specific interference characterization, and empirical validation distinguishes our work from existing street-constrained network analyses.

The key methodological distinction between these models lies in how they represent street geometry: \ac{PLCP} assumes an infinite, homogeneous street layout, making it well-suited for small-scale, uniform urban environments; in contrast, \ac{BLCP} captures finite, bounded street structures with spatially varying density, better reflecting the transition from high-density city centers to low-density suburbs. While classical Cox processes such as the Thomas cluster process, log-Gaussian Cox process, or Poisson-Voronoi Cox process are designed for planar clustering and offer limited tractability, the \ac{BLCP} and \ac{PLCP} are tailored specifically to vehicular networks constrained to street geometries. The \ac{BLCP}, in particular, enables location-dependent performance analysis and better captures suburban transitions, complementing the \ac{PLCP}'s homogeneous modeling of dense urban cores. By building upon and extending these theoretical foundations, our paper offers a novel and analytically tractable framework that improves upon existing literature by incorporating realistic urban street geometry and enabling performance evaluation in spatially homogeneous and heterogeneous environments.
\subsection{Organization and Contributions}
The key contributions of our work are as follows:
\begin{itemize}
    \item \textbf{Modeling realistic urban street geometry:} We introduce a novel framework that captures both dense city centers and spatially varying suburban areas using two line-driven Cox processes: the \ac{PLCP} and the \ac{BLCP}. The \ac{PLCP} provides a compact model for uniform street layouts (e.g., city cores), while the \ac{BLCP} captures finite, bounded street distributions that reflect real-world transitions from high-density downtown to sparse suburbs.
    \item \textbf{Deriving interference statistics under practical constraints:} First, we define the interfering sets for the models and derive the distance statistics between the interfering radars and ego radar. Then, based on the derived interfering set, we characterize the statistics of the interference experienced by the ego radar. This enables us to analytically derive the detection success probability of an ego radar under mutual interference, incorporating LOS/NLOS propagation, vehicular intensity, and finite radar range. Key system insight from our analysis shows detection probability in \ac{BLCP} varies by approximately 40\% between the city center  and outskirts.
    \item \textbf{Validating against real-world road data:} A major distinguishing feature of this work is the validation of both models using real-world road network data from four global cities i.e., New Delhi, Paris, Washington, and Johannesburg - via the OSMnx Python module. This empirical validation demonstrates the applicability of our theoretical framework in capturing time-of-day variations in traffic density and radar performance. Key system insight from our analysis shows that detection probability varies by 30\% between peak and off-peak hours. 
\end{itemize}
\begin{table}[t]
\centering
\small{
\begin{tabular}{|l|l|}
\hline
Notation         & Description \\ \hline \hline
$\mathcal{P}_{\rm P}$ & Poisson line process\\
$\lambda_{\rm L}$ & Intensity of PLP $({\rm m}^{-1})$ \\
$\mathcal{P}_{\rm B}$ & Binomial line process\\
$\Phi_{\rm P}$ & Poisson line Cox process\\
$\Phi_{\rm B}$ & Binomial line Cox process\\
$n_{\rm B}$ & Number of lines of the {BLP} $\mathcal{P}_{\rm B}$ \\
$R_{\rm g}$ & Radius of circle in which {BLP} $\mathcal{P}_{\rm B}$ is restricted $({\rm m})$\\
$r_0$ & Distance of ego radar from origin $({\rm m})$\\
$L_i$ & $i-$th line of a {PLP} or {BLP}\\
$\Phi_{{\rm L}_i}$ & \begin{tabular}[c]{@{}l@{}}Poisson point process on\\ on $i-$th line of a {PLP} or {BLP}\end{tabular} \\
$\lambda$ & Intensity of $\Phi_{{\rm L}_i}$ $({\rm m}^{-1})$\\
$k$ & \begin{tabular}[c]{@{}l@{}}$k = {\rm P}$ represents PLP/PLCP,\\ and $k = {\rm B}$ represents BLP/BLCP.\end{tabular} \\
$\mathcal{D}_k$ & Generating domain of $k$ line process\\
$R$ & Distance of ego radar to target $({\rm m})$\\
$\Omega$ & Half power beamwidth of automotive radars $({\rm deg})$\\
$\mathbf{a}$ & Boresight direction of automotive radars\\
$R_k$ & \begin{tabular}[c]{@{}l@{}}Maximum distance between ego\\ radar and interfering radar $({\rm m})$\end{tabular} \\
$\mathcal{S}^{\rightarrow}_{\mathbf{u},k}$ & \begin{tabular}[c]{@{}l@{}}Radar Sector generated by any radar\\ point for axis vector $\mathbf{a} = (0,1)$.\end{tabular} \\
$\Phi_k^{\rm I}$ & Set of interfering Cox points for PLCP/BLCP\\
$h_{\mathbf{w}_k}$ & Fading power \\
$u_i$ & Intersection distance of $i-$th line from origin\\
$d_i$ & Intersection distance of $i-$th line from ego radar\\
$||w_{k,i}||$ & \begin{tabular}[c]{@{}l@{}}Distance between ego radar and\\ interfering radars present on $L_i$ $({\rm m})$\end{tabular} \\
$\xi_k$ & SIR received at the ego radar \\
$\gamma$ & SIR threshold \\
$p_{{\rm D}, k}(\gamma)$ & Detection probability at threshold $\gamma$\\ 
\hline
\end{tabular}
}
\caption{Summary of notations used in the paper.}
\label{tab:notations}
\end{table}
The overall paper is organized as follows. In the following section~\ref{sec:SysModel}, we define the street geometry models and channel model. In Section~\ref{sec:IntDis}, we define the set of points falling within a radar sector and, consequently, the interfering vehicles set. Following which we derive the maximum and minimum distance between the ego radar and interfering radar present on $i\textsuperscript{th}$ line. Finally we derive the detection probability in Section~\ref{sec:RadarProb}, and plot various results in section~\ref{sec:Results} against various system parameters in accordance with real-world road and traffic network parameters.

While this paper provides a fundamental characterization of the automotive radar detection performance by taking into account complex street geometry, leveraging the results of this paper, the companion paper~\cite{shah2024fine} extends this framework to a fine-grained analysis of the radar performance using meta-distributions. We invite the reader to refer to~\cite{shah2024fine} for a follow-up discussion on this.

\subsection{Notations}
To differentiate the notations between the two Cox process models, we will use subscript of `$k = {\rm P}$' for \ac{PLCP} and `$k = {\rm B}$' for \ac{BLCP} models. For example, line processes are identified using calligraphic letters such as $\mathcal{P}$, thus $\mathcal{P}_{\rm P}$ denotes \ac{PLP} and $\mathcal{P}_{\rm B}$ denotes \ac{BLP}. Point processes are identified using the symbol $\Phi_k$. A \ac{BLP} usually consists of $n_{\rm B}$ lines. Probability operators are represented by $\mathbb{P}$ and expectation by $\mathbb{E}$. A complete list of important notations is presented in Table~\ref{tab:notations}.

\section{System Model}
\label{sec:SysModel}
\subsection{Network Geometries}
If we consider a snapshot of road geometries of some of the biggest cities in the world, e.g., see New Delhi in Fig.~\ref{fig:fig_1} (a) and Paris, Washington, and Johannesburg in the first column of Table~\ref{MainTable}, we observe a random distribution of streets, both in terms of density and orientation. From the perspective of an ego radar, there can be a significant variation in street geometry and vehicle density at any time. Therefore, instead of considering each distribution of automotive radars on the roads as a separate scenario, we treat them as an instance of an underlying spatial stochastic process. Accordingly, we model the road network as a line process. In particular, we consider two models. First, we consider a network of streets as a homogeneous \ac{PLP}, typical of road geometries within a small area/district within a city. The second model considers a network of streets as inhomogeneous \ac{BLP} with greater road density at the center (city centers/downtown) and lower road density at the peripherals (suburbs). 
\begin{figure*}[t]
\centering
\includegraphics[trim={0cm 0cm 0cm 0cm},clip,width=1\textwidth]{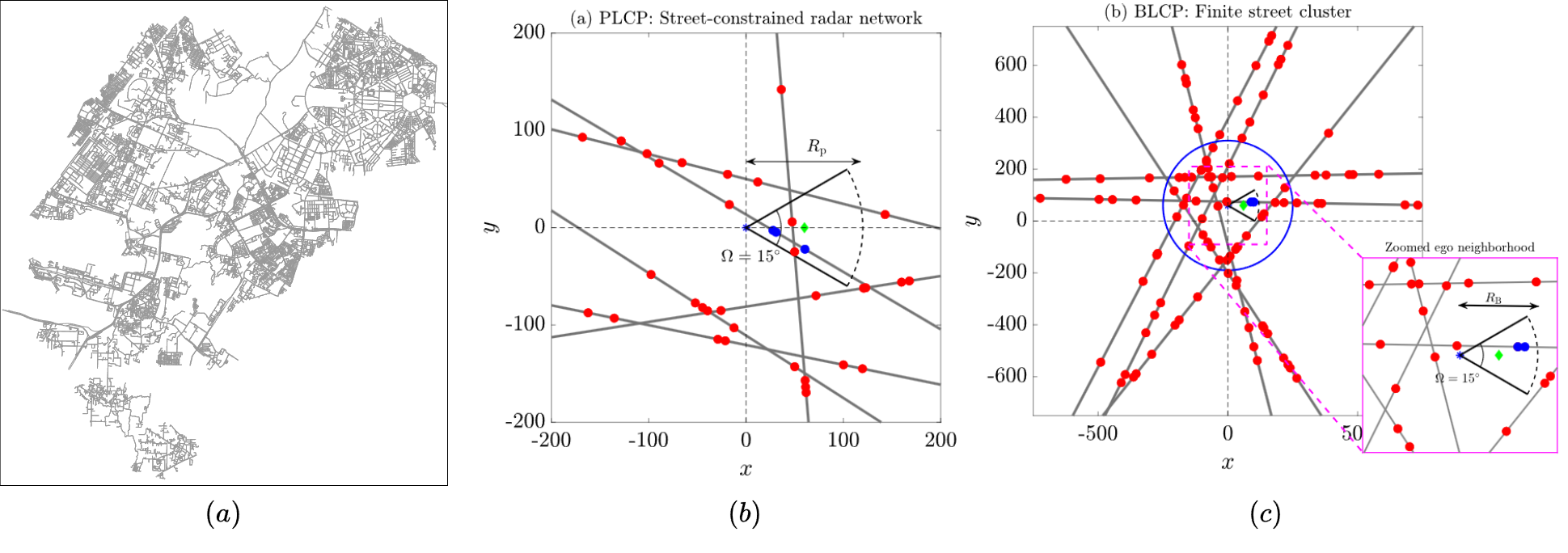}
\caption{(a) Road map of New Delhi city. (b) and (c) Single realization of the PLCP and BLCP model where the ego radar represented as a blue astrix is located at the origin in case of PLCP and at location $r_0 = 60$ from the origin in case of BLCP. Interfering vehicles are illustrated in blue and non-interfering vehicles in red dots. For (c), to emphasize local interference behavior, a magnified inset shows the immediate neighborhood of the ego radar, including the target location, nearby vehicles, and sector boundaries.
}
\label{fig:fig_1} 
\end{figure*}

\subsubsection{\ac{PLCP}}
We consider a network of streets modeled as a homogeneous \ac{PLP}, $\mathcal{P}_{\rm P} = \{L_1, L_2, \dots\}$, i.e., a stochastic set of lines in the two-dimensional Euclidean $x-y$ plane. Any line, $L_i$, of $\mathcal{P}_{\rm P}$ is uniquely characterized by its distance $r_i$ from the origin and the angle $\theta_i$ between the normal and the $x$-axis as shown in Fig.~\ref{fig:fig_2}. The pair of parameters $(\theta_i,r_i)$ corresponds to a point, $\mathbf{q}_i$, in the representation space $\mathcal{D}_{\rm P} \equiv [0,2\pi) \times (0,\infty)$ which is an open cylinder, and is called the domain set of $\mathcal{P}_{\rm P}$. Thus, there is a one-to-one correspondence between $\mathbf{q}_i, i = 1:I$ in $\mathcal{D}_{\rm P}$ and the lines, $L_i, i = 1:I$ in $\mathbb{R}^2$. The number of generating points, $I$, in any $S \subset \mathcal{D}_{\rm P}$ follows the Poisson distribution $\Phi_{\mathcal{D}_{\rm P}}$ with parameter $\lambda_{\rm L}|S|$, where $|S|$ represents the Lebesgue measure of $S$. The parameter $\lambda_{\rm L}|S|$ represents the urban road density, which varies from city to city and within a city from downtown to suburban areas. Our analysis for \ac{PLP} focuses on the perspective of an automotive radar mounted on a typical ego vehicle located at the origin of the two-dimensional plane, as shown in Fig.~\ref{fig:fig_1} (b). The radars have a half-power beamwidth of $\Omega$, and the target is assumed to be located at a distance of $R$ from the radar on the same street within the main beam of the ego radar.

The street containing the ego radar is denoted by $L_0$ with parameters $(\theta_0,r_0) = (0,0)$. As per the Palm distribution of a \ac{PLP}, the statistics of the rest of the process remain unaltered~\cite{dhillon2020poisson}. The locations of the vehicles with mounted radars on the $L_{i}\textsuperscript{th}$ street follow a one-dimensional \ac{PPP}, $\Phi_{{\rm L}_i}$, with intensity $\lambda$. Naturally, we can use a higher $\lambda$ to model the traffic conditions during peak hours compared to off-peak hours. The \ac{PPP} of vehicles on any street is assumed to be independent of the corresponding distributions on the other streets. Hence, the complete distribution of the vehicles (other than the ego vehicle/radar) is a homogeneous \ac{PLCP} $\Phi$, on the domain $\mathcal{P}_{\rm P}$ where the \ac{PLCP} is defined as 
    $\Phi_{\rm P} = \bigcup_{{\rm L}_i \in \mathcal{P}_{\rm P}} \Phi_{{\rm L}_i}$.

Fig.~\ref{fig:fig_1}(b) shows one realization of the \ac{PLCP}, where the ego radar is present at the origin on the road aligned along the $y$ axis. The other streets in the neighborhood are shown as lines in the Fig. The \ac{PLCP} points representing automotive radars are situated on these streets, and they may or may not contribute to the interference experienced by the ego-radar. This will be formally characterized subsequently. The ego radar will experience interference if and only if both the ego radar and interfering radar fall into each other's radar sectors simultaneously. To model the interference from vehicles from the opposite side of the road, we consider on line $L_0$ the location of vehicles follows a one-dimensional \ac{PPP} having the same intensity as the remaining lines. Therefore, the overall spatial stochastic process is represented as $\Phi_{{\rm P}_0}$, which as per the Palm conditioning, is the superposition of $\Phi_{\rm P}$, and an independent 1D \ac{PPP} on line $L_0$, i.e., $\Phi_{{\rm P}_0} = \Phi_{\rm P} \cup \Phi_{L_0}$.

\subsubsection{\ac{BLCP}}
Contrary to a \ac{PLP}, a \ac{BLP} $\mathcal{P}_{\rm B}$ is a finite collection of lines, i.e., $n_{\rm B}$ in the two-dimensional Euclidean plane, defined as, $\mathcal{P}_{\rm B} = \left\{L_1, L_2, \dots, L_{n_{\rm B}}\right\}$. Each line of $\mathcal{P}_{\rm B}$ corresponds to a point in the \ac{BPP} $\Phi_{\mathcal{D}_{\rm B}}$. The set $\mathcal{P}_{\rm B}$ is generated by points on a finite cylinder $\mathcal{D}_{\rm B}:=$ [$0,\pi$) $\times$ $[-R_{\rm g}, R_{\rm g}]$, which is the generating set of $\mathcal{P}_{\rm B}$. The point on the cylinder is denoted by $(\theta_i, r_i) \in \mathcal{D}_{\rm B}$ and corresponds to a line $L_i \in \mathcal{P}_{\rm B}$. Thus, in Euclidean space, the generating points of the lines are restricted to the disk $\mathcal{C}\left((0,0), R_{\rm g}\right)$. The normal to the line $L_i$ is formed by drawing a line segment from the origin to $(\theta_i, r_i)$. It is important to note that due to the finite domain of the generating set, the resulting line process is non-homogeneous in the Euclidean plane. Like \ac{PLCP}, we define on each line $L_i$ of $\mathcal{P}_{\rm B}$, an independent 1D \ac{PPP} $\Phi_{{\rm L}_i}$ with intensity $\lambda$. A \ac{BLCP} $\Phi_{\rm B}$, is the collection of all such points on all lines of the \ac{BLP}, i.e., 
    $\Phi_{\rm B} = \bigcup\limits_{i=1}^{n_{\rm B}} \Phi_{{\rm L}_i}$.

Thus, \ac{BLCP} like $\ac{PLCP}$ is a Cox process of random points defined on random lines. Due to the underlying inhomogeneity, the statistics of the \ac{BLP} cannot be characterized by perspective at just one typical point, such as one located at the origin. The isotropic structure of \ac{BLCP} implies that its characteristics, when observed from a specific point, are determined by the distance from the origin rather than the orientation of the point. A test point at $\left(0, r_0\right)$ is considered without limiting our generalization. Now, from the palm perspective of the \ac{BLCP}~\cite{shah2024binomial}, conditioning on a point to be located at $\left(0, r_0\right)$ in a \ac{BLCP} is the same as taking into consideration an atom at $\left(0, r_0\right)$, and a 1-D \ac{PPP} on a line $L_0$ that passes through $\left(0, r_0\right)$. Thus, the reduced stochastic process consists of $n_{\rm B} - 1$ lines defined in the same domain. Vehicles from the opposite side of the road are modeled as 1D \ac{PPP} on line $L_0$ having the same intensity as the remaining lines. The overall spatial stochastic process is represented as $\Phi_{{\rm B}_0}$.

Fig.~\ref{fig:fig_1}(c) shows an instance of the \ac{BLCP}, where the ego radar is located at the $(0,r_0)$, and the disk of radius $R_{\rm g}$ which restricts the generating point of lines. Besides, we see a beam of green color of radius $R_{\rm B}$, which is the maximum range of the ego radar. The remaining \ac{BLCP} points represent automotive radars, which may or may not cause interference. The red dots signify the interfering radars, while the blue dots represent the non-interfering radar. For the ego radar to experience interference, both the ego radar and interfering radar must be within a distance of $R_{\rm B}$ from each other simultaneously. Similar to \ac{PLCP}, we assume that each radar has a half-power beamwidth of $\Omega$, and the target is located at a distance of $R$ from the radar on the same street.
\begin{remark}
In our analysis, we assume that the street containing the ego radar passes through the center of the city, i.e., the origin of the generating disk. While this may not always be the case in real-world deployments, this modeling choice enables the ego radar to traverse the entire urban domain radially, capturing all possible locations from the city center to the outskirts. This setup allows for comprehensive characterization of spatial effects, including how detection probability and interference levels vary with respect to urban density, beam orientation, and distance from the center.
\end{remark}

\subsection{SIR and Channel Model}
The transmitted radar signal reflected from the target at distance $R$ undergoes losses and is assumed to have a Swerling-I fluctuating radar cross-section, $\sigma_{\mathbf{c}}$, with a mean of $\bar{\sigma}$~\cite{shnidman2003expanded}. The reflected signal power decay is modeled as $L(r) = R^{-2\alpha_{\rm L}}$, where $\alpha_{\rm L}$ is the \ac{PLE} for \ac{LOS} case, as we assume that the target is always in \ac{LOS}. Let $G_t$ be the gain of the transmitting antenna and $A_{\rm e}$ the effective area of the receiving antenna aperture. Then, the ego radar receives the reflected signal power from the target vehicle with strength
\begin{align}
    S = \gamma\sigma_{\mathbf{c}} P R^{-2\alpha_{\rm L}}
\end{align}
where $\gamma = \frac{G_{\rm t}}{(4\pi)^2}A_{\rm e}$ and $P$ is the transmit power. Let the coordinates of any \ac{PLCP} point/interfering radar in the Euclidean plane be denoted by $\mathbf{w}_{\rm P} = (x,y)$, such that $x\cos\theta_i + y\sin\theta_i = r_i$ for $(\theta_i,r_i) \in \mathcal{D}_{\rm P}$. Likewise $\mathbf{w}_{\rm B} = (x,y)$ be the coordinates of any \ac{BLCP} point/interfering radar, such that $x\cos\theta_i + y\sin\theta_i = r_i$ for $(\theta_i,r_i) \in \mathcal{D}_{\rm B}$. Owing to the effects of the mm-wave channel, the interfering signals experience multi-path fading, which is represented by the Nakagami-m fading channel model due to its analytical tractability. The fading channel power $h_{\mathbf{w}_k, \rm L}$ and $h_{\mathbf{w}_k, \rm N}$ for \ac{LOS} and \ac{NLOS} case respectively, follow gamma distribution. Precisely for any \ac{LOS} interfering radar $h_{\rm L} \sim \Gamma \left(m_{\rm L}, 1/m_{\rm L }\right)$ and for \ac{NLOS} interfering radar $h_{\rm N} \sim \Gamma \left(m_{\rm N}, 1/m_{\rm N}\right)$, where $m_{\rm L}$ and $m_{\rm N}$ are fading parameters. Furthermore, the fading gain is assumed to be independent across all \ac{LOS} and \ac{NLOS} radars. The interference at the ego radar for both \ac{PLCP} and \ac{BLCP} due to any one interfering radar located at $w_{k,i}$ is then given by,
\begin{align}
    \mathbf{I}_k &= P \gamma h_{\mathbf{w}_k, \rm L} ||\mathbf{w}_k||^{-\alpha_{\rm L}} \times \mathbb{I} (i = 0) + \nonumber\\
    & \hspace*{2cm} P \gamma h_{\mathbf{w}_k, \rm N} ||\mathbf{w}_k||^{-\alpha_{\rm N}} \times \mathbb{I} (i \ne 0),
\end{align}
where $h_{\mathbf{w}_k, \rm L}$ and $h_{\mathbf{w}_k, \rm N}$ is the fading power for \ac{LOS} and \ac{NLOS} case respectively. In our model, we assume that interfering radars located on the same street $L_0$ (i.e. $i=0$) as the ego radar are in \ac{LOS}, and thus their \ac{PLE} is denoted by $\alpha_{\rm L}$. Conversely, interfering radars situated on intersecting streets (i.e. $i \ne 0$) are considered to be in \ac{NLOS}, with their \ac{PLE} is denoted by $\alpha_{\rm N}$. As highlighted in~\cite{maccartney2013path}, mm-wave channels may exhibit significantly different \ac{PLE} values under \ac{LOS} and \ac{NLOS} conditions, underscoring the importance of distinguishing between these propagation scenarios in radar interference analysis. Now, assuming that all the automotive radars share the same power and gain characteristics, the \ac{SIR} at the ego radar is
\begin{align}
    \xi_k = \frac{\gamma \sigma_{\mathbf{c}} P R^{-2\alpha_{\rm L}}}{\sum_{\mathbf{w}_k \in \Phi_k^{\rm I}}4\pi \gamma P h_{\mathbf{w}_k, l} ||\mathbf{w}_k||^{-\alpha_{l}}},
    \label{eq:eq_3}
\end{align}
where $k \in \{{\rm P},{\rm B}\}$, $l \in \{{\rm L}, {\rm N}\}$ and $\Phi_k^{\rm I}$ is the interfering set defined in next section. It is important to note that $\xi_{\rm B}$ is a function of $r_0$. Our analytical framework assumes identical transmit power and waveforms to establish a fundamental baseline for spatial interference characterization. In practical deployments, waveform diversity (e.g., varying FMCW chirp slopes or orthogonal PMCW codes) reduces the cross-correlation between the ego and interfering signals. From a modeling perspective, this introduces an orthogonality factor $\epsilon \in [0,1)$ into the interference summation in~\eqref{eq:eq_3}, effectively scaling down the aggregate interference power and improving the detection probability. Similarly, adaptive power control, where transmit power scales with target distance would reduce the interference contribution from nearby vehicles detecting close-range targets. While these mechanisms would shift the absolute performance curves favorably, the relative spatial trends governed by the street geometry and urban density derived in this work remain structurally invariant.

For our analysis of \ac{PLCP} or \ac{BLCP}, we make the crucial assumption that the radar has a maximum allowable range or operational range, which we denote as $R_k$. This implies that the ego radar cannot detect any target beyond the $R_k$ range and that any other radar not inside the $R_k$ range of the ego radar sector will not interfere with the ego radar's detection performance. Thus, the distance between the ego radar and interfering radars is limited, i.e., $||\mathbf{w}_k|| \leq R_k$.
\begin{remark}
The distinction of \ac{LOS} propagation on the street $L_0$ and \ac{NLOS} conditions on intersecting streets is a logical assumption to consider. In dense urban environments, due to unobstructed alignment along the roadway corridor, the street $L_0$ has direct visibility (\ac{LOS}) in typical city street networks. However intersecting streets are prone to \ac{NLOS} propagation because of natural blockage at street corners  This distinction between \ac{LOS} and \ac{NLOS} scenarios is critical for accurately modeling radar interference in urban environments, as it reflects the realistic spatial distribution of signal propagation paths and their associated \ac{PLE} for \ac{LOS} and \ac{NLOS}.
\end{remark}
\begin{remark}
While the current analysis adopts a deterministic \ac{LOS}/\ac{NLOS} classification based on urban street geometry, more advanced models can incorporate ball blockage model to better reflect the stochastic nature of mm-wave propagation. Inspired by earlier frameworks, a probabilistic model introduces the concept of an \say{equivalent \ac{LOS} ball} of radius $d$, within which a radar-link has a \ac{LOS} probability $p_{\rm L}(r)$, and beyond which the \ac{NLOS} probability increases, denoted $p_{\rm N}(r)$. This leads to a distance-dependent characterization of blockage, with $p_{\rm L}(r)$ potentially following an exponential decay function. Under this approach, each potential interferer is probabilistically marked as \ac{LOS} or \ac{NLOS}. While we retain the deterministic assignment to preserve analytical tractability and consistency with the geometric framework, we numerically quantify the deviation between our deterministic assumption and a fully probabilistic blockage model in Section V-A.
\end{remark}

\begin{figure}[t]
    \centering
    \includegraphics[width=0.65\linewidth]{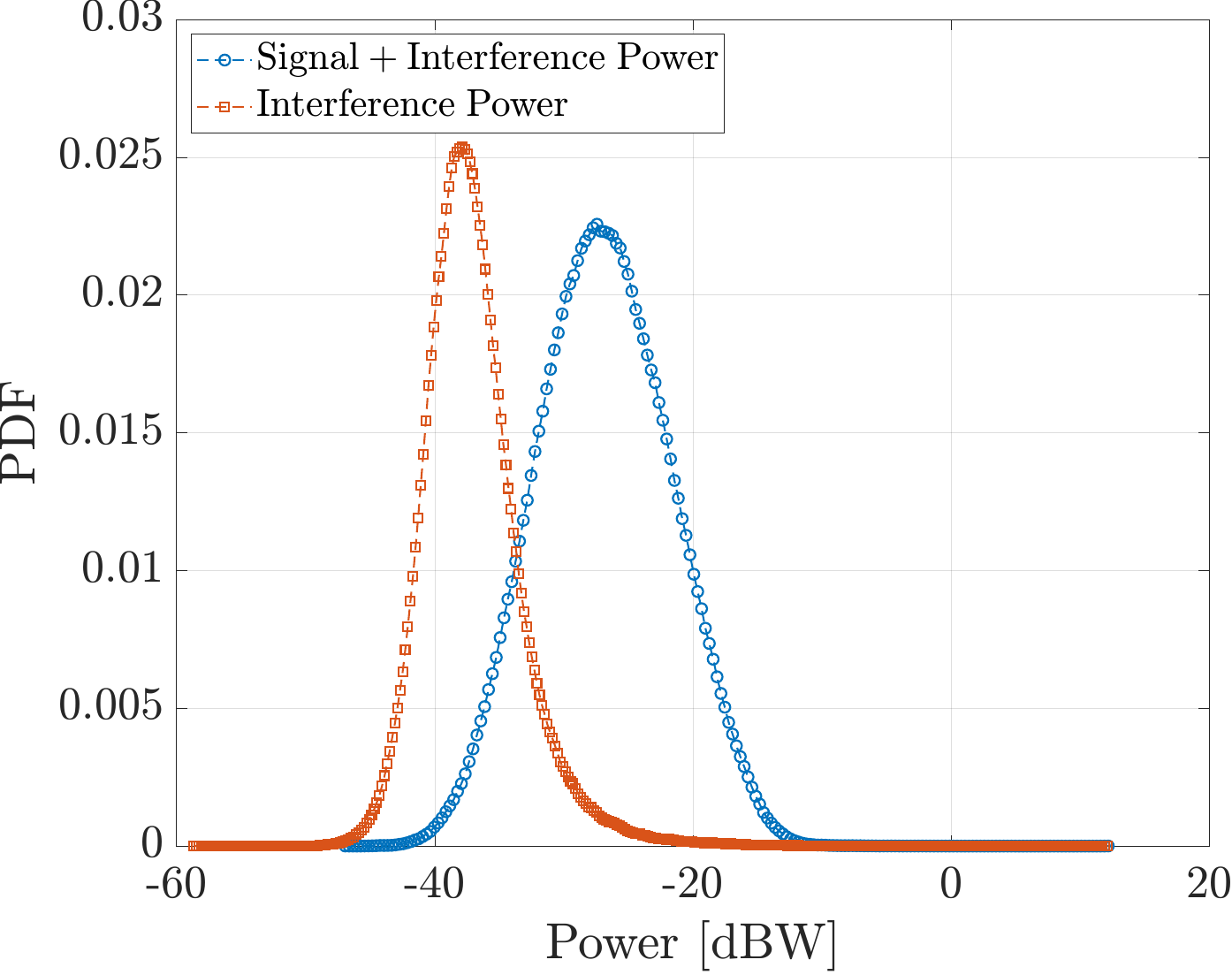}
    \caption{PDF of signal + interference power and only interference power received at ego radar, if the target is at a distance uniformly distributed between $5$ to $15\, {\rm m}$, $\Omega = 30^\circ$, $\lambda_{\rm L} = 0.01\, {\rm m}^{-1}$, $\lambda = 0.05\, {\rm m}^{-1}$, and transmit power is $1$ dB.}
    \label{fig:si_and_i}
\end{figure}

\subsection{Importance of \ac{SIR}}
Let us now illustrate the significance of characterizing the \ac{SIR} in radar system design. In traditional radar system analysis, key performance metrics are the probability of false alarm and the probability of detection, together referred to as the receiver operating characteristics (ROC). The probability of detection characterizes the event that the received signal power added with the noise and interference exceeds a predefined threshold. Whereas, the probability of false alarm quantifies the event that the sum of the interference and noise itself exceeds that threshold, leading to an erroneous detection decision in the absence of a true target. While these metrics remain fundamental, the overarching objective of this work is to gain deeper insights into the statistical behavior of the \ac{SIR}, which directly influences detection performance. Specifically, we aim to characterize the distribution of the \ac{SIR} experienced by the ego radar in a networked environment, as it provides a foundational understanding of how the signal strength relative to interference impacts detection outcomes.

To illustrate this, Fig.~\ref{fig:si_and_i} presents the \acp{PDF} of the signal plus interference strength received at ego radar, along with that of only the interference power, obtained via Monte Carlo simulations. These \acp{PDF} highlight the separation between the two distributions, which is critical for determining an appropriate detection threshold. In general, when designing a detector in the absence of prior information about the target state, one can choose the threshold based on the desired trade-off between detection and false alarm probabilities. A larger separation between the two \acp{PDF} allows for greater flexibility in selecting the threshold while keeping the false alarm rate within acceptable bounds. Although classical metrics like false alarm probability and ROC analysis are valuable tools for evaluating detection performance, they are inherently tied to the underlying \ac{SIR} distribution. By focusing on the characterization of the CDF of the \ac{SIR}, we provide a more fundamental perspective on how the relative strengths of the signal and interference influence detection outcomes. 

\section{Characterization of the Interference Process}
\label{sec:IntDis}
We observe from Fig~\ref{fig:fig_1} (b) and (c) that all the Cox points do not cause the interference. Instead, interference is caused at the ego radar only when both the ego radar and the interfering radar are simultaneously within each other's radar sectors. For example, in Fig~\ref{fig:fig_2}, we see the radar sectors (soon to be defined) emitted by four vehicles, namely \textit{A}, \textit{B}, \textit{C}, \textit{D}, and \textit{E}. The vehicle \textit{A}, located at the origin $\mathbf{0}$, represents the ego radar, and its radar sector is depicted in green. The beams of the interfering and non-interfering vehicles are represented by purple and red, respectively. Therefore, ego radar experiences interference only from \textit{B} and \textit{E}, and not from  \textit{C} and \textit{D}. In the subsequent discussion, we characterize the interfering distance and determine its effect on the system performance. 

We shall first define the subset of $\mathbb{R}^2$ defined by the conic beam (henceforth referred to as the radar sector) and then determine the collection of Cox points located within the radar sector formed by the radar beam.

\subsection{Interfering Set}
The orientation of the vehicles on the street depends on the generating angle of the street and the direction of their movement. Thus, the boresight direction of any radar on $i\textsuperscript{th}$ line is given by the two unit vectors: $\mathbf{a}$ and $-\mathbf{a}$, where $\mathbf{a} = (-\sin\theta_i, \cos\theta_i)^T$. We see that the parameter of the line $L_i$, i.e., $(\theta_i,r_i)$, determines the direction of the radar sector. 
For any point in the Euclidean plane $(p,q)$, the cosine of the angle formed by the displacement vector $(p,q) - (x,y)$  and the boresight direction $\mathbf{a}$ must be greater than $\cos\Omega$, ensuring that the point lies within the radar's angular field-of-view. Thus, the sector for any radar is uniquely characterized by $\mathcal{S}^{\rightarrow}_{\mathbf{u},k}$ and $\mathcal{S}^{\leftarrow}_{\mathbf{u},k}$ as a function of $\mathbf{u} = (x,y)$ and $\Omega$. Thus we can define the \textit{closed} interior region of the radar sector for any Cox point of \ac{PLCP} or \ac{BLCP} $(x,y)$ in the Euclidean plane as follows:

\begin{definition}
\label{def:def_1}
Consider a radar located at $\mathbf{u}=(x,y)\in\Phi_{k_0}$ whose boresight orientation is given by the unit vector $\mathbf{a} = [\cos\theta_i,\; \sin\theta_i]^T$. The forward sensing region of this radar is defined as the set of locations $\mathbf{z}=(p,q)\in\mathbb{R}^2$ that lie inside its field of view and within its usable range, i.e.,
\begin{align*}
    \mathcal{S}^{\rightarrow}_{\mathbf{u},k} \!=\! \Big\{\mathbf{z}\in\mathbb{R}^2 \!\colon\! \langle \mathbf{a}, \mathbf{z} - \mathbf{u}\rangle \!>\! \|\mathbf{z}-\mathbf{u}\| \cos\Omega, \|\mathbf{z}-\mathbf{u}\| \le R_k \!\Big\}.
\end{align*}
The backward sector corresponds to the opposite boresight direction $-\mathbf{a}$ and is defined analogously as
\begin{align*}
    \mathcal{S}^{\leftarrow}_{\mathbf{u},k} \!=\! \Big\{\mathbf{z}\in\mathbb{R}^2 \!\colon\! \langle -\mathbf{a}, \mathbf{z} - \mathbf{u}\rangle \!>\! \|\mathbf{z} -\mathbf{u}\| \cos\Omega, \|\mathbf{z} - \mathbf{u}\| \le R_k \!\Big\}.
\end{align*}
\end{definition}
\noindent For the \ac{PLCP} model, the ego radar is positioned at $\mathbf{u}=(0,0)$,
whereas in the \ac{BLCP} model its coordinate is $(0,r_0)$. In both cases,
the sector is spatially bounded due to the constraint $\|\mathbf{z}-\mathbf{u}\| \le R_k$.
\\ 
{\bf Example:} 
Fig.~\ref{fig:fig_2} shows the ego radar sector as $\mathcal{S}^{\rightarrow}_{\mathbf{u} = (0,0),{\rm P}}$ in green color with $\mathbf{a} = (0,1)$. Similarly, on the line $L_i$, the radar sector, $\mathcal{S}^{\leftarrow}_{\mathbf{u},{\rm P}}$, of an interfering vehicle \textit{B}, located at $(x,y) \in \Phi_{\rm P}$ is illustrated by the magenta color. We have also illustrated the two radar sectors of the vehicle \textit{C} for two boresight vectors, i.e., $\mathbf{a}$ and $-\mathbf{a}$. Thus, the ego radar at any position only experiences interference from \ac{PLCP} points in its radar sector, and the ego radar is in their radar sector. If the ego radar’s boresight is $\mathbf{a}$, only radars with boresight $-\mathbf{a}$ contribute to interference. The same arguments follow for the \ac{BLCP} model, with the only difference being the location of ego radar. This makes the analysis of \ac{BLCP} more involved in a location-dependent framework. We generalize the set of interfering points as:

\begin{definition}[Set of Interfering Radars]
\label{def:new_def2}
For any $\Phi_{k_0}$ corresponding to model $k\in\{{\rm P},{\rm B}\}$ the set of Cox points $\Phi^{\mathrm I}_k$ that cause interference at the ego radar is
\begin{align*}
    \Phi^{\mathrm I}_k = \Big\{\mathbf{u}\in \Phi_{k_0} \colon\, \Xi_{\mathbf{u}}^{\rightarrow} \cdot \Xi_{\mathbf{u}_0}^{\leftarrow} = 1 \Big\},
\end{align*}
where the ego radar is located at $\mathbf{u}_0=(0,r)$ with $r=0$ for PLCP and $r=r_0$ for BLCP. The indicator terms are given by
\begin{align*}
    \Xi_{\mathbf{u}}^{\rightarrow} &= \mathbf{1} \left(\mathbf{u}_0 \in \mathcal{S}^{\rightarrow}_{\mathbf{u},k}\right) + \mathbf{1} \left(\mathbf{u}_0 \in \mathcal{S}^{\leftarrow}_{\mathbf{u},k}\right), \\
    \Xi_{\mathbf{u}_0}^{\leftarrow} &= \mathbf{1} \left(\mathbf{u} \in \mathcal{S}^{\rightarrow}_{\mathbf{u}_0,k}\right) + \mathbf{1} \left(\mathbf{u} \in \mathcal{S}^{\leftarrow}_{\mathbf{u}_0,k}\right),
\end{align*}
where $\mathbf{1}\left(\cdot\right)$ is the indicator function.
\end{definition}
The above definition characterizes the set of automotive radars that generate interference at the ego radar. A radar located at $\mathbf{u}\in\Phi_{k_0}$ contributes to interference only if a \textbf{mutual visibility condition} is satisfied. Specifically, (i) the ego radar location $\mathbf{u}_0$ must lie inside either the forward or backward sensing sector of the radar at $\mathbf{u} = (x,y)$, and (ii) simultaneously, the radar at $\mathbf{u}$ must lie inside either the forward or backward sensing sector of the ego radar. These two conditions are encoded through the indicator variables $\Xi_{\mathbf{u}}^{\rightarrow}$ and $\Xi_{\mathbf{u}_0}^{\leftarrow}$ in Definition~\ref{def:new_def2}. Their product enforces a reciprocity constraint, ensuring that interference is possible only when the two radars are geometrically aligned and mutually within each other's effective sensing regions.

\subsection{Interfering Distance}
Our objective is to determine the impact of the interfering radars located on the ego radar's street and the remaining streets on the detection performance of the ego radar. To do so, \emph{we determine the fraction length of $L_i, \forall\, L_i \in \{\mathcal{P}_{\rm P},\mathcal{P}_{\rm B}\}$ wherein if a radar is present, it will contribute to the interference experienced at the ego radar.} First, note that the distance from the ego radar to the intersection of the $L_i\textsuperscript{th}$ line is 
\begin{align}
   d_i = u_i - r_0 = \frac{r_i}{\sin\theta_i} - r_0,
   \label{eq:eq_d}
\end{align}
where the angle of intersection is equal to the generating angle $\theta_i$ of the intersecting line and $r_0 = 0$ in the case of \ac{PLCP}. The distance between the interfering radar on $L_i$ from the intersection point is $v_{k,i}$, referred to as the interfering distance if the other radar and ego radar mutually interfere. Fig.~\ref{fig:fig_3} illustrates an ego radar with a green radar sector on $L_1$ and radars mounted on other vehicles, \textit{S} and \textit{T}, on $L_i$, with red and purple beams. Specifically, \textit{S} and \textit{T} are at a distance $v_{k,i} = b_{k,i}$ and $v_{k,i} = a_{k,i}$ from the intersection point, respectively. These two distances ($a_{k,i}$ and $b_{k,i}$) represent the bounds of $v_{k,i}$ within which another radar will interfere with the ego radar. From Fig~\ref{fig:fig_3}, any other vehicle present behind the vehicle \textit{S} will not cause interference. Likewise, any vehicle present after vehicle \textit{T} also will not cause interference. This is because the vehicles present behind \textit{S} and after \textit{T} will not mutually interfere with ego radar.

From definition~\ref{def:def_1}, the radar sector of ego vehicle is bounded as the maximum range $R_k$ to interferers is limited by either blockages or by the maximum detectable/unambiguous range of the radar. Thus, $R_k$ must be considered to find the interfering distance. The following Theorem~\ref{th:theo1} gives the maximum and minimum interfering distance i.e., $a_{{\rm B},i}$ and $b_{{\rm B},i}$ for \ac{BLCP}.
\begin{figure}[t]
    \centering
    \includegraphics[trim={4.4cm 4.3cm 4.55cm 1.85cm},clip,width = 0.33\textwidth]{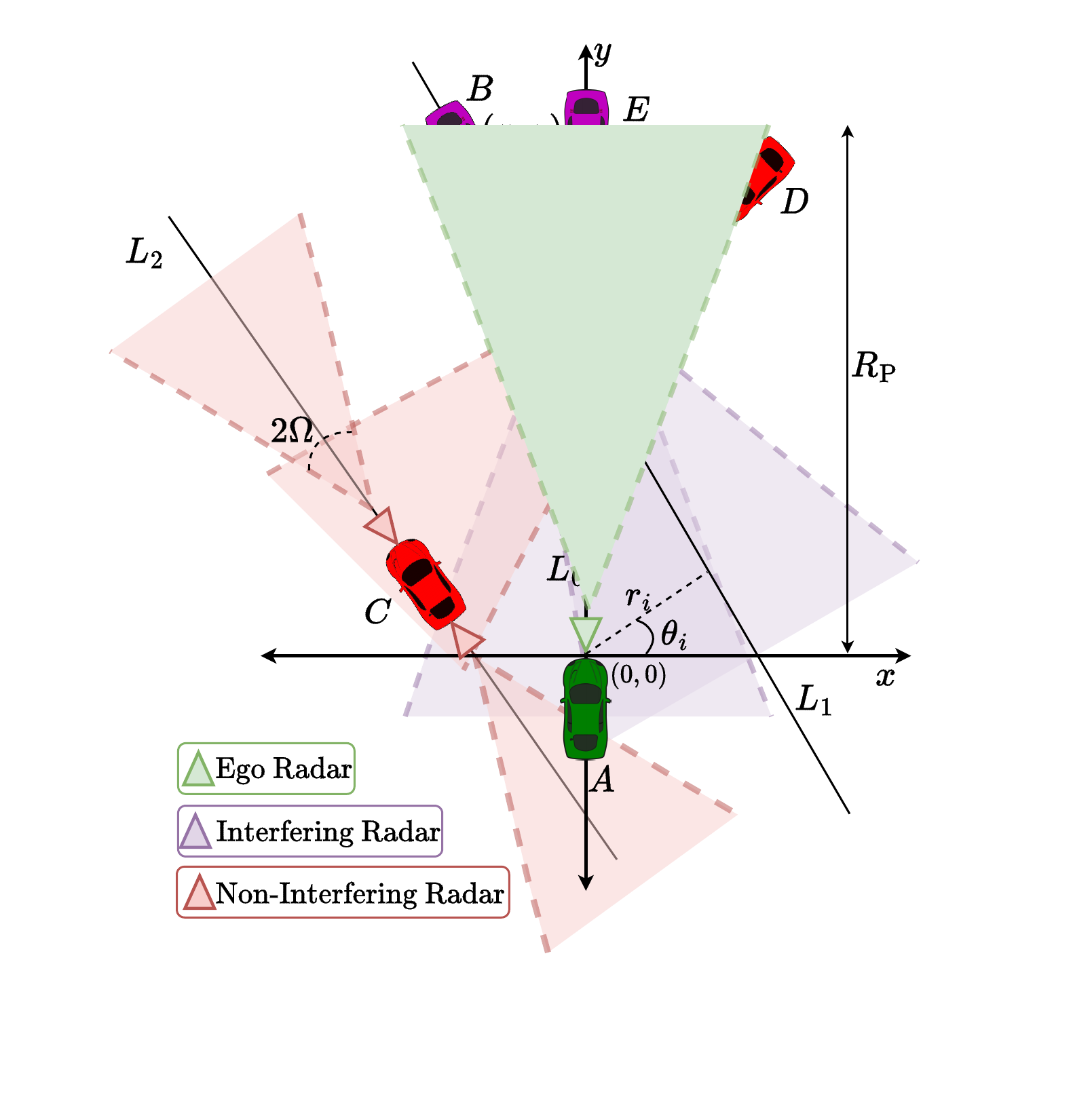}
    \caption{Illustration of a scenario showing interfering and non-interfering radar sectors w.r.t. ego radar.}
    \label{fig:fig_2}
\end{figure}
\begin{theorem}
\label{th:theo1}
For $L_i$ line with parameters $(\theta_i,r_i)$ intersecting $L_0$ at a distance $d_i$ from the ego radar, the maximum ($a_{{\rm B},i}$) and the minimum ($b_{{\rm B},i}$) distances of the interfering radars from the point of intersection on $L_i$, for \ac{BLCP} are given as:
\begin{align}
    a_{{\rm B},i} &= \nonumber \\
    &\hspace*{-0.5cm}
    \begin{cases}
        R; \hspace*{4.5cm} \mathrm{for}\; (\theta_0,r_0) = (0, 0)\\
        d_i\left(|\sin\theta_i|\cot\Omega - |\cos\theta_i|\right); \hspace*{0.72cm} \mathrm{for}\; 0 \leq d_i \leq c_2,\\
        \hspace*{0.5cm} \;\mathrm{and}\; \theta_i \in [0, 2\pi] \setminus \big\{[\Omega, \pi - \Omega] \cup [\pi + \Omega, 2\pi - \Omega]\big\}\\
        \frac{d_i|\sin\left(\theta_i-\Omega\right)|}{\sin\Omega} \hspace*{3.15cm} \mathrm{for}\; 0 \leq d_i \leq c_2,\\
        \hspace*{0.5cm} \;\mathrm{and}\; \theta_i \in \big\{[\Omega, 2\Omega] \cup [\pi+\Omega,\pi+2\Omega]\big\} \\
        \frac{d_i|\sin\left(\theta_i+\Omega\right)|}{\sin\Omega} \hspace*{3.15cm} \mathrm{for}\; 0 \leq d_i \leq c_2,\\
        \hspace*{0.5cm} \;\mathrm{and}\; \theta_i \in \big\{\pi - 2\Omega, \pi - \Omega] \cup [2\pi-2\Omega,2\pi-\Omega]\big\} \\
        \frac{d_i}{|\sin\theta_i|\cot\Omega - |\cos\theta_i|}; \hspace*{2.1cm} \mathrm{for}\; c_1 \leq d_i < 0,\\
        \hspace*{0.5cm} \;\mathrm{and}\; \theta_i \in [0, 2\pi] \setminus \big\{[\Omega, \pi - \Omega] \cup [\pi + \Omega, 2\pi - \Omega]\big\}\\
        0; \hspace*{4.65cm} \mathrm{otherwise}
    \end{cases}
    \label{eq:l_1}
\end{align}
\begin{align}
    b_{{\rm B}, i} &= \nonumber\\
    &\hspace*{-0.5cm}
    \begin{cases}
        R_{\rm B}; \hspace*{4.5cm} \mathrm{for}\; (\theta_0,r_0) = (0, 0)\\
        \sqrt{R_{\rm B}^2 - (d_i \sin\theta_i)^2} - d_i|\cos\theta_i|; \hspace*{0.37cm} \mathrm{for}\; c_1 \leq d_i \leq c_2, \\
        \hspace*{0.2cm} \mathrm{and}\; \theta_i \in [0,2\pi] \setminus \big\{[2\Omega, \pi - 2\Omega] \cup [\pi + 2\Omega, 2\pi-2\Omega] \big\} \\
        \frac{d_i \tan\Omega}{|\sin\theta_i| - \tan\Omega|\cos\theta_i|};\hspace*{2.32cm} \mathrm{for}\; 0 \leq d_i \leq c_1 \\
        \hspace*{0.2cm} \mathrm{and}\; \theta_i \in \big\{[\Omega, 2\Omega] \cup [\pi-2\Omega, \pi - \Omega]\big\}\\
        0; \hspace*{4.85cm} \mathrm{otherwise}
    \end{cases}
    \label{eq:l_2}
\end{align}
where,
\begin{align*}
    &c_1 = R_{\rm B}\sin\Omega\left(\cot\Omega - |\cot\theta|\right), \;\; c_2 = R_{\rm B}|\csc\theta|\sin\Omega
\end{align*}
\end{theorem}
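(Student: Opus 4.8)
The plan is to reduce the mutual-interference condition to a one-dimensional problem along $L_i$ and then read off the two endpoints of the resulting feasible segment. After invoking translation invariance so the ego radar sits at the origin with boresight $\pm(0,1)$, I would parametrize the candidate interferer by its signed distance $v$ from the intersection point $P$, which lies at signed distance $d_i$ along $L_0$. Writing the direction of $L_i$ as $(-\sin\theta_i,\cos\theta_i)$, the interferer is at $Q(v)=(-v\sin\theta_i,\,d_i+v\cos\theta_i)$, so that $\|Q(v)\|^2=v^2+2d_iv\cos\theta_i+d_i^2$. Substituting into Definitions~\ref{def:def_1}--\ref{def:def_2} turns the four indicator conditions into the inequalities $\pm(d_i+v\cos\theta_i)\ge\|Q(v)\|\cos\Omega$ (the interferer lies in the ego sector) and $\pm(v+d_i\cos\theta_i)\ge\|Q(v)\|\cos\Omega$ (the ego lies in the interferer's sector), together with the range constraint $\|Q(v)\|\le R_{\rm B}$; only the sign pattern compatible with the boresight convention survives.

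Next I would solve each boundary equation for $v$. Squaring the ego-sector edge and using the identity $\cos^2\theta_i-\cos^2\Omega=-\sin(\theta_i+\Omega)\sin(\theta_i-\Omega)$ collapses the discriminant to $|\sin\theta_i|\cos\Omega$, giving the two ego-cone roots $v=d_i\sin\Omega/\sin(\theta_i-\Omega)$ and $v=-d_i\sin\Omega/\sin(\theta_i+\Omega)$; the identical manipulation on the reciprocal-sector edge (obtained from the former by exchanging $v\leftrightarrow d_i$ in the numerator) yields $v=d_i\sin(\theta_i-\Omega)/\sin\Omega$ and $v=-d_i\sin(\theta_i+\Omega)/\sin\Omega$, while $\|Q(v)\|=R_{\rm B}$ gives $v=-d_i\cos\theta_i\pm\sqrt{R_{\rm B}^2-(d_i\sin\theta_i)^2}$. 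These are exactly the building blocks of \eqref{eq:l_1}--\eqref{eq:l_2} once the quadrant-dependent signs are replaced by $|\sin\theta_i|$ and $|\cos\theta_i|$; I would produce the four symmetric $\theta_i$-subintervals by exploiting the reflections $\theta_i\mapsto\pi-\theta_i$, $\theta_i\mapsto2\pi-\theta_i$ and the boresight flip $\theta_i\mapsto\theta_i+\pi$, so that only one quadrant need be computed explicitly.

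The feasible set is the intersection of the ego-cone interval, the reciprocal-cone interval, and the range interval, and its two endpoints are $a_{{\rm B},i}$ and $b_{{\rm B},i}$. The reciprocal-sector edge governs one endpoint (a distant interferer must still point back at the ego), which is why every branch of $a_{{\rm B},i}$ reduces to a reciprocal-cone root; the other endpoint is the more restrictive of the ego-sector edge and the range circle, which is why $b_{{\rm B},i}$ switches between $d_i\tan\Omega/(|\sin\theta_i|-\tan\Omega|\cos\theta_i|)$ and $\sqrt{R_{\rm B}^2-(d_i\sin\theta_i)^2}-d_i|\cos\theta_i|$. The angular thresholds $\Omega,2\Omega,\pi-\Omega,\dots$ arise from deciding whether the intersection point $P$ itself lies inside each cone, which is controlled by comparing $\theta_i$ with $\Omega$; the radial thresholds come from the crossover where an angular root reaches distance $R_{\rm B}$. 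Concretely, equating the ego-cone far root to the range circle gives $d_i=c_1$, at which both equal $c_2=R_{\rm B}\sin\Omega/|\sin\theta_i|$, and beyond which the range constraint takes over. I would close with the degenerate same-line case $(\theta_0,r_0)=(0,0)$, where both sectors are collinear and the endpoints collapse directly to $R$ and $R_{\rm B}$, and the sign case $d_i<0$, where $P$ lies behind the ego and the same roots reappear with reciprocated coefficients.

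The main obstacle is the \emph{bookkeeping of the case split} rather than any single computation: for every $(\theta_i,d_i)$ regime one must certify which of the three boundaries is active at each end of the segment, verify that the intersection is nonempty, and track the quadrant-dependent signs that become the absolute values in the final statement. The per-boundary algebra (a quadratic together with $\cos^2\theta_i-\cos^2\Omega=-\sin(\theta_i+\Omega)\sin(\theta_i-\Omega)$) is routine; stitching the regimes together so that the verified crossovers at $\Omega,2\Omega,\dots$ and at $c_1,c_2$ match the piecewise formulas is where the care lies.
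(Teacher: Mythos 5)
Your proposal is correct in substance and, when executed, would reproduce every building block of Theorem~\ref{th:theo1}, but it is organized quite differently from the paper's proof. The paper (Appendix~\ref{pr:thm_1}) argues synthetically: it enumerates eight angular events $\mathrm{A}_1,\dots,\mathrm{A}_8$, splits into three positional cases for the intersection point (inside the sector, ahead of it but outside, behind the ego radar), reads each branch off a figure via right-triangle relations, and obtains $c_1$ by intersecting the sector's edge line with $L_i$ and the arc, and $c_2$ by requiring $a_{{\rm B},i}\le b_{{\rm B},i}$. You instead reduce everything to the signed coordinate $v$ along $L_i$, convert Definitions~\ref{def:def_1} and~\ref{def:def_2} into three inequality families, and extract the branches as roots of boundary quadratics via $\cos^2\theta_i-\cos^2\Omega=-\sin(\theta_i+\Omega)\sin(\theta_i-\Omega)$. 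Your identifications check out in the first quadrant: $d_i\left(\sin\theta_i\cot\Omega-\cos\theta_i\right)=d_i\sin(\theta_i-\Omega)/\sin\Omega$ is indeed the reciprocal-cone root, $d_i\tan\Omega/\left(\sin\theta_i-\tan\Omega\cos\theta_i\right)=d_i\sin\Omega/\sin(\theta_i-\Omega)$ is the ego-cone root, and at $d_i=c_1$ the ego-cone root and the arc root coincide at the common value $c_2=R_{\rm B}\sin\Omega/\sin\theta_i$. What your route buys over the paper's is uniformity: the $v\leftrightarrow d_i$ duality explains why the $a$- and $b$-branches are mirror formulas of one another, the thresholds emerge as root coincidences rather than separate geometric constructions, and the quadrant reflections replace the eight-event enumeration; the price, as you note, is the interval-intersection bookkeeping.

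Two attributions should be fixed when you carry the plan out. First, for $c_1\le d_i<0$ (the paper's Case~3, which it largely leaves to the reader), the endpoint $d_i/\left(\sin\theta_i\cot\Omega-\cos\theta_i\right)=d_i\sin\Omega/\sin(\theta_i-\Omega)$ is the \emph{ego-cone} root, so your blanket claim that every branch of $a_{{\rm B},i}$ reduces to a reciprocal-cone root fails there; when the intersection lies behind the ego radar the active boundaries swap roles, which your own ``reciprocated coefficients'' remark implicitly concedes but the stated rule contradicts. Second, the $2\Omega$ thresholds cannot come from ``whether the intersection point $P$ lies inside each cone'': $P$ lies on the boresight axis of the ego radar and of every radar on $L_i$, so that test carries no angular information. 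They arise instead from the crossover at which the ego-cone and reciprocal-cone roots exchange order, i.e.\ where $\sin(\theta_i-\Omega)=\sin\Omega$, equivalently $\theta_i=2\Omega$. Neither issue breaks the method, since your framework of certifying the active boundary at each endpoint would surface both corrections, but the heuristics as stated are not the ones that make the case split come out right.
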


\begin{proof}
See appendix~\ref{pr:thm_1}.
\end{proof}
For the \ac{PLCP} model, the values of $a_{{\rm P},i}$ and $b_{{\rm P},i}$ are structurally same as $a_{{\rm B},i}$ and $b_{{\rm B},i}$, but the only difference lies in parameters $d_i$, $R_{\rm P}$, and most importantly the cases for different angles of $\theta_i$. For the \ac{PLCP}, $d_i$ reduces to $u_i$ as $r_0 = 0$ from~\eqref{eq:eq_d}. In~\eqref{eq:l_1} and~\eqref{eq:l_2}, the set of values of $\theta_i$ for all the different cases has a larger range as compared to different cases seen in Lemma~\ref{le:lemma1}.
For example, $a_{{\rm B},i} = d_i\left(|\sin\theta_i|\cot\Omega - |\cos\theta_i|\right)$ if $ \theta_i \in [0, 2\pi] \setminus \big\{[\Omega, \pi - \Omega] \cup [\pi + \Omega, 2\pi - \Omega]\big\}$, while as in Lemma~\ref{le:lemma1}, for the same scenario, the range of $\theta_i$ is $\big\{[0,\Omega] \cup [\pi-\Omega,\pi]\big\}$. 
Because of the non-homogeneous construction of \ac{BLP}, we have to carefully consider all the different cases, which leads to a larger range of $\theta_i$. As $r_0$ in \ac{BLCP} is dynamic thus, the values of $a_{{\rm B},i}$ and $b_{{\rm B},i}$ will exist for such cases where the values of $a_{{\rm P},i}$ and $b_{{\rm P},i}$ do not exist. 

{\bf Example:} Consider the following example. In \ac{PLCP}, the ego radar is at the origin, and if $\theta_i > \pi$, then line $L_i$ does not intersect $L_0$ ahead of the ego radar, i.e., in $y >0$. On the contrary, for the same range of generating angles, the location of ego radar can be such that line $L_i$ can intersect the $L_0$ ahead of the ego radar. This complexity introduced by $r_0$ leads to a careful consideration of $\theta_i$ in determining the interfering distance in \ac{BLCP} in comparison to \ac{PLCP}.
\begin{lemma}
\label{le:lemma1}
For line $L_i$ that intersects $L_0$ at a distance $d_i$ from the ego radar, the maximum ($a_{{\rm P},i}$) and the minimum ($b_{{\rm P},i}$) distances of the interfering radars from the point of intersection on $L_i$, are given as:
\begin{align}
    a_{{\rm P},i} &= \nonumber \\
    &\hspace*{-0.5cm}
    \begin{cases}
        R; \hspace*{4.35cm} \mathrm{for}\; (\theta_0,r_0) = (0, 0)\\
        u_i\left(|\sin\theta_i|\cot\Omega - |\cos\theta_i|\right); \hspace*{0.55cm} \mathrm{for}\; 0 \leq u_i \leq c_2,\\
        \hspace*{2.4cm} \;\mathrm{and}\; \theta_i \in \big\{[0,\Omega] \cup [\pi-\Omega,\pi]\big\}\\
        \frac{u_i|\sin\left(\theta_i-\Omega\right)|}{\sin\Omega}; \hspace*{0.5cm} \mathrm{for}\; 0 \leq u_i \leq c_2, \;\mathrm{and}\; \theta_i \in [\Omega, 2\Omega] \\
        \frac{u_i|\sin\left(\theta_i+\Omega\right)|}{\sin\Omega}; \hspace*{0.5cm} \mathrm{for}\; 0 \leq u_i \leq c_2, \\
        \hspace*{3.71cm}\mathrm{and}\; \theta_i \in [\pi - 2\Omega, \pi - \Omega] \\
        \frac{u_i}{|\sin\theta_i|\cot\Omega - |\cos\theta_i|}; \hspace*{0.85cm} \mathrm{for}\; c_1 \leq u_i < 0\\
        \hspace*{2.4cm} \;\mathrm{and}\; \theta_i \in \big\{[\pi, \pi + \Omega], [2\pi-\Omega, 2\pi]\big\}\\
        0; \hspace*{4.4cm} \mathrm{otherwise}
    \end{cases}
    \label{eq:l_1_p}
\end{align}
\begin{align}
    b_{{\rm P}, i} &= \nonumber\\
    &\hspace*{-0.5cm}
    \begin{cases}
        R_{\rm P}; \hspace*{4.69cm} \mathrm{for}\; (\theta_0,r_0) = (0, 0)\\
        \sqrt{R_{\rm P}^2 - (u_i \sin\theta_i)^2} - u_i|\cos\theta_i|; \hspace*{0.5cm} \mathrm{for}\; c_1 \leq u_i \leq c_2, \\
        \hspace*{0.5cm} \mathrm{and}\; \theta_i \in \big\{[0,\Omega] \cup [\pi - \Omega, \pi+\Omega] \cup [2\pi-\Omega,2\pi]\big\} \\
        \frac{u_i \tan\Omega}{|\sin\theta_i| - \tan\Omega|\cos\theta_i|};\hspace*{2.45cm} \mathrm{for}\; 0 \leq u_i \leq c_1 \\
        \hspace*{0.5cm} \mathrm{and}\; \theta_i \in \big\{[\Omega, 2\Omega] \cup [\pi-2\Omega, \pi - \Omega]\big\}\\
        0; \hspace*{5cm} \mathrm{otherwise}
    \end{cases}
    \label{eq:l_2_p}
\end{align}
where,
\begin{align*}
    &c_1 = R_{\rm P}\sin\Omega\left(\cot\Omega - |\cot\theta|\right), \;\; c_2 = R_{\rm P}|\csc\theta|\sin\Omega.
\end{align*}
\end{lemma}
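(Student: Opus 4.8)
The plan is to reduce the two mutual-interference conditions of Definition~\ref{def:def_2} to explicit inequalities on the signed position of a candidate interferer along $L_i$, and then read off the two endpoints of the resulting feasible segment. Since this is the \ac{PLCP} specialization of Theorem~\ref{th:theo1}, I would set $r_0 = 0$, so that the ego radar sits at the origin on $L_0$ (the $y$-axis) with boresight $\pm(0,1)$, and the intersection of $L_i$ with $L_0$ is the point $P = (0,u_i)$ with $u_i = r_i/\sin\theta_i$ as in \eqref{eq:eq_d}. I parametrize a candidate interferer by its signed arclength $v$ from $P$ along the unit tangent $\hat{\mathbf{L}}_i = (-\sin\theta_i,\cos\theta_i)$, so its coordinates are $(x,y) = (-v\sin\theta_i,\,u_i+v\cos\theta_i)$; the extreme admissible values of $v$ are then precisely $a_{{\rm P},i}$ and $b_{{\rm P},i}$.

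First I would translate the region-membership conditions $\mathcal{R}^{\pm}$ of Definition~\ref{def:def_1} into three elementary constraints on $v$. (i) The interferer must lie in the ego radar's double cone, i.e. the angle of $(x,y)$ to the $y$-axis is at most $\Omega$, which gives $|v\sin\theta_i| \le \tan\Omega\,|u_i+v\cos\theta_i|$. (ii) The ego radar must lie in the interferer's beam; writing the perpendicular distance from the origin to $L_i$ as $|r_i| = |u_i\sin\theta_i|$ and noting that the signed arclength of the foot of that perpendicular from $P$ is $-u_i\cos\theta_i$, this condition becomes $|v + u_i\cos\theta_i| \ge |u_i\sin\theta_i|\cot\Omega$, i.e. the interferer must be far enough along $L_i$ from the foot of the perpendicular. (iii) The range constraint $\|(x,y)\|\le R_{\rm P}$ reduces, after expansion, to the quadratic $v^2 + 2u_i\cos\theta_i\,v + (u_i^2 - R_{\rm P}^2) \le 0$. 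Each boundary can then be solved in closed form: the angular conditions (i)--(ii) yield law-of-sines expressions such as $u_i(|\sin\theta_i|\cot\Omega - |\cos\theta_i|)$, $\tfrac{u_i|\sin(\theta_i\mp\Omega)|}{\sin\Omega}$, and $\tfrac{u_i\tan\Omega}{|\sin\theta_i|-\tan\Omega|\cos\theta_i|}$, while the circle (iii) gives the positive root $\sqrt{R_{\rm P}^2-(u_i\sin\theta_i)^2} - u_i|\cos\theta_i|$.

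Next I would intersect the three feasible sets and take the supremum and infimum of $v$ over the intersection; these are $a_{{\rm P},i}$ (far endpoint) and $b_{{\rm P},i}$ (near endpoint), with both set to $0$ whenever the intersection is empty (the ``otherwise'' branches). The thresholds $c_1 = R_{\rm P}\sin\Omega(\cot\Omega - |\cot\theta_i|)$ and $c_2 = R_{\rm P}|\csc\theta_i|\sin\Omega$ are exactly the values of $u_i$ at which the active boundary switches, for instance where the range circle ceases to be binding and the interferer's beam edge takes over, or where the feasible segment collapses. Sorting by the generating angle $\theta_i$ (comparing it to $\Omega,2\Omega,\pi-2\Omega,\dots$, which fixes the signs of $\sin\theta_i,\cos\theta_i$ and selects the relevant cone/beam edge) and by the sign of $u_i$ (intersection ahead of or behind the ego radar) then produces the piecewise formulas \eqref{eq:l_1_p}--\eqref{eq:l_2_p}. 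Because $r_0=0$ pins the ego radar at the origin, lines with $\theta_i>\pi$ cannot intersect $L_0$ ahead of it, so the admissible $\theta_i$-ranges here are strictly narrower than the corresponding ranges in Theorem~\ref{th:theo1}, as already noted in the discussion preceding the lemma.

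The routine part is the trigonometry for each individual boundary; the main obstacle is the exhaustive bookkeeping of the case analysis. For every combination of the quadrant of $\theta_i$ and the sign of $u_i$ I must determine which of the three boundaries realizes the supremum and which realizes the infimum, verify that the candidate $v$ is positive and simultaneously consistent with the other two constraints, and pin down the crossover points $c_1,c_2$ at which the identity of the binding constraint changes (and beyond which the feasible segment is empty, giving distance $0$). Handling the absolute values uniformly across quadrants, and confirming that the degenerate ``otherwise'' cases genuinely correspond to an empty interference segment, is where most of the care is required.
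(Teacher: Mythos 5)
Your proposal is correct and follows essentially the same route as the paper: the paper proves these formulas geometrically in the proof of Theorem~\ref{th:theo1} (Appendix A) --- bounding the interfering segment of $L_i$ by the ego radar's cone edges, the interferer's beam condition, and the range circle, with a case split on the quadrant of $\theta_i$ and the sign/magnitude of the intersection distance --- and then obtains Lemma~\ref{le:lemma1} as the $r_0 = 0$ specialization with narrower admissible $\theta_i$-ranges, exactly as you frame it. Your inequalities (i)--(iii) are the algebraic transcription of those same geometric constraints, and they produce the same boundary expressions and the same thresholds, with $c_1$ the crossover between the ego's cone edge and the range circle as the binding far boundary (note: that switch involves the ego's cone, not the interferer's beam) and $c_2$ the value at which the feasible segment collapses.
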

\begin{figure}[t]
\centering
\includegraphics[trim={2.5cm 1.75cm 7.5cm 3.5cm},clip,width = 0.3\textwidth]{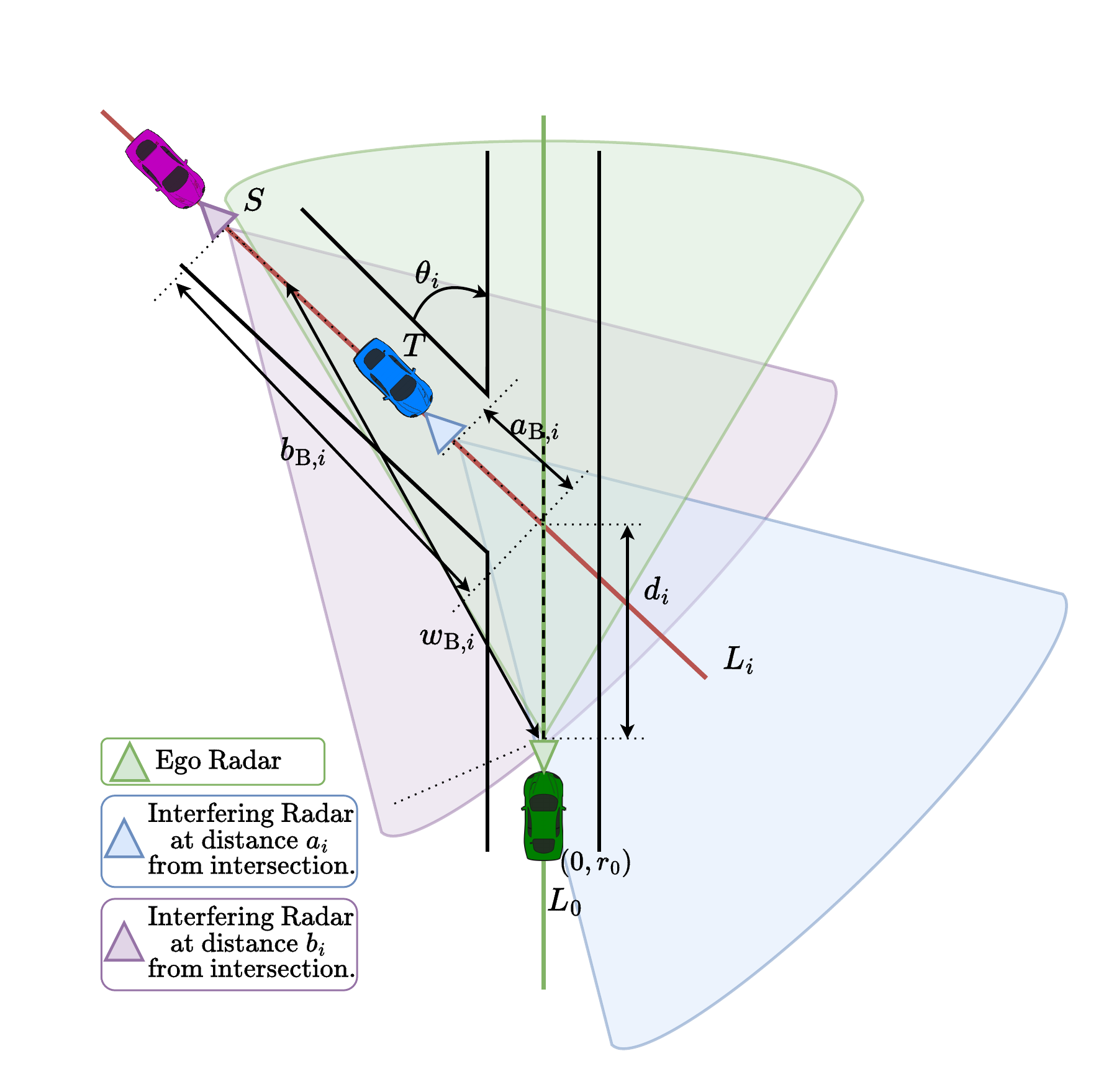}
\caption{Illustration of a scenario where two radars are present at the edge point of the line $L_i$ inducing interference.}
\label{fig:fig_3}
\end{figure}
With all the possible cases where we derived the interfering distance, i.e., $v_{k, i}$, correspondingly, the distance from the ego radar to the interfering radar is then given as
\begin{align}
    w_{k, i} &=
    \begin{cases}
        v_{k, i} & i=0 \\
        \sqrt{(d_i + v_{k, i}|\cos\theta_i|)^2 + (v_{k, i}\sin\theta_i)^2}\; &  \mathrm{otherwise}.
    \end{cases}
    \label{eq:eq_w_p}
\end{align}

To quantify the average number of interferers experienced by the ego radar, we build upon the analytical framework established in Theorem~\ref{th:theo1} and Lemma~\ref{le:lemma1}, by finding the average length of the portion of each line $L_i$ that falls within the radar's field of view. This is determined by averaging the bounds $a_{k,i}$ and $b_{k,i}$. Next, we integrate this average line-segment length across the full set of lines in the Cox process, which involves integrating over all angular and radial configurations. This integration step aggregates the contributions from all possible spatial configurations of interfering vehicles relative to the ego radar's position. Finally, we multiply the resulting total line length by the vehicular intensity $\lambda$, which represents the average number of vehicles per unit length on each line, yielding the expected number of interferers as a function of the ego radar's location $r_0$. Specifically, for the BLCP case, the expected interferer count is given by:
\begin{align*}
    \mathcal{I}_{\mathrm{B}}(r_0) = \lambda \left(\frac{1}{2\pi R_{\mathrm{B}}} \int_0^{R_{\mathrm{B}}} \int_0^{2\pi} |b_{\mathrm{B}} - a_{\mathrm{B}}|\, d\theta\, dr \right)^{n_{\mathrm{B}}},
\end{align*}
where $|b_{\mathrm{B}} - a_{\mathrm{B}}|$ captures the active interfering length per line.

\section{Detection Success Probability}
\label{sec:RadarProb}
The detection \textit{success probability} at a threshold $\beta$ is defined as the \ac{CCDF} of SIR, $p_{{\rm D},{\rm P}}(\beta) = \mathbb{P}[\rm{SIR} > \beta]$. This represents the probability that an attempted detection by the ego radar of the target located at a distance $R$ is successful. 
The analytical framework in our work follows a unified pipeline. First, the street geometry is modeled using PLCP or BLCP. Conditioned on this geometry, the effective interfering radar set is defined by joint beam, range, and LOS/NLOS constraints. The resulting distance bounds directly determine the interference integrals, which are then used to derive the SIR distribution and detection probability. This structure is common to both models and enables a transparent comparison between homogeneous and spatially varying urban environments.
\begin{theorem}
\label{th:theo3}
For the network where locations of the vehicles are modeled as Cox process $\Phi_k$, such that $k \in \{{\rm P},{\rm B}\}$ the detection success probability for an ego radar is
\begin{align}
    &p_{{\rm D},k}(\beta) = \mathbb{E}_{\mathcal{D}_k} \vast(\!\!\prod_{(\theta,r) \in \mathcal{D}_k}\!\!\!\! \mathbb{E}_{\Phi} \Bigg( \prod_{\mathbf{w}_k \in \Phi_{k_0}} \frac{1}{1+\beta^\prime ||\mathbf{w}_k||^{-\alpha}} \Bigg)\vast)
    \label{eq:main_eq}
\end{align}
where $\beta^\prime = \frac{4\pi\beta}{\bar{\sigma} R^{-2\alpha_{\rm L}}}$, and $||\mathbf{w}_k||$ is given in~\eqref{eq:eq_w_p}.
\begin{proof}
From the definition of success probability, we have
{
\begin{align}
    &p_{{\rm D},k}(\beta) = \mathbb{P}\left(\frac{\gamma \sigma_{\mathbf{c}}PR^{-2\alpha_{\rm L}}}{\sum_{\mathbf{w}_k \in \Phi_{k_0}} 4\pi \gamma P h_{\mathbf{w}_k} ||\mathbf{w}_k||^{-\alpha_l}} > \beta\right) \nonumber\\
    &= \mathbb{P} \left(\sigma_{\mathbf{c}} > \frac{\beta\left(\sum_{\mathbf{w}_k \in \Phi_{k_0}} 4\pi \gamma P h_{\mathbf{w}_k} ||\mathbf{w}_k||^{-\alpha_l}\right)}{\gamma PR^{-2\alpha_{\rm L}}} \right) \nonumber\\
    &\hspace*{-0.25cm}\overset{(a)}{=} \mathbb{E}_{\Phi_{k_0}, h_{\mathbf{w}_k}} \left[\exp{\left(-\frac{\beta(\sum_{\mathbf{w}_k \in \Phi_{k_0}} 4\pi \gamma P h_{\mathbf{w}_k} ||\mathbf{w}_k||^{-\alpha_l})}{\bar{\sigma}\gamma PR^{-2\alpha_{\rm L}}}\right)}\right] \nonumber\\
    &\overset{(b)}{=} \mathbb{E}_{\mathcal{P}_k} \vast(\prod_{(\theta,r) \in \Phi_{\mathcal{D}_k}} \mathbb{E}_{\Phi_{\rm L}} \Bigg( \prod_{\mathbf{w}_k \in \Phi_{\rm L}} \nonumber\\
    &\hspace*{2cm}\mathbb{E}_{h_{\mathbf{w}_k}} \left[\exp{\left(-\frac{4\pi \beta P h_{\mathbf{w}_k} ||\mathbf{w}_k||^{-\alpha_l}}{\bar{\sigma} PR^{-2\alpha_{\rm L}}}\right)}\right] \Bigg)\vast) \nonumber\\
    &\overset{(c)}{=} \mathbb{E}_{\mathcal{P}_k} \vast(\prod_{(\theta,r) \in \Phi_{\mathcal{D}_k}} \mathbb{E}_{\Phi_{\rm L}} \Bigg( \prod_{\mathbf{w}_k \in \Phi_{\rm L}} \left(\frac{1}{1+\frac{\beta^\prime}{m_l} ||\mathbf{w}_k||^{-\alpha_l}}\right)^{m_l} \Bigg)\vast)
    \label{eq:thm_3}
\end{align}
}
Step (a) follows from the exponential distribution of $\sigma_{\mathbf{c}}$. In step (b), the expectation operator $\mathbb{E}_{\Phi_{k_0}}$ over the Cox processes is decomposed into two parts: first, the expectation over the 1D \ac{PPP} $\Phi_L$ (an instance of $\Phi_{L_i}$) and second, the expectation over the line process $\mathcal{P}_k$. Step (c) follows by averaging over the fading variable $h_{\mathbf{w}}$.
\end{proof}
\end{theorem}
\begin{corollary}
For the \ac{PLCP} with $m_{\rm L} = 2$ and $m_{\rm N} = 1$, the detection success probability is
\begin{align}
    &p_{{\rm D},{\rm P}}(\beta) = \exp\Bigg(\!\!-\lambda \int_{R}^{R_{\rm P}} 1 - \left(\frac{1}{1+\frac{\beta^\prime}{2} {v}_{\rm P}^{-\alpha_{\rm L}}}\right)^2 {\rm d}v_{\rm P} -\lambda_{\rm L} \int_{\mathbb{R}^{+}}\!\! \nonumber\\
    &\hspace*{0.1cm} \int_0^{2\pi} 1 - \exp\bigg(-\lambda \int_{a_{\rm P}}^{b_{\rm P}} 1 - \frac{1}{1+\beta^\prime ||\mathbf{w}_{\rm P}||^{-\alpha_{\rm N}}}\,{\rm d}v_{\rm P}\bigg){\rm d}\theta\, {\rm d}r\Bigg).
    \label{eq:pd_p}
\end{align}
\begin{proof}
The corollary follows by applying Laplace functional~\cite{dhillon2020poisson} of palm \ac{PLCP} to~\eqref{eq:thm_3}.
\end{proof}
\end{corollary}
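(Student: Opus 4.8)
The plan is to start from the general expression~\eqref{eq:thm_3} established in Theorem~\ref{th:theo3} and evaluate it explicitly for the PLCP by exploiting the doubly-stochastic structure of $\Phi_{{\rm P}_0}$. Recall from the system model that, under Palm conditioning, $\Phi_{{\rm P}_0} = \Phi_{\rm P} \cup \Phi_{L_0}$, where $\Phi_{L_0}$ is an independent one-dimensional PPP of intensity $\lambda$ on the ego line $L_0$ and $\Phi_{\rm P}$ is the PLCP on the remaining lines generated by a PPP of intensity $\lambda_{\rm L}$ on the cylinder $\mathcal{D}_{\rm P}$. Because these two components are independent, the product over $\mathbf{w}_{\rm P} \in \Phi_{{\rm P}_0}$ in~\eqref{eq:thm_3} factors into a contribution from $L_0$ and a contribution from the rest of the process, which I would treat separately.

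First I would handle the ego line. Here the interfering points form a one-dimensional PPP, and by Lemma~\ref{le:lemma1} with $(\theta_0,r_0)=(0,0)$ the interfering distance $v_{\rm P}$ ranges over $[R, R_{\rm P}]$, with $||\mathbf{w}_{\rm P}|| = v_{\rm P}$ from the $i=0$ case of~\eqref{eq:eq_w_p}. Applying the PGFL of a one-dimensional PPP~\cite{dhillon2020poisson} to $\prod 1/(1+\beta^\prime v_{\rm P}^{-\alpha})$ then yields the first exponential factor $\exp\big(-\lambda \int_R^{R_{\rm P}} (1 - (1+\beta^\prime v_{\rm P}^{-\alpha})^{-1})\,{\rm d}v_{\rm P}\big)$.

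For the remaining lines I would use a nested PGFL reflecting the Cox hierarchy. Conditioned on a line with parameters $(\theta,r)\in\mathcal{D}_{\rm P}$, the interferers on it again form a one-dimensional PPP, so the inner expectation $\mathbb{E}_{\Phi_{{\rm L}_i}}$ evaluates, via the one-dimensional PGFL, to $\exp\big(-\lambda \int_{a_{\rm P}}^{b_{\rm P}} (1 - (1+\beta^\prime ||\mathbf{w}_{\rm P}||^{-\alpha})^{-1})\,{\rm d}v_{\rm P}\big)$, where the limits $a_{\rm P}, b_{\rm P}$ come from Lemma~\ref{le:lemma1} and $||\mathbf{w}_{\rm P}||$ is substituted from~\eqref{eq:eq_w_p}. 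Treating this as a mark $f(\theta,r)$ attached to each generating point and applying the PGFL of the PLP---equivalently the PGFL of the PPP of intensity $\lambda_{\rm L}$ on $\mathcal{D}_{\rm P}$---turns the outer product $\prod_{(\theta,r)} f(\theta,r)$ into $\exp\big(-\lambda_{\rm L} \int_{\mathbb{R}^{+}}\int_0^{2\pi} (1 - f(\theta,r))\,{\rm d}\theta\,{\rm d}r\big)$. Multiplying the two exponential factors produces exactly~\eqref{eq:pd_p}.

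The step I expect to be the main obstacle is not the PGFL machinery itself but the bookkeeping that guarantees the product runs only over \emph{interfering} Cox points and that their distances are captured by the limits $[a_{\rm P}, b_{\rm P}]$. This requires invoking Definition~\ref{def:def_2} to argue that restricting the one-dimensional integral to the interfering segment of each line is equivalent to thinning the PPP by the radar-sector indicator, and then verifying that the resulting limits are precisely the $a_{\rm P}, b_{\rm P}$ of Lemma~\ref{le:lemma1}. Care is also needed to justify the ordering of the inner (per-line) and outer (over-$\mathcal{D}_{\rm P}$) expectations, which rests on the conditional independence of the one-dimensional PPPs across distinct lines of the PLP.
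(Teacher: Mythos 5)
Your proposal is correct and takes essentially the same route as the paper: the paper's one-line proof---applying the Laplace functional of the Palm PLCP to~\eqref{eq:thm_3}---is precisely the nested PGFL computation you spell out, namely the independent ego-line factor $\exp\big(-\lambda\int_R^{R_{\rm P}}(1-(1+\beta^\prime v_{\rm P}^{-\alpha})^{-1})\,{\rm d}v_{\rm P}\big)$ times the outer PPP-on-$\mathcal{D}_{\rm P}$ average of the per-line inner PGFL. The bookkeeping you flag (restricting each line's integral to the interfering segment $[a_{\rm P},b_{\rm P}]$ via Definition~\ref{def:def_2} and Lemma~\ref{le:lemma1}, and the conditional independence of the per-line PPPs) is exactly what the paper leaves implicit in its citation of that functional.
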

\begin{corollary}
For the \ac{BLCP}-modeled network $(k = {\rm B})$, with $m_{\rm L} = 2$ and $m_{\rm N} = 1$, the detection success probability $p_{{\rm D},{\rm B}}(r_0,\beta)$ is given as,
\begin{align}
    &p_{{\rm D},{\rm B}}(r_0,\beta) = \nonumber\\
    &\hspace*{1cm} \exp\left(-\lambda \int_{R}^{R_{\rm B}} 1 - \left(\frac{1}{1+\frac{\beta^\prime}{2} {v}_{\rm B}^{-\alpha_{\rm L}}}\right)^2{\rm d}v_{\rm B}\right) \times \Bigg[\frac{1}{2\pi R_{\rm g}} \nonumber\\
    &\int_{0}^{R_{\rm g}}\!\!\!\! \int_0^{2\pi}\!\!\! \exp{\left(-\lambda \int_{a_{\rm B}}^{b_{\rm B}} 1 - \frac{1}{1+\beta^\prime ||\mathbf{w}_{\rm B}||^{-\alpha_{\rm N}}} \,{\rm d}v_{\rm B}\right)}{\rm d}\theta\, {\rm d}r\Bigg]^{n_{\rm B}-1}.
    \label{eq:pd_b}
\end{align}
\end{corollary}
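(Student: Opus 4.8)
The plan is to specialize the master formula~\eqref{eq:thm_3} of Theorem~\ref{th:theo3} to the case $k={\rm B}$ by invoking the Palm characterization of the \ac{BLCP} and then applying the probability generating functional (PGFL) of a one-dimensional \ac{PPP} on each line. Following the Palm perspective described in Section~\ref{sec:SysModel} (via~\cite{shah2024binomial}), conditioning a \ac{BLCP} point at $(0,r_0)$ decomposes the reduced process $\Phi_{{\rm B}_0}$ into independent pieces: the ego line $L_0$ carrying an independent $1$-D \ac{PPP} of intensity $\lambda$, together with the remaining lines of the \ac{BLP}, each generated by an i.i.d.\ point uniform on the finite cylinder $\mathcal{D}_{\rm B}=[0,\pi)\times[-R_{\rm g},R_{\rm g}]$ and each carrying its own independent $1$-D \ac{PPP}. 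Because these contributions are independent, the product in~\eqref{eq:thm_3} factors into an $L_0$ term times the contribution of the other lines, which mirrors the two factors appearing in~\eqref{eq:pd_b}.

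First I would treat the ego line. For $i=0$ the ego-to-interferer distance is simply $\|\mathbf{w}_{\rm B}\|=v_{\rm B}$ by~\eqref{eq:eq_w_p}, and, by the $(\theta_0,r_0)=(0,0)$ branch of Theorem~\ref{th:theo1}, the interfering segment runs from $R$ to $R_{\rm B}$. Applying the PGFL of a $1$-D \ac{PPP} of intensity $\lambda$, namely $\mathbb{E}\big[\prod_{x}f(x)\big]=\exp\big(-\lambda\int(1-f)\,{\rm d}x\big)$, with $f(v_{\rm B})=(1+\beta'v_{\rm B}^{-\alpha})^{-1}$, yields exactly the leading exponential factor of~\eqref{eq:pd_b}.

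Next I would handle a generic non-ego line $L_i$ with parameters $(\theta,r)$. Conditioned on these parameters, the interferers along $L_i$ form a $1$-D \ac{PPP} of intensity $\lambda$ restricted to the segment $[a_{{\rm B},i},b_{{\rm B},i}]$ identified in Theorem~\ref{th:theo1}, with ego-to-interferer distance $\|\mathbf{w}_{\rm B}\|$ given by the general branch of~\eqref{eq:eq_w_p}. The same PGFL gives the inner term $\exp\big(-\lambda\int_{a_{\rm B}}^{b_{\rm B}}(1-(1+\beta'\|\mathbf{w}_{\rm B}\|^{-\alpha})^{-1})\,{\rm d}v_{\rm B}\big)$. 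Averaging over $(\theta,r)$, uniform on $\mathcal{D}_{\rm B}$ with density $1/(2\pi R_{\rm g})$ since the cylinder has measure $\pi\cdot 2R_{\rm g}$, produces the bracketed double integral; reparametrizing the angular range $[0,\pi)$ with signed $r$ into $[0,2\pi)$ with $r\in[0,R_{\rm g}]$—a measure-preserving bijection on lines—gives the limits shown in~\eqref{eq:pd_b}. Finally, because the \ac{BLP} is a \emph{binomial} process with a fixed number of i.i.d.\ generating points, the expectation of the product over lines factorizes into the per-line expectation raised to the power $n_{\rm B}$. This factorization-into-a-power is the key structural difference from the \ac{PLCP} corollary, where the Poisson count collapses the product into a single exponential rather than a bracket raised to a fixed power.

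The main obstacle is bookkeeping rather than any deep estimate: one must transfer the Palm decomposition of~\cite{shah2024binomial} faithfully into the PGFL computation and, crucially, justify that the interfering set $\Phi_{\rm B}^{\rm I}$ of Definition~\ref{def:def_2} reduces, line by line, to the segment $[a_{{\rm B},i},b_{{\rm B},i}]$, so that each PGFL integral carries precisely those limits. A secondary subtlety is the exponent itself: the reduced process strictly contains $n_{\rm B}-1$ non-ego lines, so the power $n_{\rm B}$ in~\eqref{eq:pd_b} reflects the unreduced (equivalently large-$n_{\rm B}$) convention, and I would flag this as the single place where the statement is taken as asymptotic rather than exact.
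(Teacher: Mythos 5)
Your proof is correct and takes essentially the same route the paper intends: the paper states this corollary without an explicit proof, but (following the pattern of the PLCP corollary) it follows by applying the per-line PPP PGFL to the master formula~\eqref{eq:thm_3} under the Palm decomposition of the \ac{BLCP}, with the ego-line factor coming from the $(\theta_0,r_0)=(0,0)$ branch of Theorem~\ref{th:theo1} and the bracketed term from averaging a single i.i.d.\ generating point over the finite cylinder and raising to a power --- exactly what you do. Your flag on the exponent ($n_{\rm B}$ in~\eqref{eq:pd_b} versus the $n_{\rm B}-1$ non-ego lines of the reduced process described in the system model) is a fair catch of a genuine looseness in the paper, not a defect in your argument.
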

Thus, the detection success probability of BLCP $p_{{\rm D},{\rm B}}(r_0,\beta)$ is a function of $r_0$ as $\mathbf{w}_{\rm B}$ and limits of integration depend on ego radar's location, unlike PLCP. 
\begin{remark}
While our current analysis assumes a fixed $R$, the framework can be generalized by first conditioning on $R$ and then by including a distribution of $R$ (e.g., uniformly or exponentially distributed). Equivalently $R$ can a function of the point process, such as the distance to the nearest vehicle. Both cases can be accommodated via appropriate outer integrals applied to the conditioned results. This flexibility enables generalization to complex scenarios, while preserving the analytical structure of the model.
\end{remark}

\begin{figure*}[t]
\centering
\subfloat[]
{\includegraphics[width=0.24\textwidth]{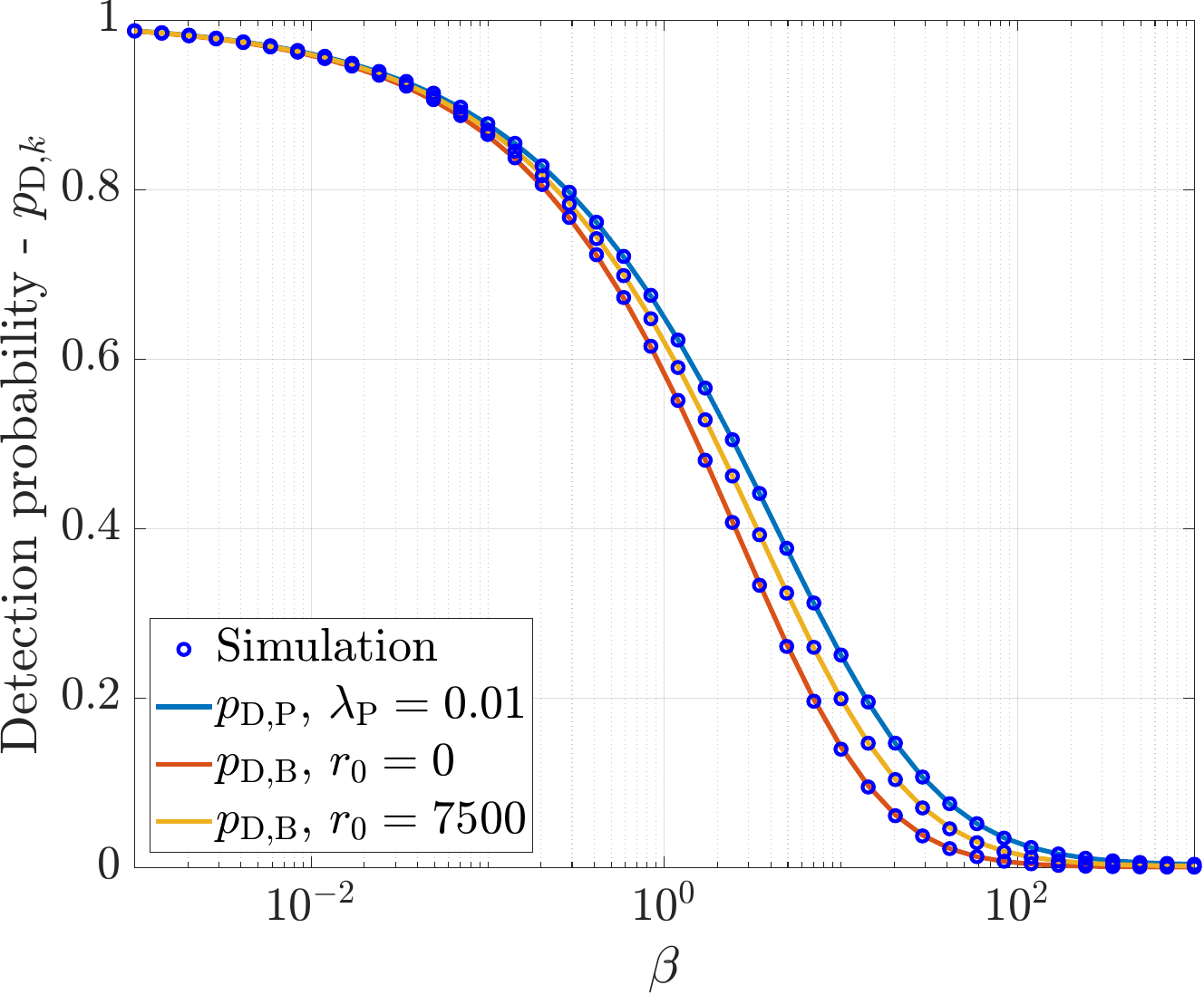}
\label{fig:result_mc}}
\hfil
\subfloat[]
{\includegraphics[width=0.24\textwidth]{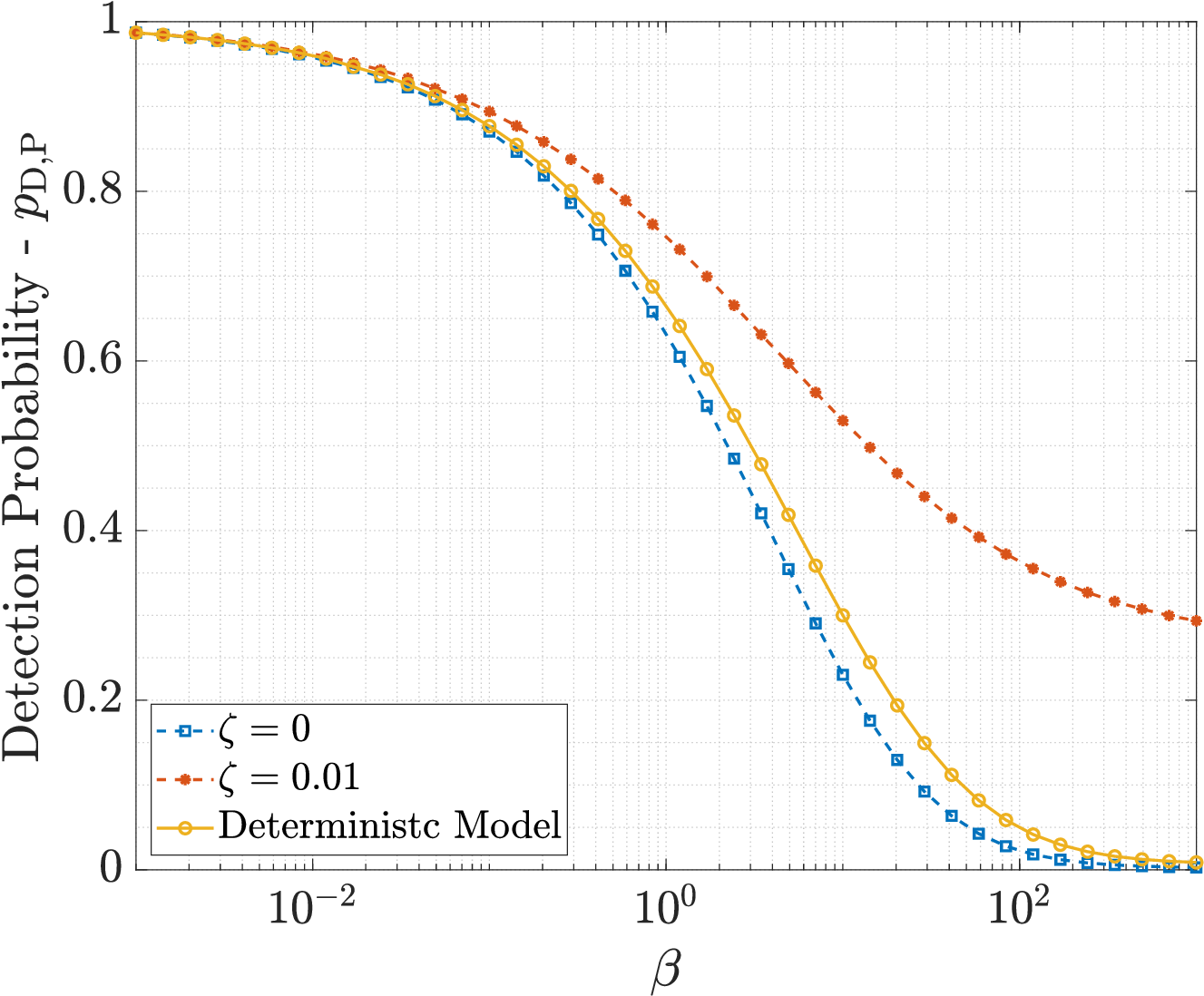}
\label{fig:prob_LOS_PLCP}}
\hfil
\subfloat[]
{\includegraphics[width=0.24\textwidth]{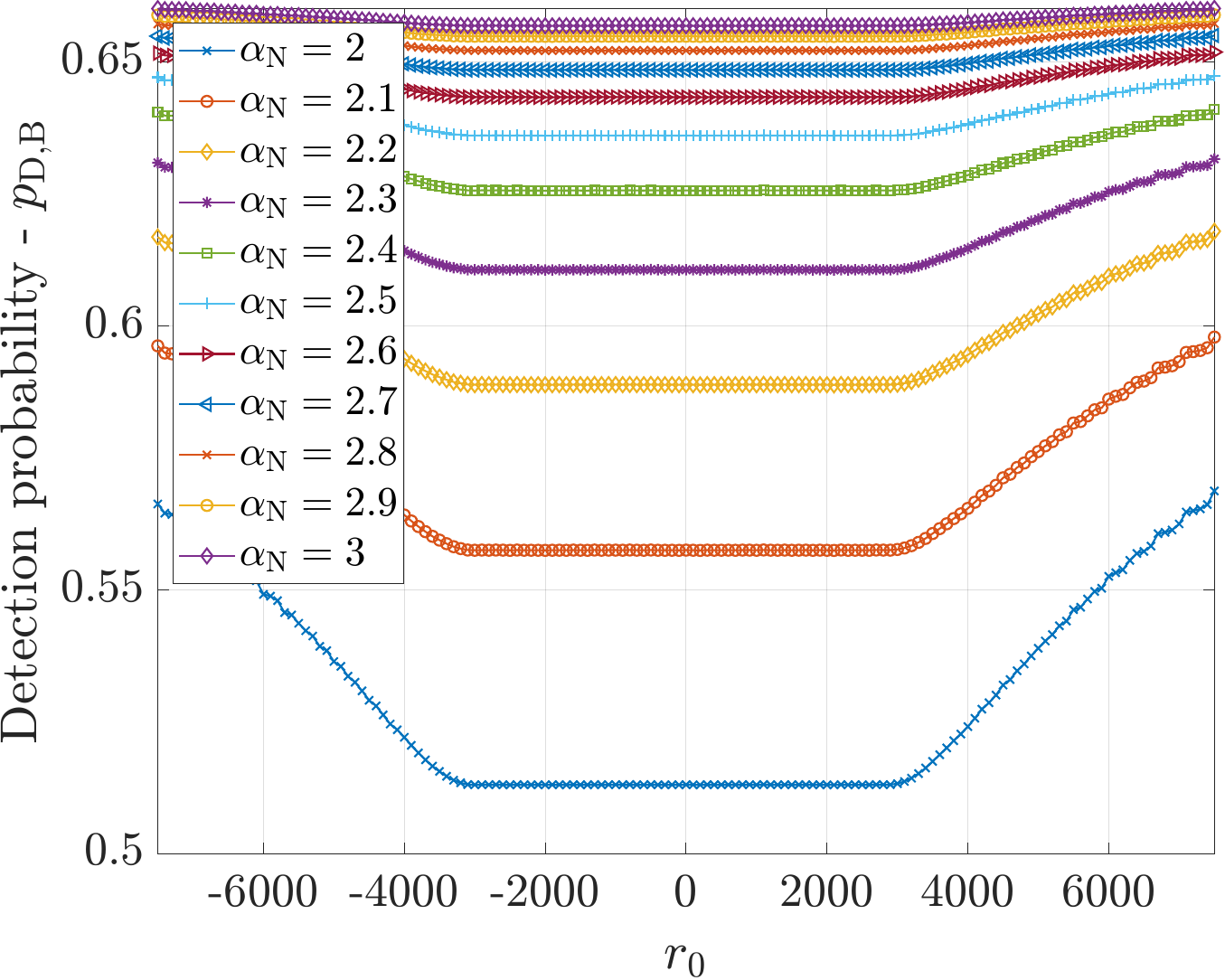}
\label{fig:result_n2}}
\caption{(a) Comparison of Monte Carlo simulations and analytical results for both \ac{PLCP} and \ac{BLCP} models w.r.t $\beta$, Probability of successful detection with respect to (b) $\beta$ for varying $\zeta$ in \ac{PLCP} model, (c) $r_0$ for varying $\alpha_{\rm N}$}
\label{fig:result_n} 
\end{figure*}

\vspace{-0.2cm}
\begin{remark}
For high path-loss exponents, interference is dominated by local automotive radar interferers, thus the detection statistics under BLCP converge to those under PLCP as $R_g \to \infty$, as per the Poisson limit. The BLCP's distinctive edge-effect capture is most valuable when analyzing large-scale, city-spanning spatial trends where finite-domain constraints shape the interference landscape.
\end{remark}
\vspace{-0.2cm}
\begin{remark} For a given threshold $\tau$, one can find the classical metrics of false-alarm probability $P_{\rm fa}$ and the detection probability $P_{\rm dt}$ by employing the Gil-Pelaez inversion theorem along with the Laplace transform of interference to find
\begin{align*}
    P_{{\rm fa},k} = \mathbb{P}\left(N + \mathbb{I}_k > \tau\right) = \frac{1}{2} + \frac{1}{\pi} \int_0^\infty \frac{\Im\left[ \mathcal{L}_{\mathbb{I}_k}(j t) e^{-j t \left(\tau - N\right)} \right]}{t} {\rm d}t
\end{align*}
where $\mathcal{L}_{\mathbb{I}_k}(j t) = \mathbb{E}[e^{-t \mathbb{I}_k}] $ is the Laplace transform of the $\mathbb{I}_k$ and can be found by following the steps in Theorem 1. Likewise $P_{\rm dt}$ can be found by solving
\begin{align*}
    &P_{{\rm dt},k} = \mathbb{P}\left(S + N + \mathbb{I}_k > \tau\right) \nonumber\\
    &= \mathbb{E}_{\Phi_{k_0}} \left[ \exp\left(-\frac{\tau - N - \sum_{\mathbf{w}_k \in \Phi_{k_0}}4\pi \gamma P h_{\mathbf{w}_k, l} ||\mathbf{w}_k||^{-\alpha_{l}}}{\gamma P R^{-2\alpha_L} \bar{\sigma}} \right) \right],
\end{align*}
due to the added randomness of the signal. We refer the reader to the recent work \cite{ghatak2025distribution} for a discussion on this.
\end{remark}

\subsection{Impact of Side-Lobe Interference}
The interfering set defined in Definition 2 characterizes radars that satisfy mutual sector alignment and therefore contribute dominant main-lobe interference. Practical automotive radar systems may experience sidelobe leakage from radars within the maximum radar range $R_k$ but not mutually visible to each other. To incorporate this effect, decompose the total interference as $\mathbf{I}_{{\rm tot},k} = \mathbf{I}_{{\rm ML}, k} + \mathbf{I}_{{\rm SL}, k}$, where $\mathbf{I}_{{\rm ML}, k}$ represents aggregate interference from the mutually visible interfering set, and $\mathbf{I}_{{\rm SL},k}$ captures sidelobe-induced interference. As sidelobe gains are typically several tens of decibels lower than main-lobe gains, $\mathbf{I}_{{\rm SL},k}$ can be modeled as a scaled interference term with gain factor $\eta \ll 1$. This model modifies the SIR expression by a multiplicative attenuation factor without changing the geometry-dependent distance bounds or detection probability analysis structure. Therefore, sidelobe interference only affects absolute performance levels, while spatial trends and comparative insights for PLCP and BLCP models remain unchanged.

\section{Numerical Results and Discussion}
\label{sec:Results}
The radar parameters used for the numerical results are taken from~\cite{series2014systems}. Specifically, $P = 10$ dBm, $\bar{\sigma} = 30$ dBsm, $G_{\rm t} = G_{\rm r} = 10$ dBi, and $\beta = 10$ dB. We study how the success probability $p_{{\rm D},k}$ varies with system parameters like $R$, $\Omega$, and $\lambda$. In all the plots, we assume $\Omega = 7.5^\circ$ (for \ac{PLCP} models), $\Omega = 15^\circ$ (for \ac{BLCP} models), $R=15\, {\rm m}$, $\lambda = 0.01\, {\rm m}^{-1}$, $R_{\rm g} = 1500\, {\rm m}$, $n_{\rm B} = 300$, and $R_k = 500\, {\rm m}$ unless specified otherwise. 

Fig.~\ref{fig:result_mc} presents a comparison between the analytically derived detection probability $p_{{\rm D},k}$ and the corresponding Monte Carlo simulation results for both PLCP and BLCP models. For the BLCP case where $n_{\rm B} = 300$, and $R_{\rm g} = 1500$, the $p_{{\rm D},{\rm B}}$  is shown for two representative ego radar locations: at the city center and near the outskirts. For PLCP a single plots is shown for $\lambda_{\rm L} = 0.01$. The plots validate the analytical model's accuracy across spatial regimes. In these plots for both BLCP and PLCP, the value $\lambda = 0.01$ and $\Omega = 30^\circ$.

\subsection{Impact of Path-Loss Exponent }
\subsubsection{Impact of Probabilistic LOS Model}
To validate the deterministic LOS/NLOS assumption utilized in our analytical framework having $\alpha_{\rm L} = 2$, $\alpha_{\rm N} = 4$, we compared the detection probability $p_{{\rm D},{\rm P}}$ against a probabilistic blockage model via Monte Carlo simulations in Fig.~\ref{fig:prob_LOS_PLCP}. In the probabilistic model, the LOS probability of any link of length $r$ is given by $p_{\mathrm{LOS}}(r) = e^{-\zeta r}$, where $\zeta$ is the blockage parameter determined by the building density. Interferers are randomly assigned LOS ($\alpha_{\rm L}=2$) or NLOS ($\alpha_{\rm N}=4$) states based on this probability, regardless of whether they are on $L_0$ or $L_i$. Our simulations reveal that the deterministic model serves as a tight lower bound to the probabilistic model in dense urban scenarios. Specifically, at $\zeta = 0$, the model reduces to an all LOS scenario ($p_{\rm L}(r)\equiv 1$), yielding the largest aggregate interference and therefore the lowest detection probability, this represents a worst-case, no-blockage environment. The maximum deviation in detection probability between the deterministic and probabilistic models was observed to be less than $7\%$. As $\zeta > 0$, the deterministic scenario lies between the extreme case of $\zeta=0$  and $\zeta>0$, depending on the radar range and city geometry; the plot illustrates this ordering and the relative conservatism of the deterministic approximation.

\subsubsection{Impact of varying $\alpha_{\rm N}$}
Fig.~\ref{fig:result_n2} shows the detection probability $p_{{\rm D},{\rm B}}$ as a function of the ego-radar location $r_0$ for different values of the NLOS path-loss exponent $\alpha_{\rm N}$. As $\alpha_{\rm N}$ increases, the overall detection probability improves due to the stronger attenuation of cross-street interference. Moreover, for small $\alpha_{\rm N}$, particularly near $\alpha_{\rm N} = 2$, $p_{{\rm D},{\rm B}}$ exhibits pronounced spatial variations with $r_0$ (which would be explained in further section~V-B), whereas for larger $\alpha_{\rm N}$ the curves become nearly flat, indicating strong spatial averaging of the NLOS interference field. This behavior follows from the BLCP geometry. For small $\alpha_{\rm N}$, interference decays slowly with distance and is dominated by the nearest cross-street vehicles, making the detection probability highly sensitive to the ego location. As $\alpha_{\rm N}$ increases, the contribution of distant interferers is strongly suppressed and the aggregate NLOS interference becomes isotropically averaged, resulting in spatially uniform performance. In the remainder of the paper, we fix $\alpha_{\rm N}=2$ to intentionally study a \textbf{pessimistic, interference-limited regime}, since this choice maximizes mutual interference and produces the strongest spatial sensitivity. Larger values of $\alpha_{\rm N}$ only improve detection probability and further smooth the spatial variations.

\subsection{Success Probability}
\subsubsection{PLCP} Fig.~\ref{fig:result_P_1} shows that the probability of successful detection decreases as $R$ increases due to increased path loss irrespective of the beamwidth, $\lambda_{\rm L}$, and $\lambda$. Furthermore, the $p_{{\rm D},{\rm P}}$ is higher when the interference from neighboring radars is lower due to a low vehicle density. Next, we examine the variation of success probability with $\lambda$ in Fig~\ref{fig:result_P_2}. The results indicate that $p_{{\rm D},{\rm P}}$ decreases with $\lambda$ due to an increase in the total number of interferers and a decrease of the distance between the nearest interfering radar and the ego radar at the origin. For a fixed beamwidth, $p_{{\rm D},{\rm P}}$ decreases as $\lambda_{\rm L}$ increases. 
Similar trends is observed between $p_{{\rm D},{\rm P}}$ and radar beamwidth in Fig.~\ref{fig:result_P_3}. We note that intensity has a greater impact on total interference than beamwidth.

Fig.~\ref{fig:result_P_3} shows that as the beamwidth increases the number of interfering radars within the main lobe of the ego radar increases, thereby decreasing the detection probability. For lower $\Omega$ (e.g., $< 5^\circ$) and given $\lambda$, the impact of $\lambda_{\rm L}$ is limited, since, in case of small beamwidths, the number of interferes present on the remaining lines of \ac{PLP} are negligible as compared to interferes present on ego radars street, i.e., $L_0$.

\begin{figure*}[t]
\centering
\subfloat[]
{\includegraphics[width=0.24\textwidth]{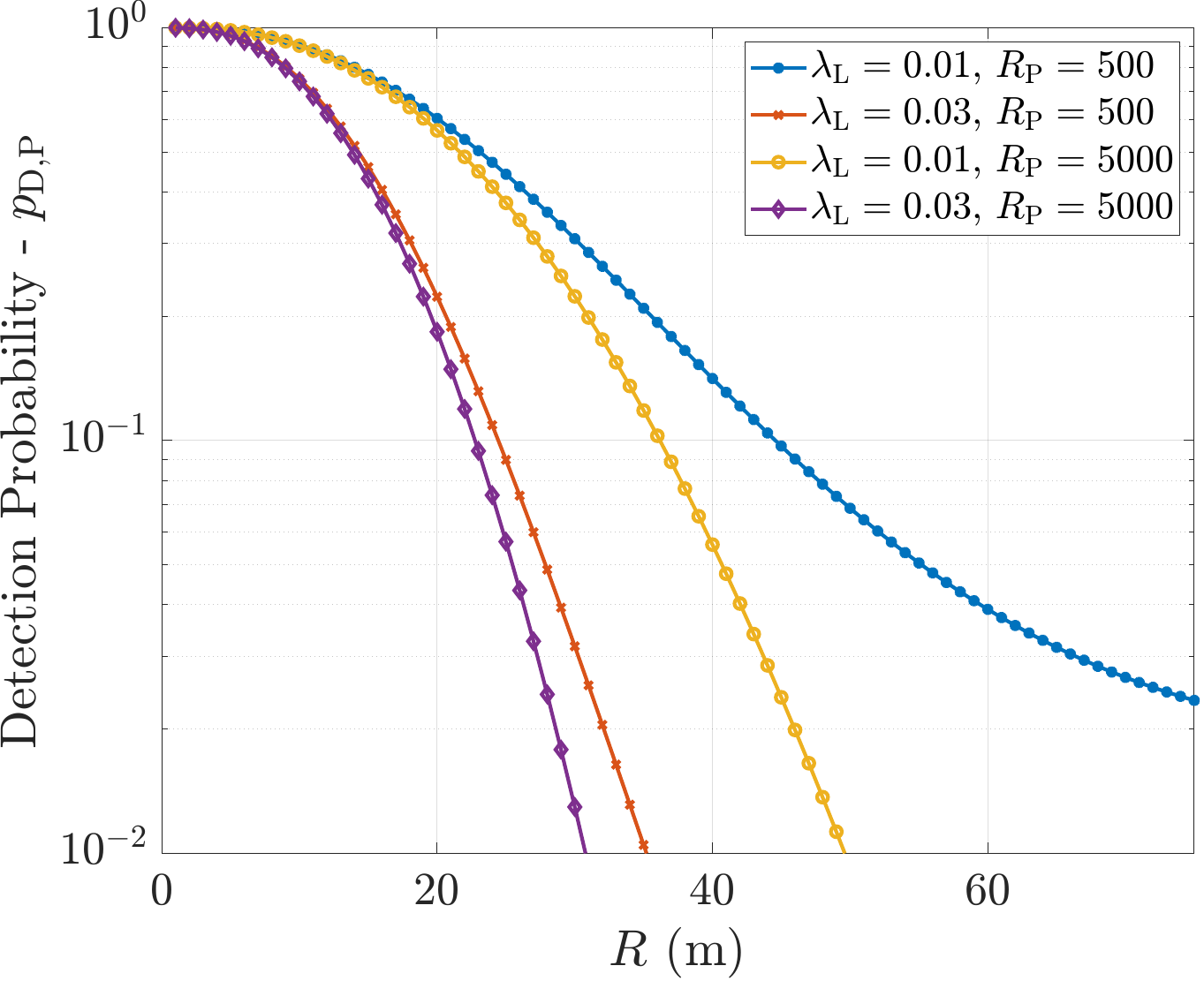}
\label{fig:result_P_1}}
\hfil
\subfloat[]
{\includegraphics[width=0.24\textwidth]{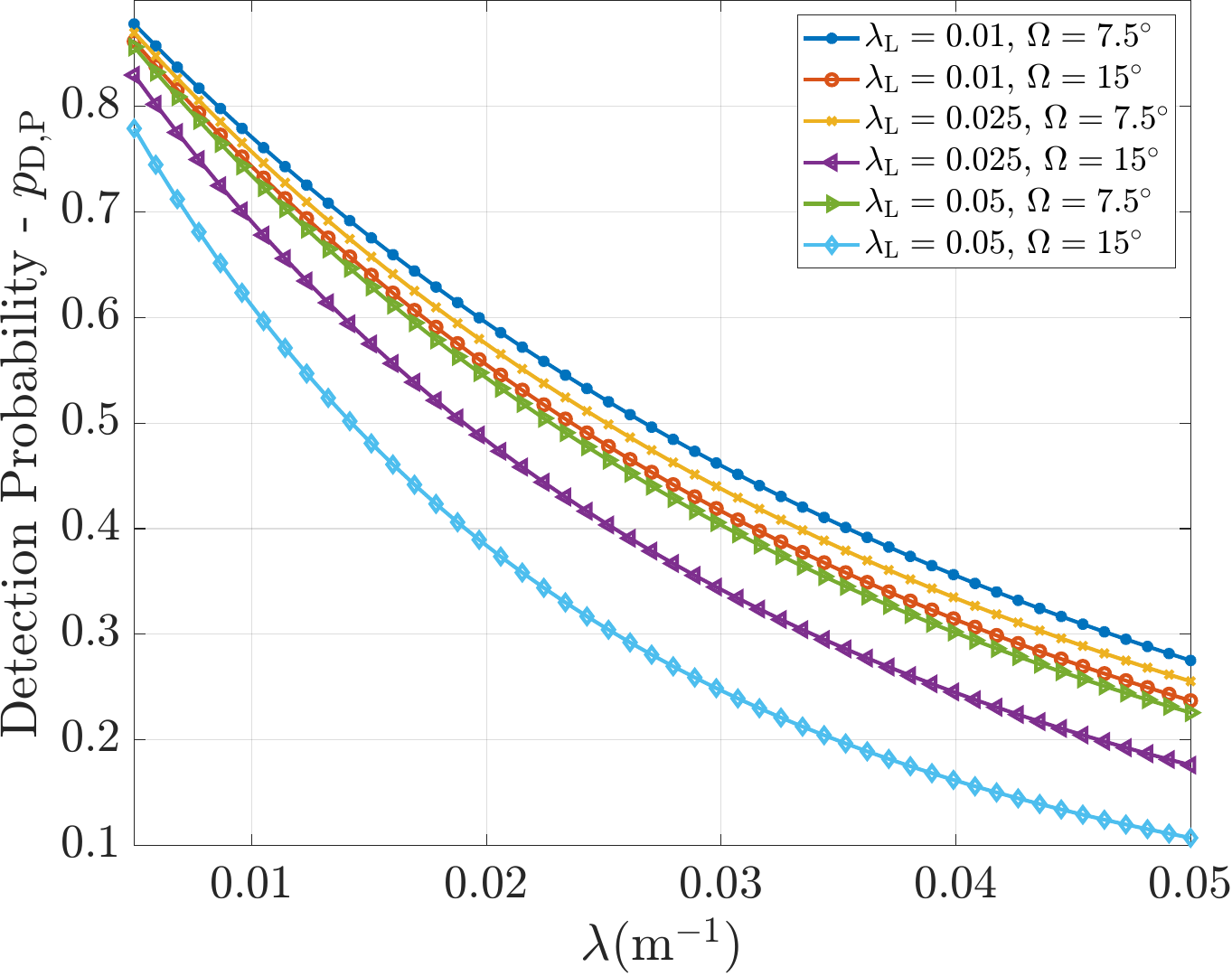}
\label{fig:result_P_2}}
\hfil
\subfloat[]
{\includegraphics[width=0.24\textwidth]{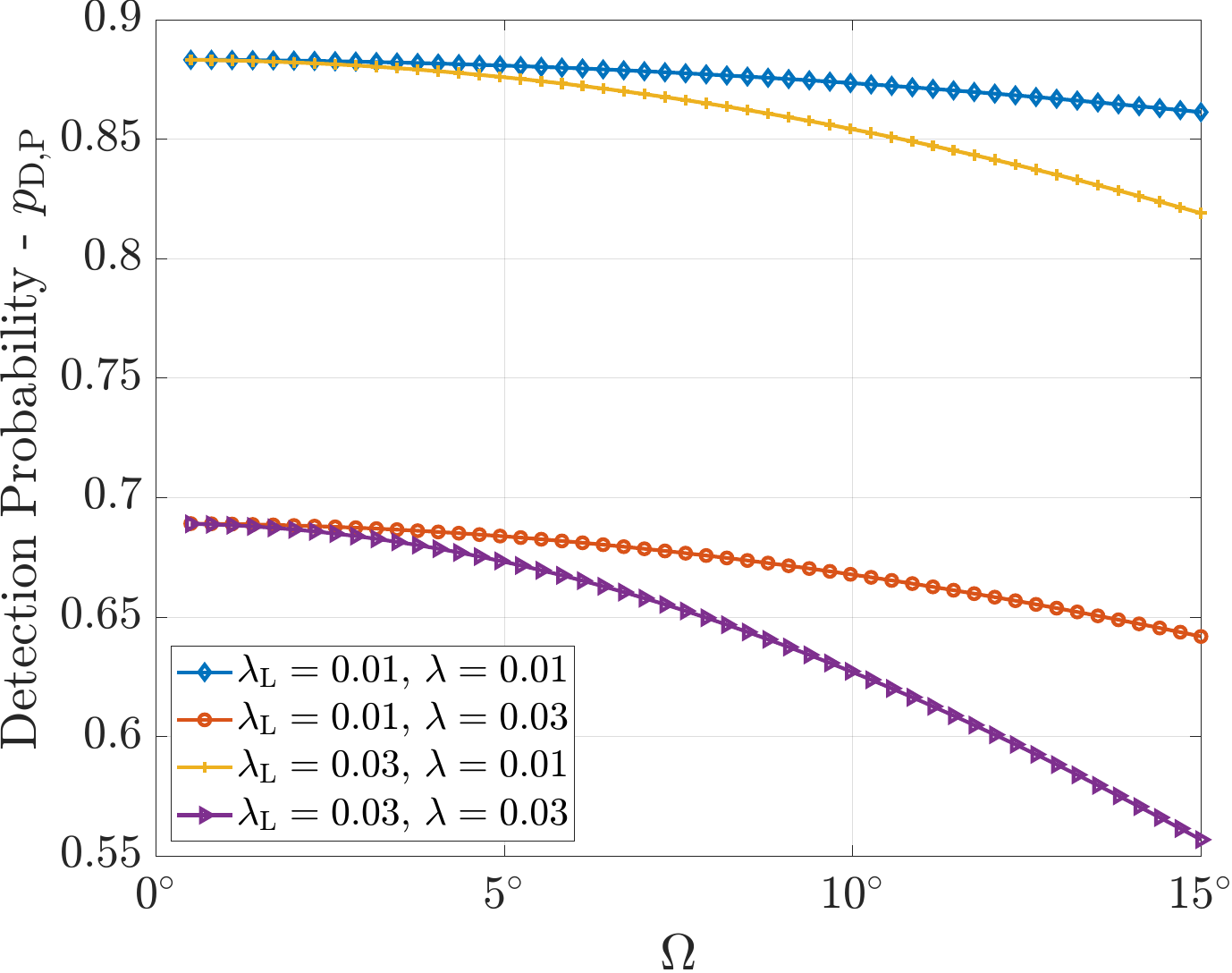}
\label{fig:result_P_3}}
\caption{Probability of successful detection $p_{{\rm D},{\rm P}}$ with respect to (b) $R$, (c) $\lambda$, and (d) $\Omega$.}
\label{fig:result_P_1_4} 
\end{figure*}

\begin{figure*}[t]
\centering
\subfloat[]
{\includegraphics[width=0.25\textwidth]{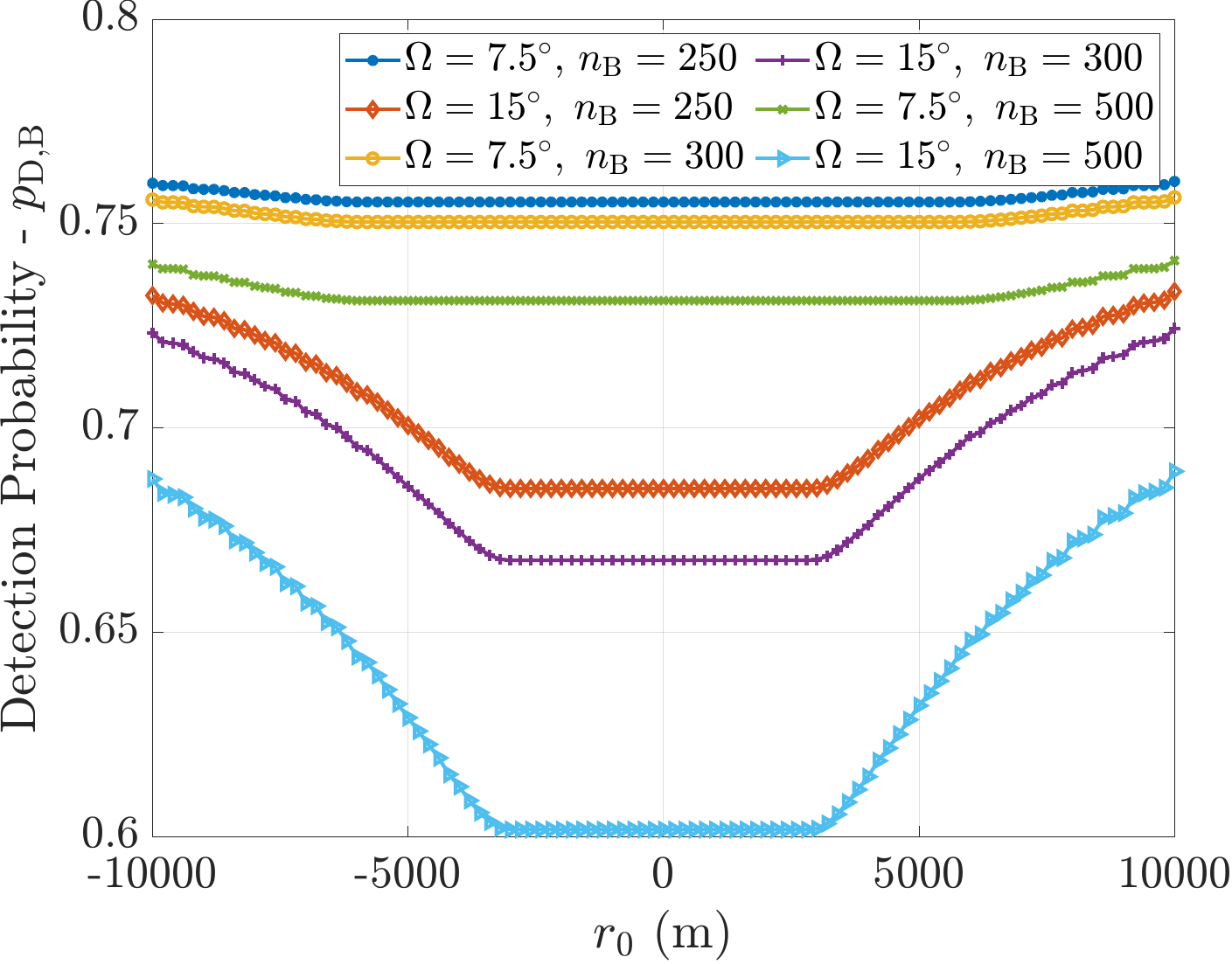}
\label{fig:result_B_1}}
\hfil
\subfloat[]
{\includegraphics[width=0.24\textwidth]{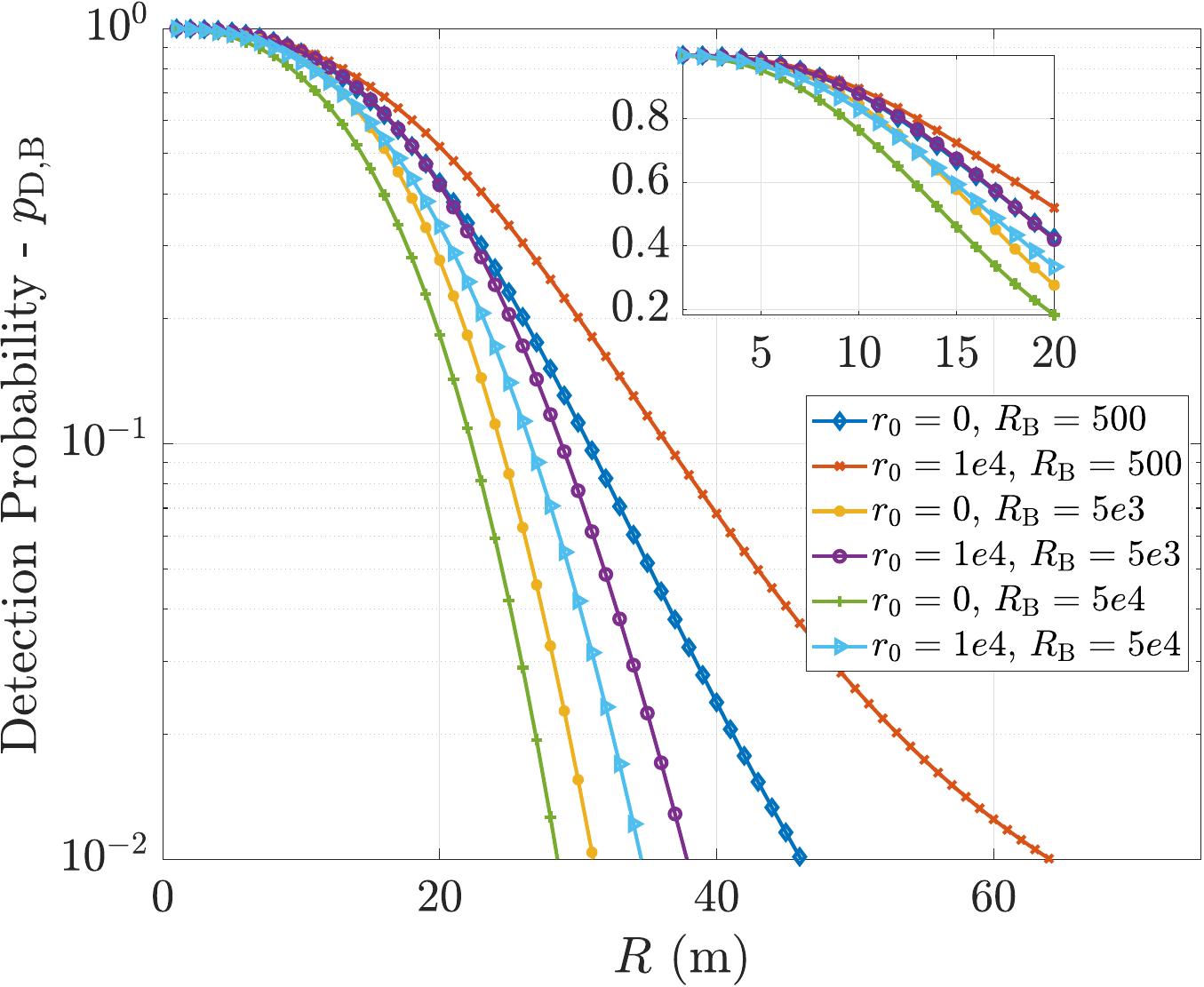}
\label{fig:result_B_2}}
\hfil
\subfloat[]
{\includegraphics[width=0.24\textwidth]{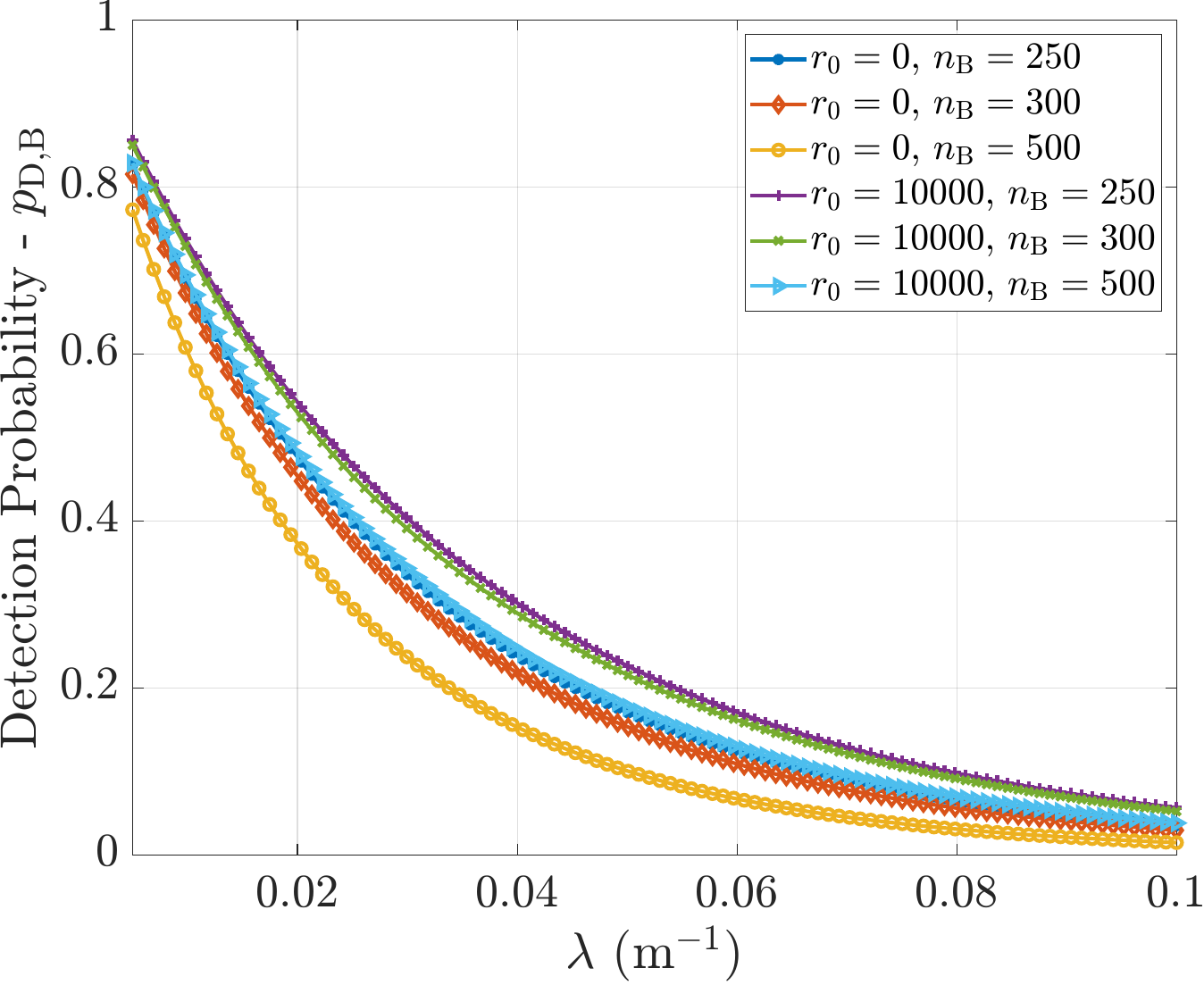}
\label{fig:result_B_3}}
\hfil
\subfloat[]
{\includegraphics[width=0.24\textwidth]{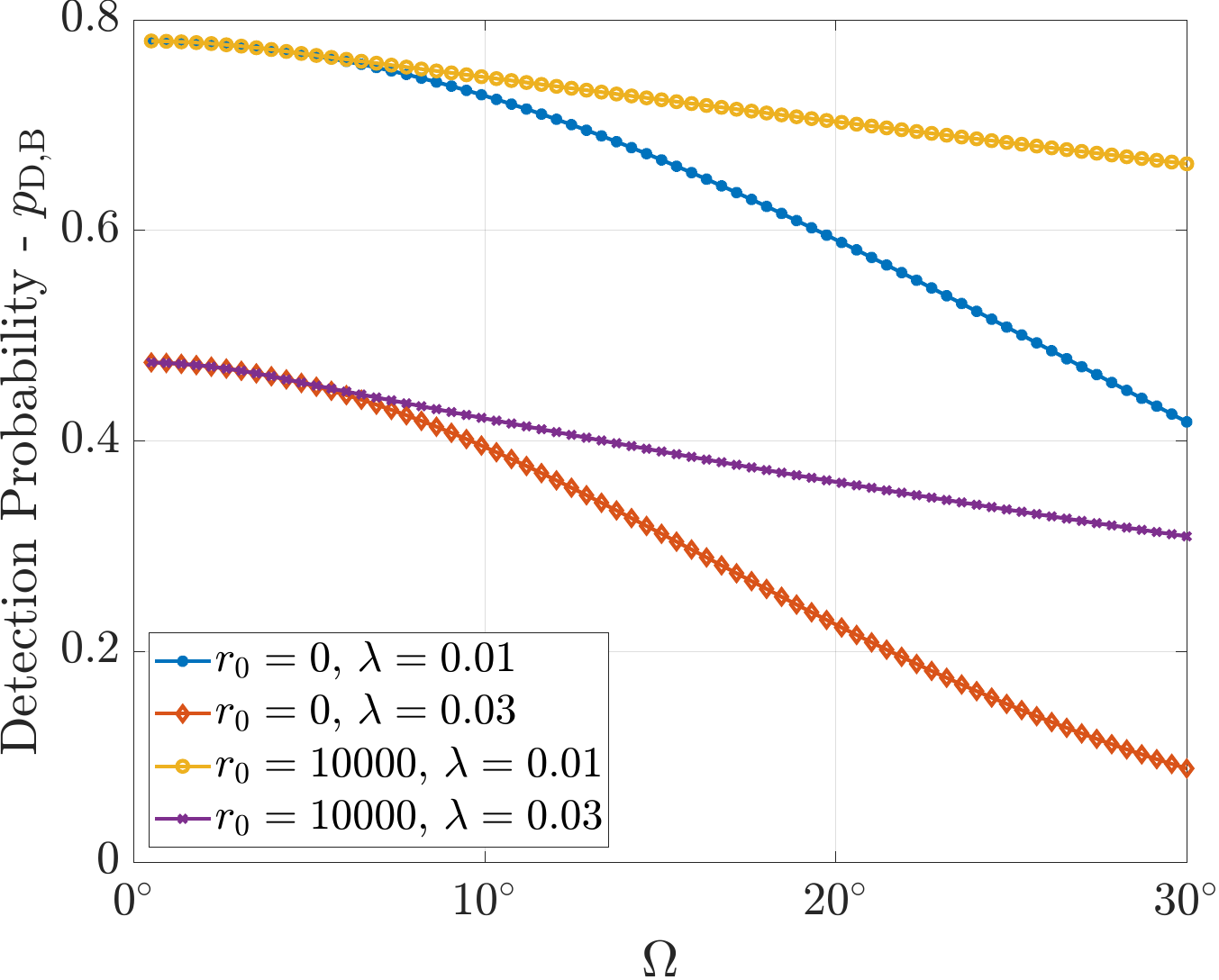}
\label{fig:result_B_4}}
\caption{Probability of successful detection $p_{{\rm D},{\rm B}}$ with respect to (a) $r_0$, (b) $R$, (c) $\lambda$, and (d) $\Omega$.}
\label{fig:result_B_1_4} 
\end{figure*}

\subsubsection{\ac{BLCP}} Unlike the detection performance of ego radar in the \ac{PLCP} framework, the \ac{BLCP} scenario presents a contrasting view of detection probability $p_{{\rm D},{\rm B}}$ against different parameters. Fig.~\ref{fig:result_B_1} shows that as the ego radar moves from the outskirts of a city ($r_0=- 10000\, {\rm m}$) to its center ($r_0=0\, {\rm m}$), then back to the outskirts ($r_0 = +10000\, {\rm m}$), the detection probability first decreases, then saturates to a certain level and subsequently increases as the vehicle leaves the city. Recall that the axis vector of the ego radar is $\mathbf{a} = (0,1)$. Although the street geometry is isotropic, the radar sector is not. Consequently, the detection performance at $-10000\, {\rm m}$ is not the same as that at $10000\, {\rm m}$. The ego radar at $-10000\, {\rm m}$ has lower $p_{{\rm D},{\rm B}}$ as compared to $10000\, {\rm m}$. This is because, at $r_0 = -10000\, {\rm m}$, the radar sector of ego radar has a higher number of interfering radars falling inside it than at $r_0 = 10000\, {\rm m}$. 

The spatial variation in $p_{{\rm D},{\rm B}}$ is fundamentally driven by the radial decay of street density in the BLCP model. At the city center, the ego radar's sector intersects the maximum number of streets, each contributing a non-negligible interfering segment (as quantified in section~\ref{sub:sub_A}). This results in the highest aggregate interference and lowest detection probability. As the ego radar moves toward the outskirts, the number of intersecting streets and the total interfering length declines due to the finite support of the BLCP, thereby improving $p_{{\rm D},{\rm B}}$. This behavior, validated against real-world road data (Table~\ref{MainTable}), reflects the practical reality that radar performance is inherently location-dependent in urban environments.

We further observe that for smaller values of beamwidth, we have a larger saturation region as compared to larger values of $\Omega$. We have assumed $R_{\rm g} = 1500\, {\rm m}$ and $R_{\rm B} = 500\, {\rm m}$. Thus the detection probability for the ego radar in the \ac{BLCP} model \textbf{should have been constant} from $r_0 = -1500\, {\rm m}$ to $r_0 = 500\, {\rm m}$ m irrespective of of $\Omega$. However, this is not the case. From Theorem 2 of~\cite{shah2024binomial}, we deduce that the  \textit{line length density} is constant in the following range of values of $r_0 = [-1500, 500]\, {\rm m}$, as the radar sector is completely within the generating circle $\mathcal{C}((0,0), R_{\rm g})$. This unusual behavior of the network performance where the saturation range of $p_{{\rm D},{\rm B}}$ is not equal to $r_0 = [-1500, 500]\, {\rm m}$ will be explained in the subsection~\ref{sub:sub_A}.

Fig.~\ref{fig:result_B_2} shows that as $R$ increases, the detection performance deteriorates due to the two-way path loss, while the interference power remains the same. Likewise, by increasing $R_{\rm B}$ from $500\, {\rm m}$ to $50000\, {\rm m}$, $p_{{\rm D},{\rm B}}$  decreases, as the total number of interferers within the radar sector increases. Now, in Fig.~\ref{fig:result_B_3}, we plot the detection probability with respect to the intensity of vehicles, $\lambda$, for two different locations of the ego radar. As evident from the plot, $p_{{\rm D},{\rm B}}$ decreases as $\lambda$ increases due to an increase in interference. Additionally, we observe that, for $r_0 = \pm 10000\, {\rm m}$ (outskirts), $p_{{\rm D},{\rm B}}$ is at a larger value than for $r_0 = 0$ (city center). This is because the line length/street density at the outskirts is lower than at the city center. Also, as $n_{\rm B}$ increases from $250$ to $500$, we observe that, for a constant value of $\lambda$, the detection probability decreases because a higher number of streets result in a larger number of vehicles.
Finally Fig.~\ref{fig:result_B_4} shows that as $\Omega$ increases from $1^\circ$ to $15^\circ$, the detection probability decreases non-linearly. We observe that at the city center, the impact of $\Omega$ on the detection performance is more significant as compared to $r_0 = 10000$. Furthermore, for a given value of $\lambda$, the detection probability for initial values of $\Omega$ is nearly the same for two values of $r_0 = \{0, 10000\}$. Indeed, similar to the PLCP, for low beamwidth values, the street containing the ego radar contributes more significantly to the interference as compared to the other streets.
In subsection~\ref{sub:sub_B}, we compare the performance of \ac{BLCP} and \ac{PLCP} frameworks to better understand the contrasting results of these two frameworks.

\begin{figure}[h]
\centering
\subfloat[]
{\includegraphics[width=0.23\textwidth]{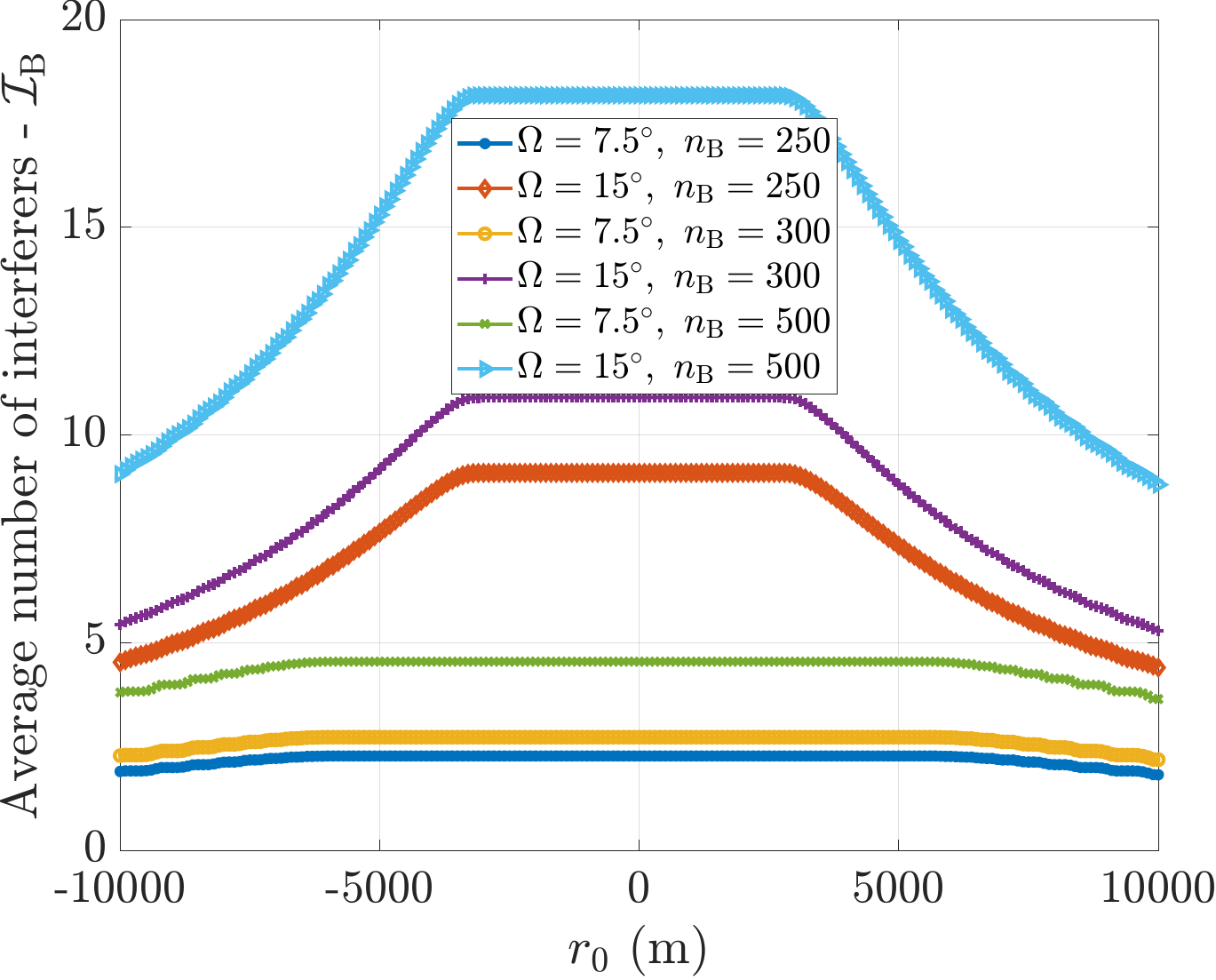}
\label{fig:result_new1}}
\hfil
\subfloat[]
{\includegraphics[width=0.23\textwidth]{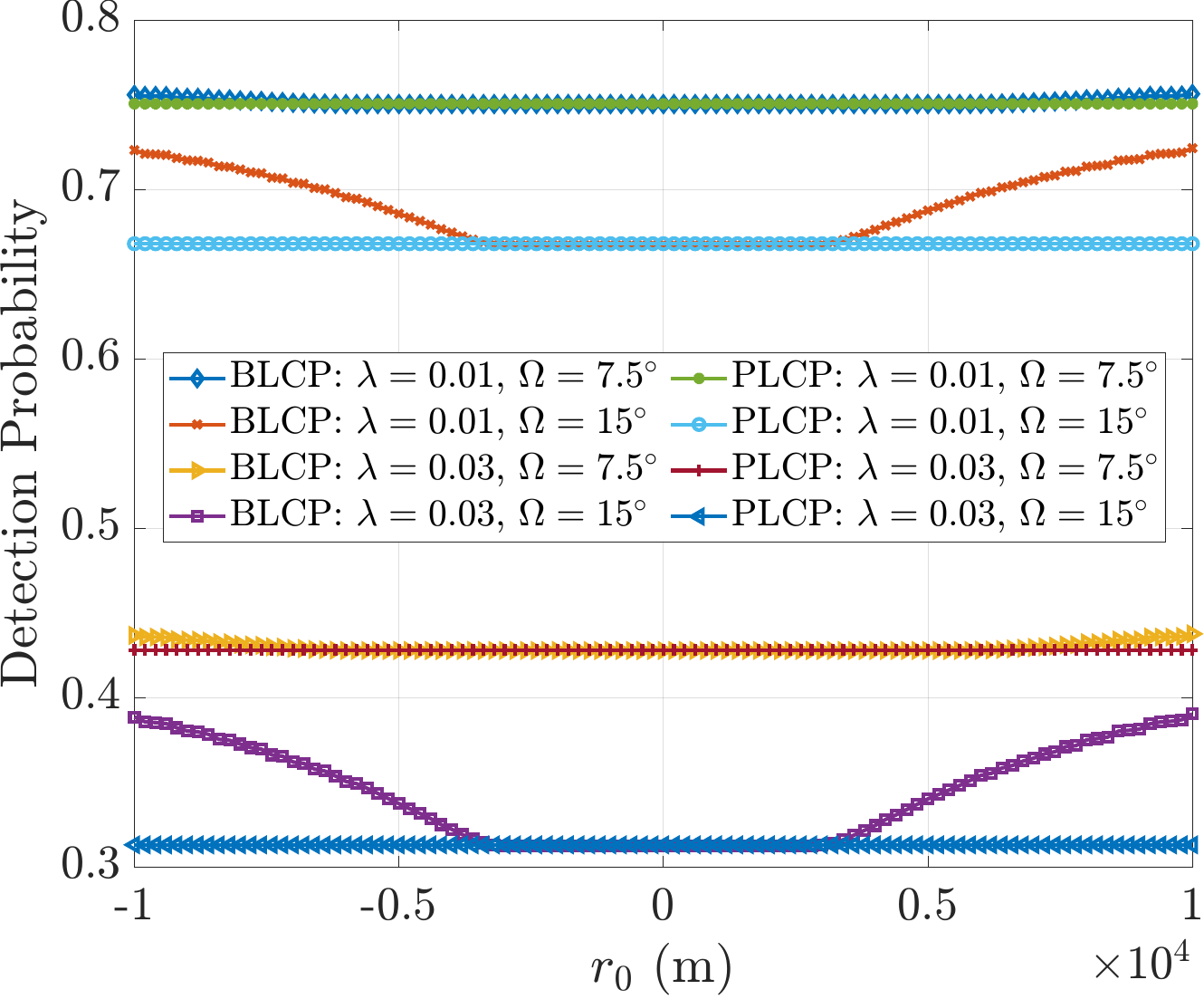}
\label{fig:result_new2}}
\caption{(a) Average number of interferes falling inside the bounded radar sector w.r.t $r_0$. (b) Comparison of $p_{{\rm D},k}$ of ego radar for \ac{PLCP} and \ac{BLCP} models w.r.t $r_0$.}
\label{fig:result_new} 
\end{figure}

\subsection{Average number of interferers}
\label{sub:sub_A}
In this subsection, we explain the phenomenon when the detection probability remains constant for a specific range of $r_0$. Fig.~\ref{fig:result_new1} plots $\mathcal{I}_{\rm B}$ w.r.t. the location of ego radar for the same values of $\Omega$ and $n_{\rm B}$ as in Fig.~\ref{fig:result_B_1}. We see that $\mathcal{I}_{\rm B}$ first increases as $r_0$ increases from $-10000\, {\rm m}$, then saturates at a value and decreases again. The region where it starts to saturate and then again decreases corresponds to the same range of values, for which $p_{{\rm D},{\rm B}}$ remains constant in Fig.~\ref{fig:result_B_1}. Accordingly, the saturation range is not equal to $[-1500, 500]\, {\rm m}$. Rather, the saturation range is a function of $\Omega$, $R_{\rm B}$ and $R_{\rm g}$, and not the function of $n_{\rm B}$ and $\lambda$. In Fig.~\ref{fig:result_new1} and Fig.~\ref{fig:result_B_1}, the region of saturation remains the same i.e. $r_0 = -3100\, {\rm m}$ to $r_0 = 2800\, {\rm m}$ for $\Omega = 7.5^\circ$ and $n_{\rm B} = \{250, 300, 500\}$. Likewise, the region of saturation is from $r_0 = -6000\, {\rm m}$ to $r_0 = 5700\, {\rm m}$ in both Fig.~\ref{fig:result_new1} and Fig.~\ref{fig:result_B_1} for $\Omega = 7.5^\circ$ and $n_{\rm B} = \{250, 300, 500\}$. \\
{\bf Example:} Let the ego radar be present at the origin with axis vector $\mathbf{a} = (0,1)$, and let $\Omega < 1^\circ$, i.e., negligible. Here, the detection probability performance does not change as the ego radar moves anywhere in the city, and it depends only on the intensity of vehicles present on line $L_0$ since the beamwidth is too narrow to be impacted by cars on other lanes. Thus, we expect a constant $p_{{\rm D},{\rm B}}$. As the beamwidth increases, the ego radar incorporates more interfering radars. For $\Omega = 2\pi$ corresponding to an isotropic radar sector, we see the saturation region equal $[-1500, 500]\, {\rm m}$.

\subsection{Detection performance of \ac{BLCP} v/s \ac{PLCP}}
\label{sub:sub_B}
Fig.~\ref{fig:result_new2} compares the detection probability $p_{{\rm D},k}$ for both the \ac{PLCP} and \ac{BLCP} based model networks. To make a fair comparison, we take the intensity of lines of \ac{PLP} as $\lambda_{\rm L} = \frac{300}{2\pi 1500}$ and number of lines in \ac{BLP} $n_{\rm B} = 300$. The detection probability presents a broad contrast between the \ac{PLCP} and \ac{BLCP} networks. In the \ac{PLCP} model, there are no insights on the detection probability depending on the location of ego radar. Due to the homogeneous nature of the \ac{PLCP} the $p_{{\rm D},{\rm P}}$ remains constant at all value of $r_0$. On the other hand, \ac{BLCP} highlights the impact of the ego radar location. Interestingly, we observe that the detection probability for the \ac{PLCP} model are the same as for the \ac{BLCP} in the saturation region, especially if the generating circle is large. The \ac{PLCP} can be used to model the network instance of a local section of the city with a homogeneous distribution of lanes, while the \ac{BLCP} can model the road structure of an entire city, and the two models can be used to study the performance of an ego radar on a micro or macro scale. The following section introduces a real-world perspective of modeling road networks using \ac{PLCP} and \ac{BLCP} framework, where we learn of how \ac{PLCP} and \ac{BLCP} can be used in modeling different perspectives of the city network.

\subsection{Sensitivity Analysis of Network Parameters}
For a deeper understanding of network geometry effects, we analyze the sensitivity of detection probability $p_{{\rm D},k}$, to small perturbations in the density parameters $\lambda, \lambda_{\rm L}$ and geometric parameters $R_k, n_{\rm B}, R_{\rm g}, \Omega$. Understanding these sensitivities is crucial for identifying the parameters that most significantly impact radar performance. The detection probability for the PLCP model, given in \eqref{eq:pd_p}, can be expressed in compact exponential form as
\begin{align}
    p_{{\rm D},{\rm P}} 
    = \exp \big(-\lambda \mathcal{A}_{\rm LOS} 
               - \lambda_{\rm L} \mathcal{A}_{\rm NLOS}(\lambda)\big),
    \label{eq:pd_compact}
\end{align}
where $\mathcal{A}_{\rm LOS}$ is the LOS interference length on the ego street and 
$\mathcal{A}_{\rm NLOS}(\lambda)$ is the aggregated NLOS interference contribution arising from all intersecting streets.

\subsubsection{Derivatives of parameters}
Differentiating \eqref{eq:pd_compact} with respect to $\lambda_{\rm L}$ yields $\frac{\partial p_{{\rm D},{\rm P}}}{\partial \lambda_{\rm L}}
    = -\mathcal{A}_{\rm NLOS}(\lambda)\, p_{{\rm D},{\rm P}}$,
which shows that the sensitivity to the road intensity is proportional to the NLOS interference footprint.

Differentiation with respect to the vehicular density $\lambda$ gives
\begin{align}
    \frac{\partial p_{{\rm D},{\rm P}}}{\partial \lambda}
    = -\left(\mathcal{A}_{\rm LOS}
    + \lambda_{\rm L} \frac{\partial\mathcal{A}_{\rm NLOS}}{\partial\lambda}\right)
    p_{{\rm D},{\rm P}},
    \label{eq:derivative_lambda}
\end{align}
where
\begin{align*}
    \frac{\partial\mathcal{A}_{\rm NLOS}}{\partial\lambda} 
    &= \int_{\mathbb{R}^{+}} \int_{0}^{2\pi} \left(\int_{a_{\rm P}}^{b_{\rm P}} 1 - \frac{1}{1+\beta^\prime ||\mathbf{w}_{\rm P}||^{-\alpha_{\rm N}}}\, {\rm d}v_{\rm P}\right)  \times \nonumber\\
    &\hspace*{-1.5cm} \left(1 - \exp \Big(- \lambda \int_{a_{\rm P}}^{b_{\rm P}} 1 - \frac{1}{1+\beta^\prime ||\mathbf{w}_{\rm P}||^{-\alpha_{\rm N}}}\, {\rm d}v_{\rm P}\Big)\right) {\rm d}\theta\,{\rm d}r .
\end{align*}
The first term in \eqref{eq:derivative_lambda} corresponds to the LOS vehicles on the ego street, while the second term quantifies the effect of vehicles on intersecting streets, weighted by the probability that these streets are occupied.

The situation for the geometric parameters is fundamentally different. The PLCP parameters $(R_{\rm P},\Omega)$ and the BLCP parameters $(R_{\rm B},\Omega, r_0,n_{\rm B},R_{\rm g})$ appear exclusively through the integration limits $a_k$ and $b_k$, which, as shown in Theorem 1 and Lemma 1, are defined by several piecewise cases involving trigonometric conditions, min-max relations, intersection geometry, and boundary changes. These limits involve trigonometric boundary transitions, case-dependent activation regions, and min-max operations, resulting in discontinuous or non-differentiable derivatives across region boundaries. Closed-form analytical derivatives with respect to these geometric parameters would therefore be extremely lengthy and offer little practical insight.

For this reason, we compute their sensitivities using numerical central differences.  For a given parameter $\eta = \left\{R_k, \Omega, r_0, n_{\rm B}, R_{\rm g} \right\}$, the partial derivative is approximated by $\frac{\partial p_{{\rm D},k}}{\partial \eta} \approx \frac{p_{{\rm D},k}(\eta + h) - p_{{\rm D},k}(\eta - h)}{2h},$
where $h$ is a small relative perturbation. To facilitate comparison across parameters with different physical units, we evaluate the normalized sensitivity $S_\eta = \frac{\eta}{p_{{\rm D},k}} \frac{\partial p_{{\rm D},k}}{\partial \eta}.$
This gives the percent change in $p_{{\rm D},k}$ per $1\%$ change in the parameter and is directly comparable across parameters with different units.

\subsubsection{Insights from numerical results}
We now present the numerical values of the partial derivatives and normalized sensitivities obtained using central-difference differentiation applied directly to the analytical detection probability expressions for both PLCP and BLCP, ensuring deterministic (noise-free) estimates. The baseline parameters were set to $\lambda = 0.01$, $\lambda_{\rm L} = 0.01$, $R_k = 500$, $\Omega = 15^\circ$, $n_{\rm B} = 500$, and $R_{\rm g} = 1500$, yielding $p_{{\rm D},{\rm P}} = 0.877$ for PLCP and $p_{{\rm D},{\rm B}} = 0.5255$ for BLCP at $r_0=1500$. 

The sensitivities for the PLCP model are summarized in Table~\ref{tab:plcp_sensitivity}. It shows that the vehicular density $\lambda$ overwhelmingly dominates all other parameters in the PLCP model. A $1\%$ increase in $\lambda$ results in an approximate $0.13\%$ reduction in $p_{{\rm D},{\rm P}}$. For a fractional change $h$ in $\lambda$, an absolute change in detection probablity is $\Delta p_{{\rm D},{\rm P}} \approx p_{{\rm D},{\rm P}} \times S_{\lambda} \times h$ i.e. for $h = 0.01$ and $S_\lambda = -0.13$, we get $\Delta p_{{\rm D},{\rm P}} = -0.00115$. Thus $p_{{\rm D},{\rm P}}$ drops from $0.877$ to about $0.875$, likewise a $10\%$ increase in $\lambda$ produces $\Delta p_{{\rm D},{\rm P}} = -0.0115$. The effects of road density $\lambda_{\rm L}$, radar range $R_k$, and beamwidth $\Omega$ are more than two orders of magnitude smaller.
\begin{figure}[t]
\centering
\subfloat[]
{\includegraphics[width=0.24\textwidth]{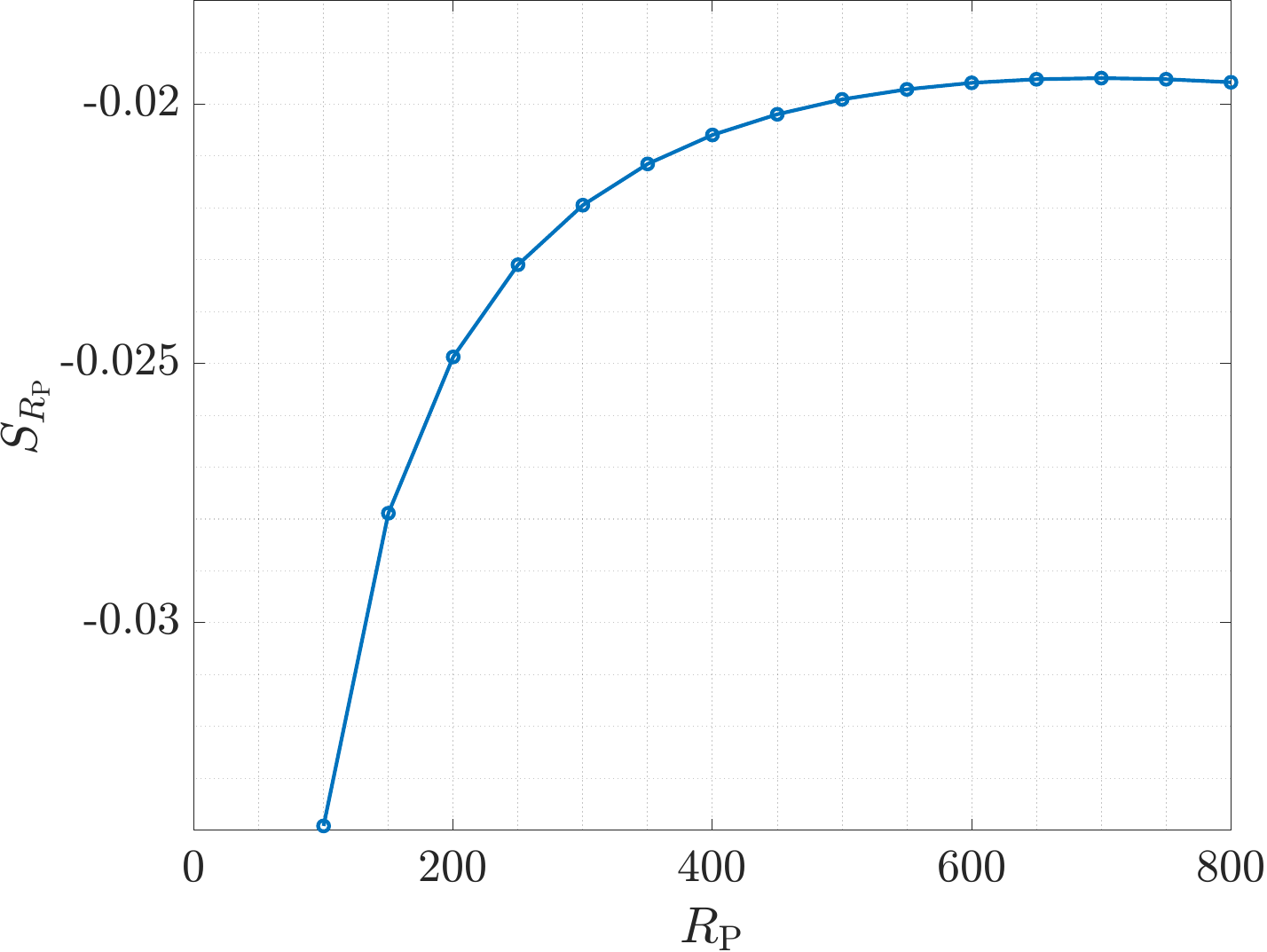}
\label{fig:r_1}}
\hfil
\subfloat[]
{\includegraphics[width=0.23\textwidth]{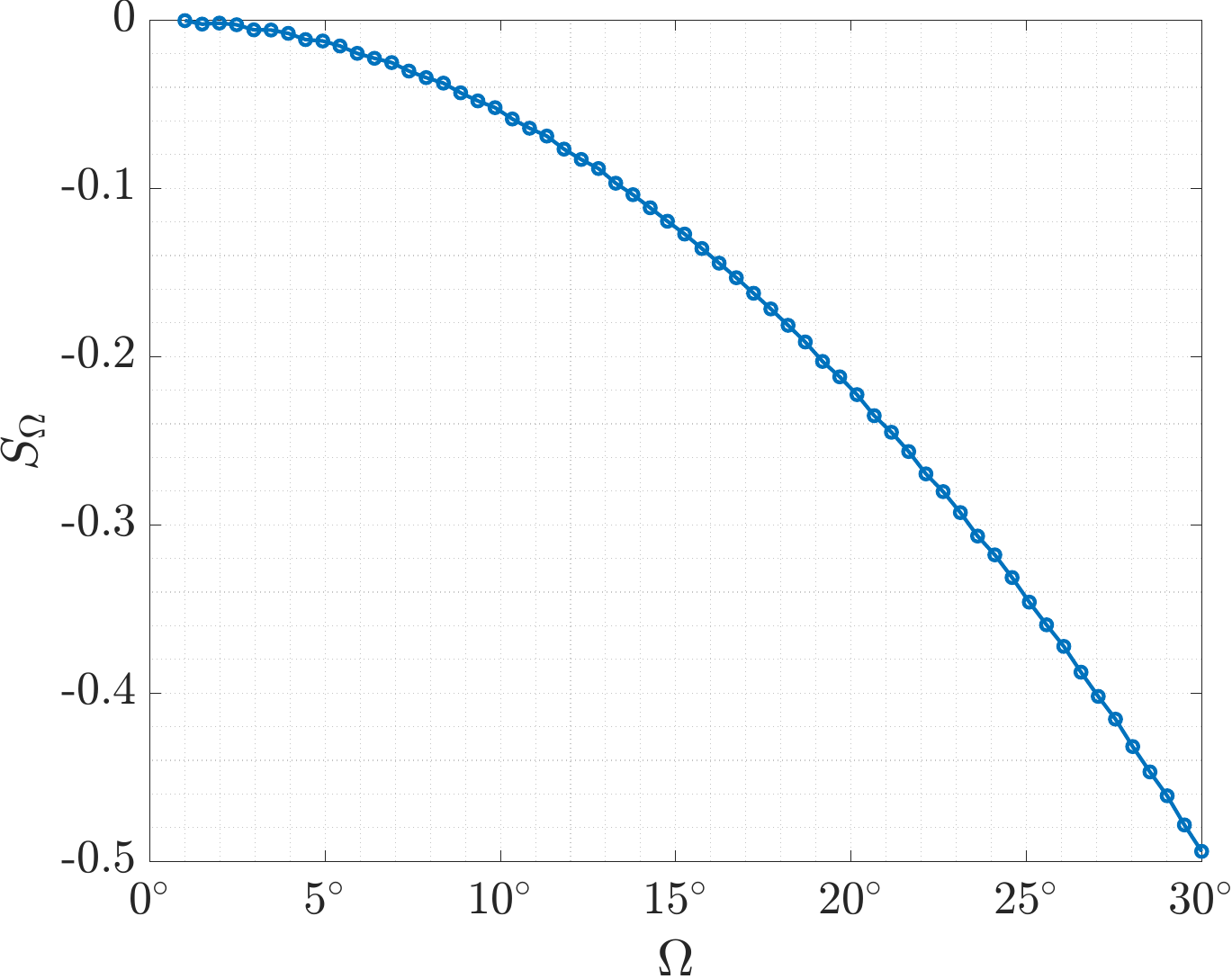}
\label{fig:r_2}}
\caption{Normalized Sensitivity v/s (a) $R_{\rm P}$, and (b) beamwidth for the PLCP model.}
\label{fig:r_1_2} 
\end{figure}
The BLCP sensitivities in Table~\ref{tab:blcp_sensitivity} reveal the same trend: vehicular density remains the dominant control parameter despite the presence of additional geometric degrees of freedom such as $r_0$, $n_{\rm B}$, and $R_{\rm g}$. These additional parameters primarily modulate local geometry but do not rival the impact of $\lambda$.

Fig.~\ref{fig:r_1} shows the sensitivity with respect to the maximum radar range $R_k$. The detection probability initially decreases rapidly with $R_k$ and then saturates, indicating that dominant interference originates within the first few hundred meters. Fig.~\ref{fig:r_2} plots the sensitivity with respect to beamwidth $\Omega$, which is consistently negative and grows in magnitude for larger $\Omega$, confirming that wide beams impose a rapidly increasing interference penalty.

Taken together, the sensitivity results indicate that vehicular density $\lambda$ is the key parameter governing radar-to-radar interference and detection performance, with normalized sensitivities that are more than two orders of magnitude larger than those associated with road intensity, radar range, or beamwidth. In dense deployments, network-level mechanisms that effectively reduce the \emph{effective} density of simultaneously active radars (e.g., duty-cycle control, scheduling, or coordinated transmission policies) are therefore the most impactful levers for improving $p_{{\rm D},{\rm P}}$. Adjusting $R_k$ and $\phi_b$ can provide additional performance tuning, particularly avoiding unnecessarily large sensing ranges and excessively wide beams; however, these parameters offer only moderate gains compared to density-aware interference management. These results reinforce that the primary mechanism governing radar detectability in vehicular environments is the density of active vehicles, beamwidth and range of automotive radars rather than the underlying street geometry.

\begin{table}[h!]
\centering
\begin{tabular}{|l|c|c|}
\hline
Parameter & Baseline Value & $S_\eta$ \\ \hline
$\lambda$          & $0.01$      & $-1.308\times 10^{-1}$ \\
$\lambda_{\rm L}$  & $0.01$    & $-6.97\times 10^{-3}$ \\
$R_{\rm P}$              & $500$        & $-3.08\times 10^{-3}$ \\
$\Omega$  & $15^\circ$     & $-1.416\times 10^{-2}$ \\ \hline
\end{tabular}
\caption{Sensitivity of PLCP detection probability at baseline $p_{{\rm D},{\rm P}} = 0.877s$.}
\label{tab:plcp_sensitivity}
\end{table}

\begin{table}[h!]
\centering
\begin{tabular}{|l|c|c|c|}
\hline
Parameter & Baseline Value & $S_\eta$ \\ \hline
$\lambda$          & $0.01$     & $-6.27\times 10^{-1}$\\
$r_0$              & $1500$         & $3.396 \times 10^{-6}$ \\
$R_{\rm B} $ & $500$     & $-1.025\times 10^{-1}$ \\
$\Omega$  & $15^\circ$    & $-5.2\times 10^{-1}$ \\
$n_{\rm B}$        & $500$      & $-2.598\times 10^{-1}$ \\
$R_{\rm g}$        & $1500$       & $2.599\times 10^{-1}$ \\\hline
\end{tabular}
\caption{Sensitivity of BLCP detection probability at baseline $p_{{\rm D},{\rm B}} = 0.5255$.}
\label{tab:blcp_sensitivity}
\end{table}

\begin{figure}[t]
    \centering
    \includegraphics[width=0.5\linewidth]{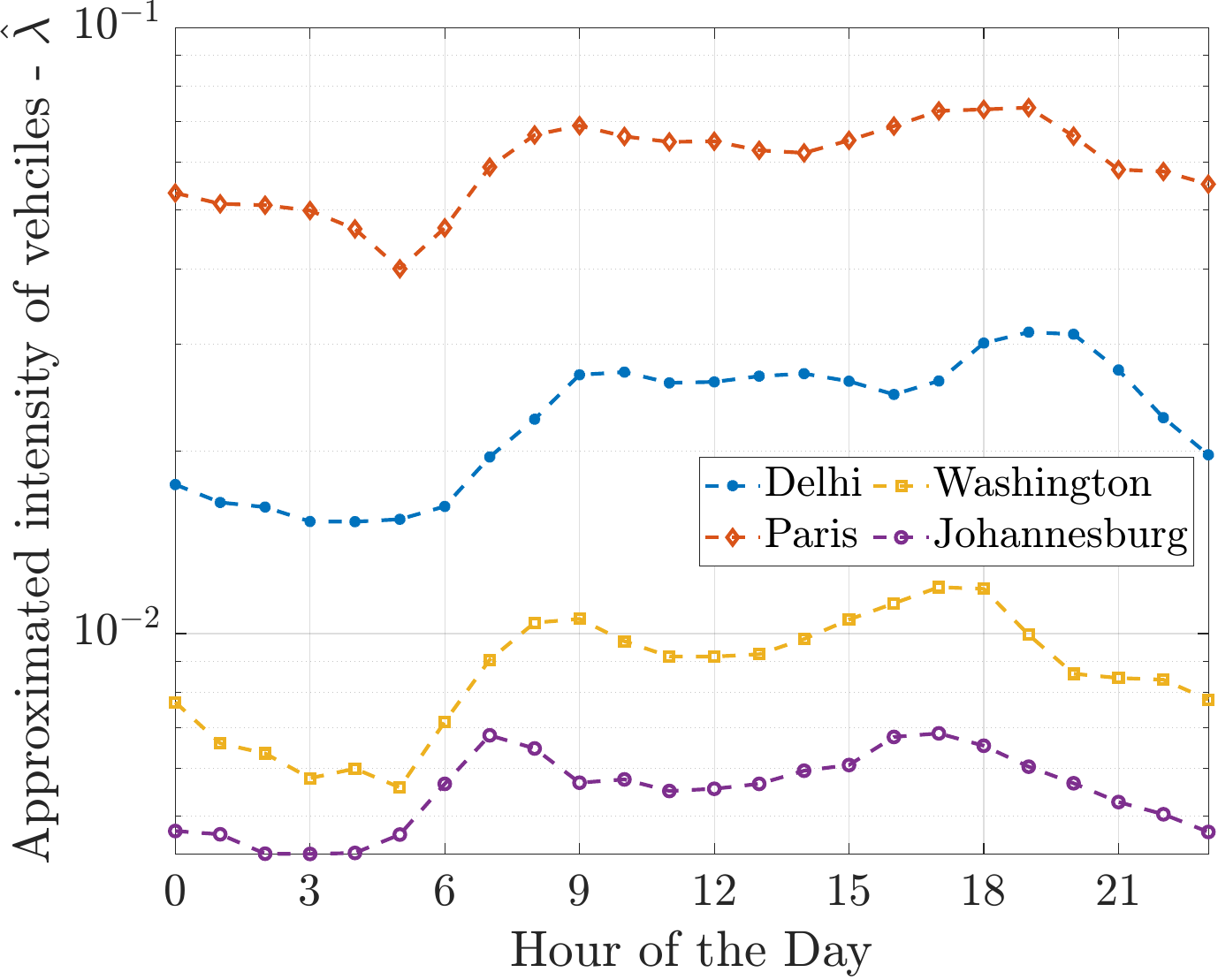}
    \caption{$\hat{\lambda}$ w.r.t. hour of the day for four cities}
    \label{fig:approxL}
\end{figure}

\subsection{Performance of ego radar in real-world road geometries}
This section analyzes the urban traffic patterns by generating detailed geospatial maps for four major metropolitan areas: New Delhi, Paris, Washington, and Johannesburg. These maps include detailed road networks and important urban infrastructure. They also extend to surrounding areas, comprehensively illustrating the urban sprawl and its effects on traffic patterns. To generate the road networks, we use the OSMnx Python module that models analyzes, and visualizes OpenStreetMap street networks~\cite{boeing2025modeling}. Using this real-world data, we approximate the parameters for the \ac{PLP} and \ac{BLP}, enabling a robust statistical representation of road patterns. In Table~\ref{MainTable}, the first column illustrates the road network of the four major cities, and in the second column, we extend the maps to cover the greater suburban areas corresponding to each of the cities. The data of the road networks shown in the first column is used to approximate the \ac{PLCP} parameter $\lambda_{\rm L}$. For \ac{PLP}, the total length of lines in a given region is the product of the $\lambda_{\rm L}$ and the region's area. By considering the geographical region to be a circle, we estimate the approximate street/line density, $\hat{\lambda}_{\rm L}$, as
\begin{align*}
    \hat{\lambda}_{\rm L} = \frac{\pi}{\texttt{street\textunderscore density\textunderscore km}}
\end{align*}
where \say{street\textunderscore density\textunderscore km} is the density of the street found using OSMnx. The $\hat{\lambda}_{\rm L}$ for the four cities are $\{0.004, 0.0052, 0.0053, 0.0037\}$ m$^{-1}$. Now, to fit the \ac{BLP} model, we use the \textit{total length of street} obtained from OSMnx data and the result of Theorem 2 from~\cite{shah2024binomial}. This Theorem derives the \textit{line length density} $\rho (r)$ of \ac{BLP}, where $r$ is the distance from the center of the generating circle. 
Given a dataset $\left\{\left(r_i, \iint \limits_{r \in \mathbb{R},\, \theta \in [0, 2\pi]} \rho_i \right) \right\}_{i=1}^{N}$, where $r_i$ represents the distance from the city center and $\iint \rho_i$ denotes the total length of streets of a city in a circle of radius $r$, we estimate the parameters $(n_{\rm B}, R_{\rm g})$ by fitting the function $\iint \rho(r)$ to the total length of streets data. By integrating $\rho (r)$ over the region of interest, we get the area's total length of \ac{BLP} lines. The total length of streets of a city in a given area can be obtained from \say{street\textunderscore length\textunderscore total} using OSMnx. In our analysis, we assume that at the center of each city, there is a square bounding box of a specific length, and we calculate the total length of lines as the dimensions of the box increase. To obtain the best-fitting parameters, we minimize the sum of squared residuals:
\begin{align*}
    \min_{n_{\rm B}, R_{\rm g}} \sum_{i=1}^{N} \left(\,\,\, \iint\limits_{r \in \mathbb{R}, \theta \in [0, 2\pi]} \!\!\!\!\!\!\!\! \rho(r_i) - \texttt{street\textunderscore length\textunderscore total} \!\right)^2.
\end{align*}
The third column of Table~\ref{MainTable} plots the total length of lines w.r.t the dimensions of the box. 
As an example in the third column and first row, the approximated parameters for Delhi, India are $\hat{n}_{\rm B} = 662$ and $\hat{R}_{\rm g} = 12.96\, {\rm Km}$. Likewise, in the second column, approximated parameters $\hat{n}_{\rm B}$ and $\hat{R}_{\rm g}$ for other cities can be found. In the third column, the four figures illustrates the effectiveness of the approximated parameters in fitting real-world data. Specifically, it shows how the total length of streets varies with the dimensions of the bounding box, with the observed data represented by blue circle markers. We also plot the total length of streets for both the \ac{PLP} and \ac{BLP} approximated parameters. These plots exhibit remarkable precision of \ac{BLP} in modeling road networks as compared to \ac{PLP}. The empirical investigation demonstrates a strong correlation between the \ac{BLP} models and real road structures.
\begin{table*}[t]
\centering
\begin{tabular}{|l|l|l|l|}
\hline
\multicolumn{1}{|c|}{\begin{tabular}[c]{@{}c@{}}Map of Cities considered\\ for \ac{PLCP}\end{tabular}} & \multicolumn{1}{c|}{\begin{tabular}[c]{@{}c@{}}Extended Map of Cities\\ considered for \ac{BLCP}\end{tabular}} & \multicolumn{1}{c|}{\begin{tabular}[c]{@{}c@{}}Fitting \ac{BLP} and \ac{PLP}\\parameters to real data\end{tabular}} & \multicolumn{1}{c|}{\begin{tabular}[c]{@{}c@{}}Detection performance \\ of \ac{PLCP} and \ac{BLCP}\end{tabular}}\\
\hline
\hline
\centering \includegraphics[trim={7cm 0cm 0cm 0cm},clip,height=0.17\textwidth,width=0.2\textwidth]{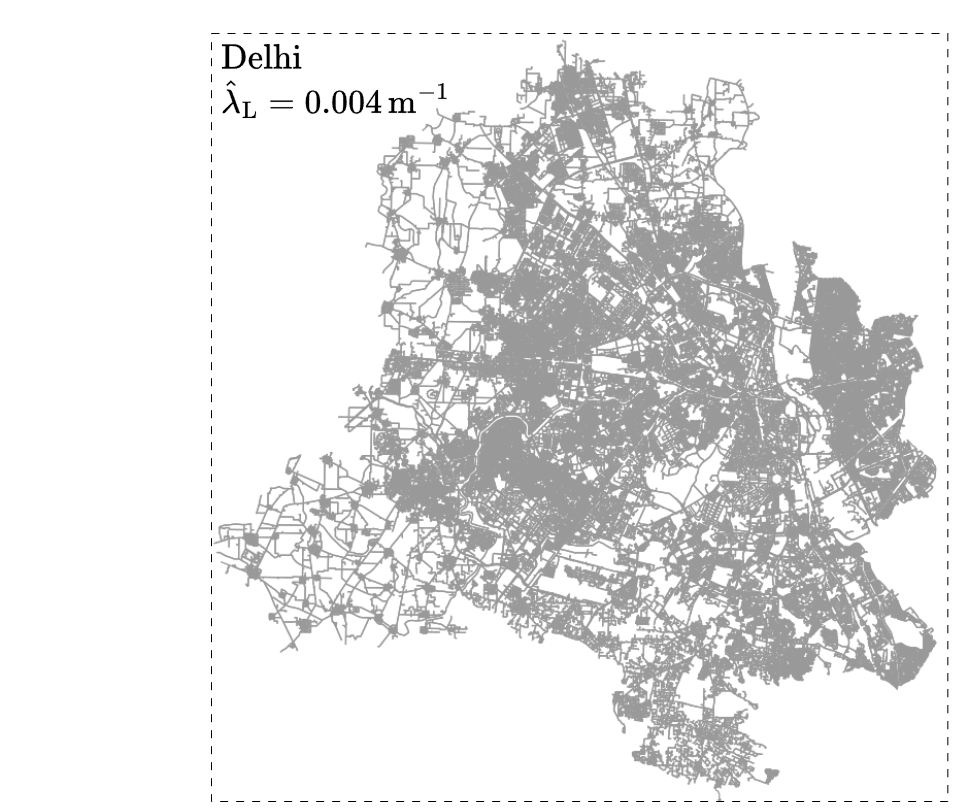} & \includegraphics[height=0.17\textwidth,width=0.2\textwidth]{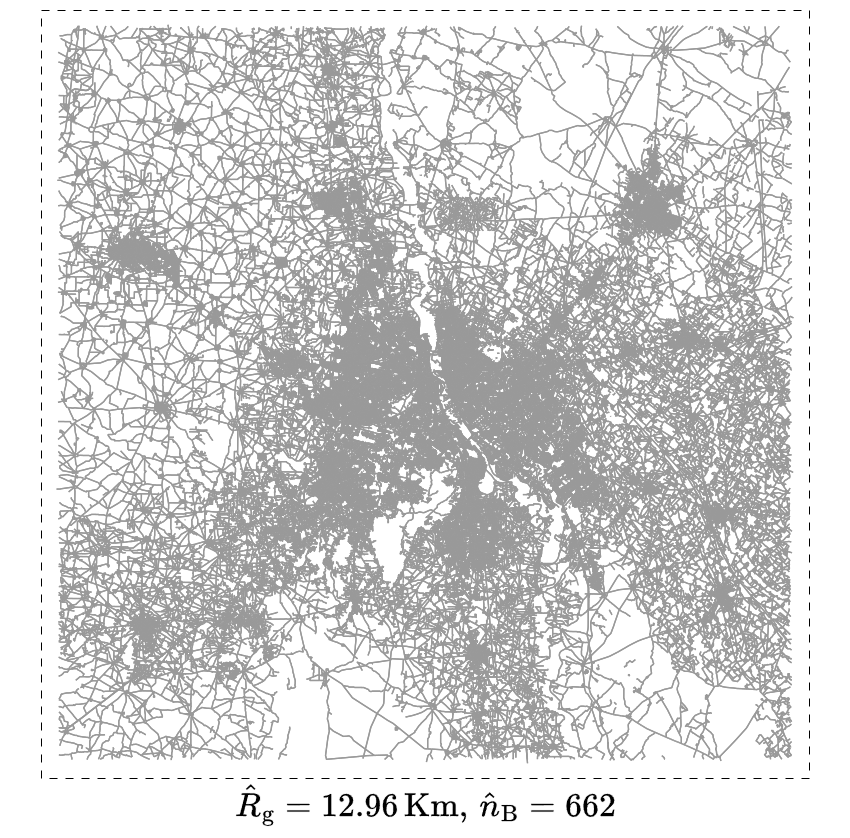} & \includegraphics[height=0.16\textwidth,width=0.2\textwidth]{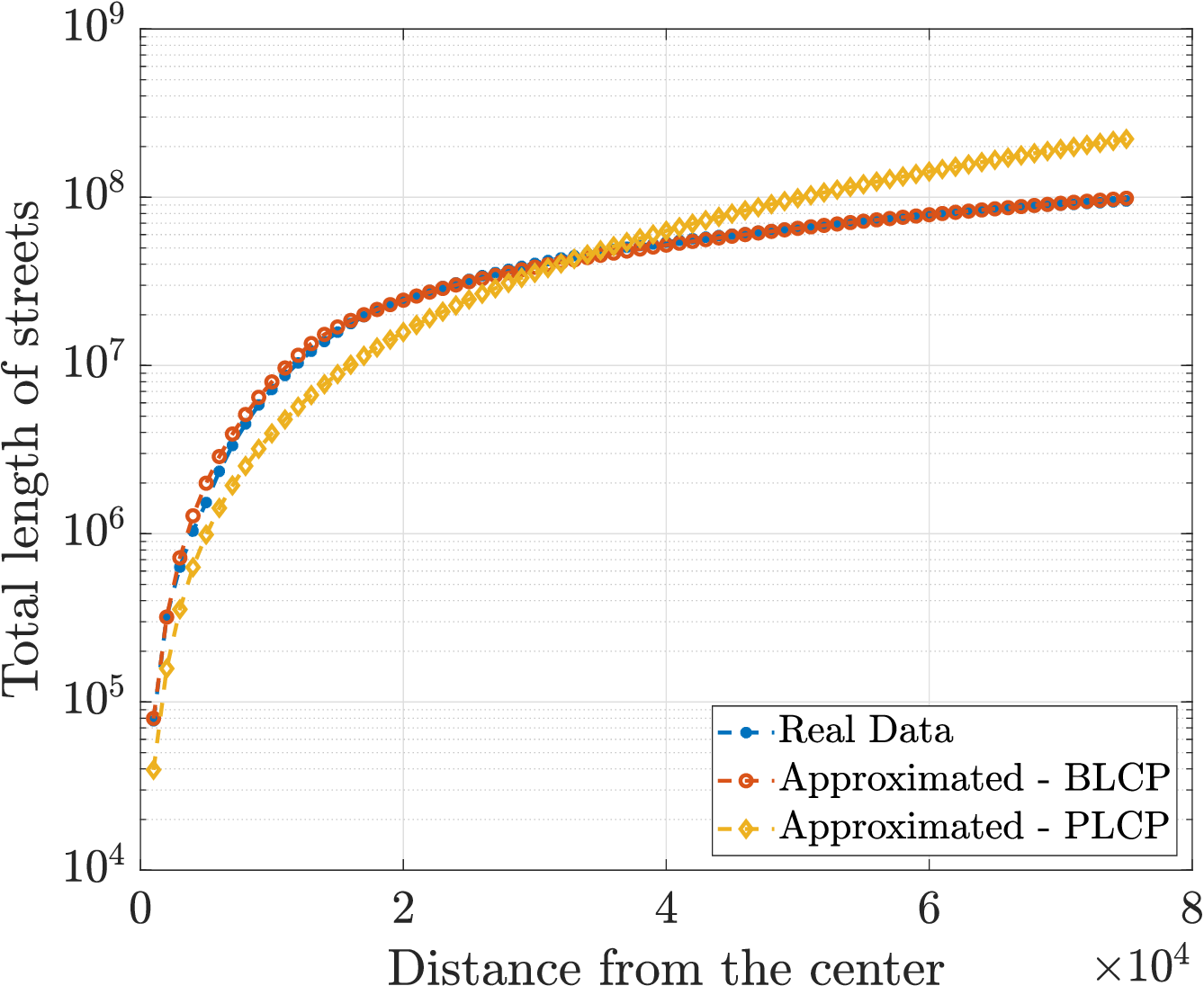} &  \includegraphics[height=0.16\textwidth,width=0.2\textwidth]{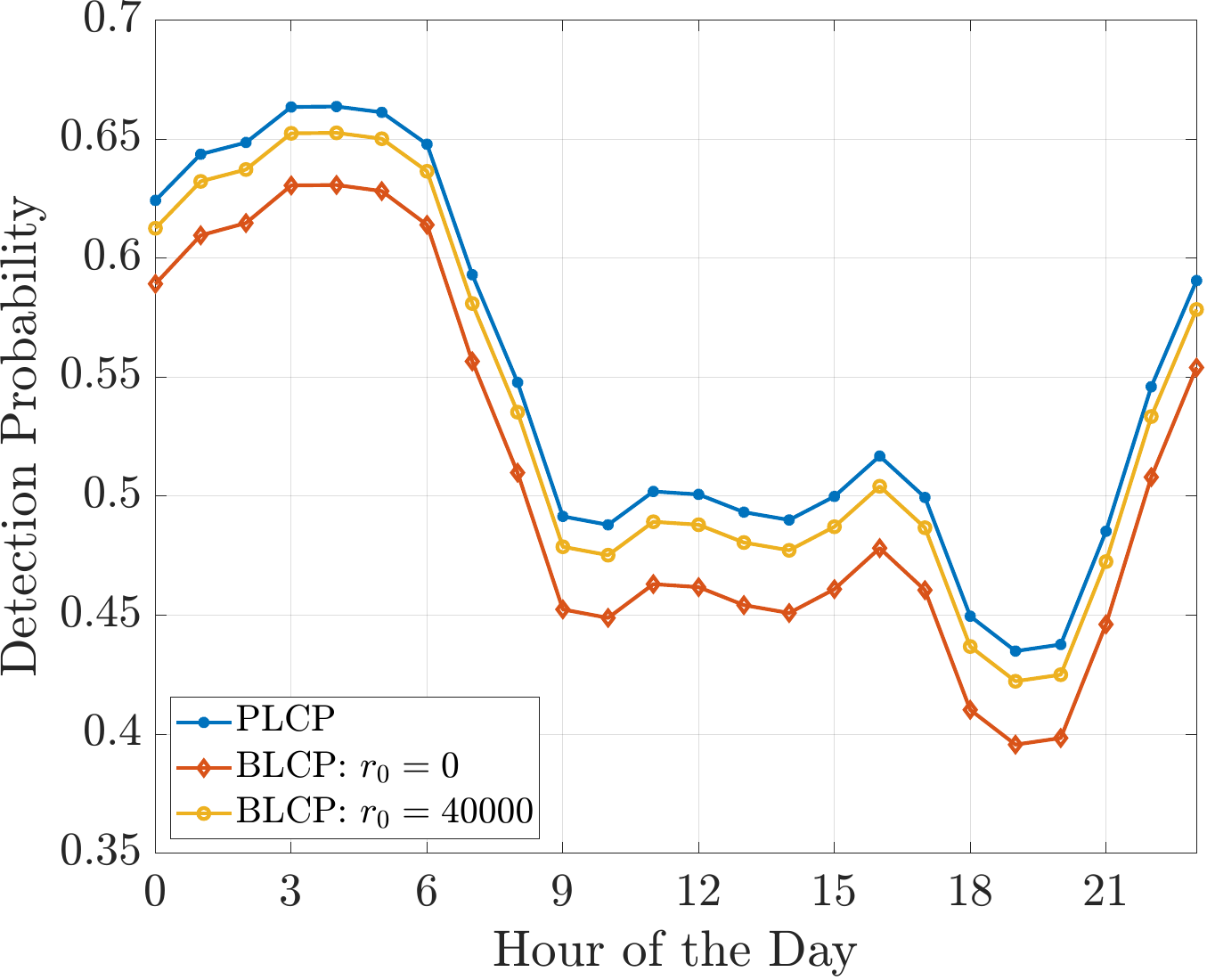}\\
\hline
\includegraphics[trim={7cm 0cm 0cm 0cm},clip,height=0.17\textwidth,width=0.2\textwidth]{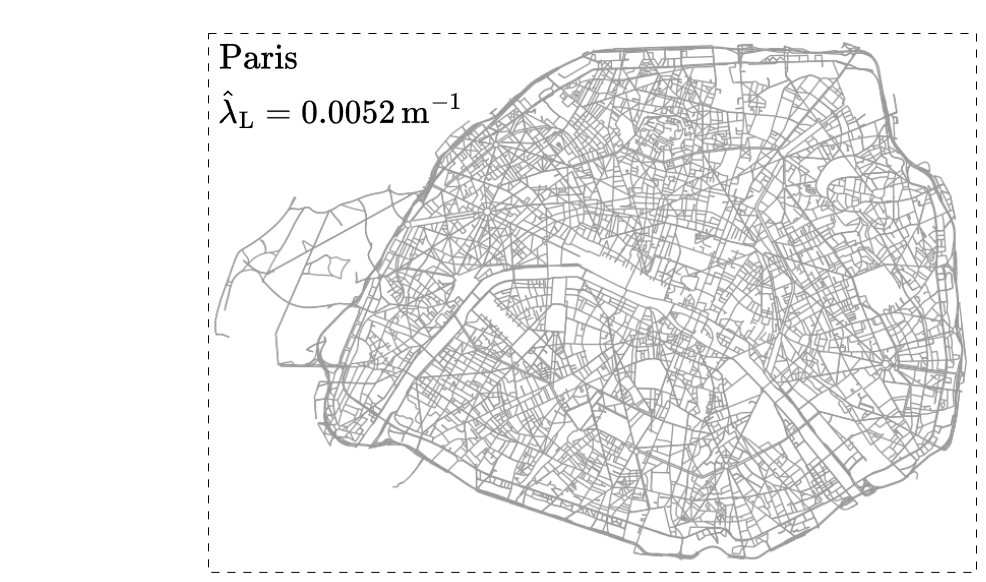} & \includegraphics[height=0.17\textwidth,width=0.2\textwidth]{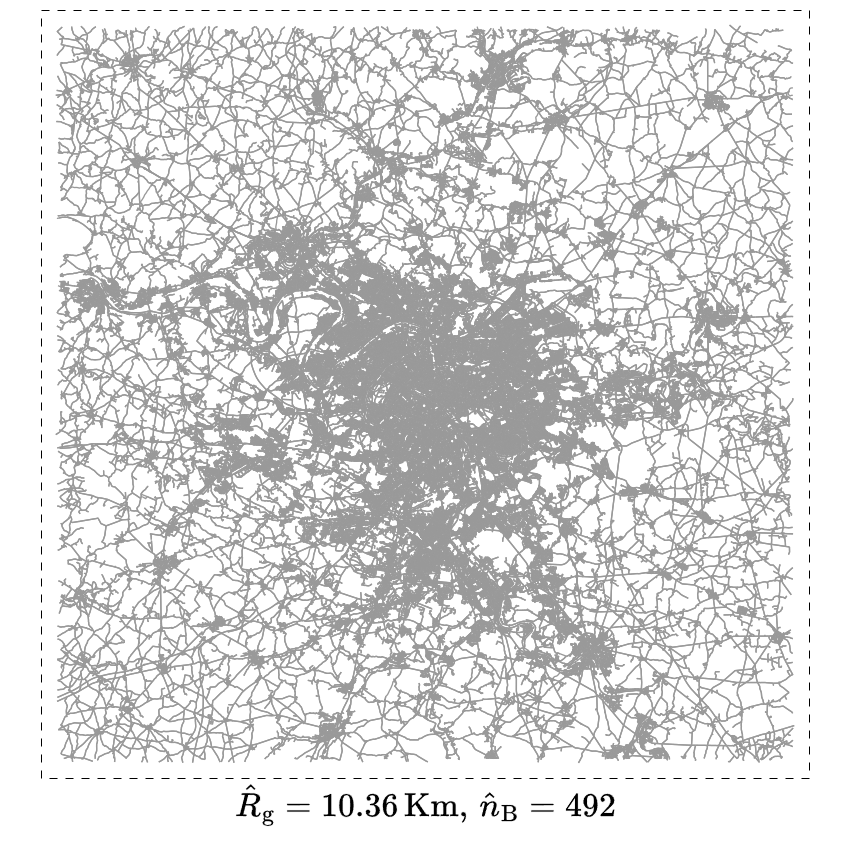} & \includegraphics[height=0.16\textwidth,width=0.2\textwidth]{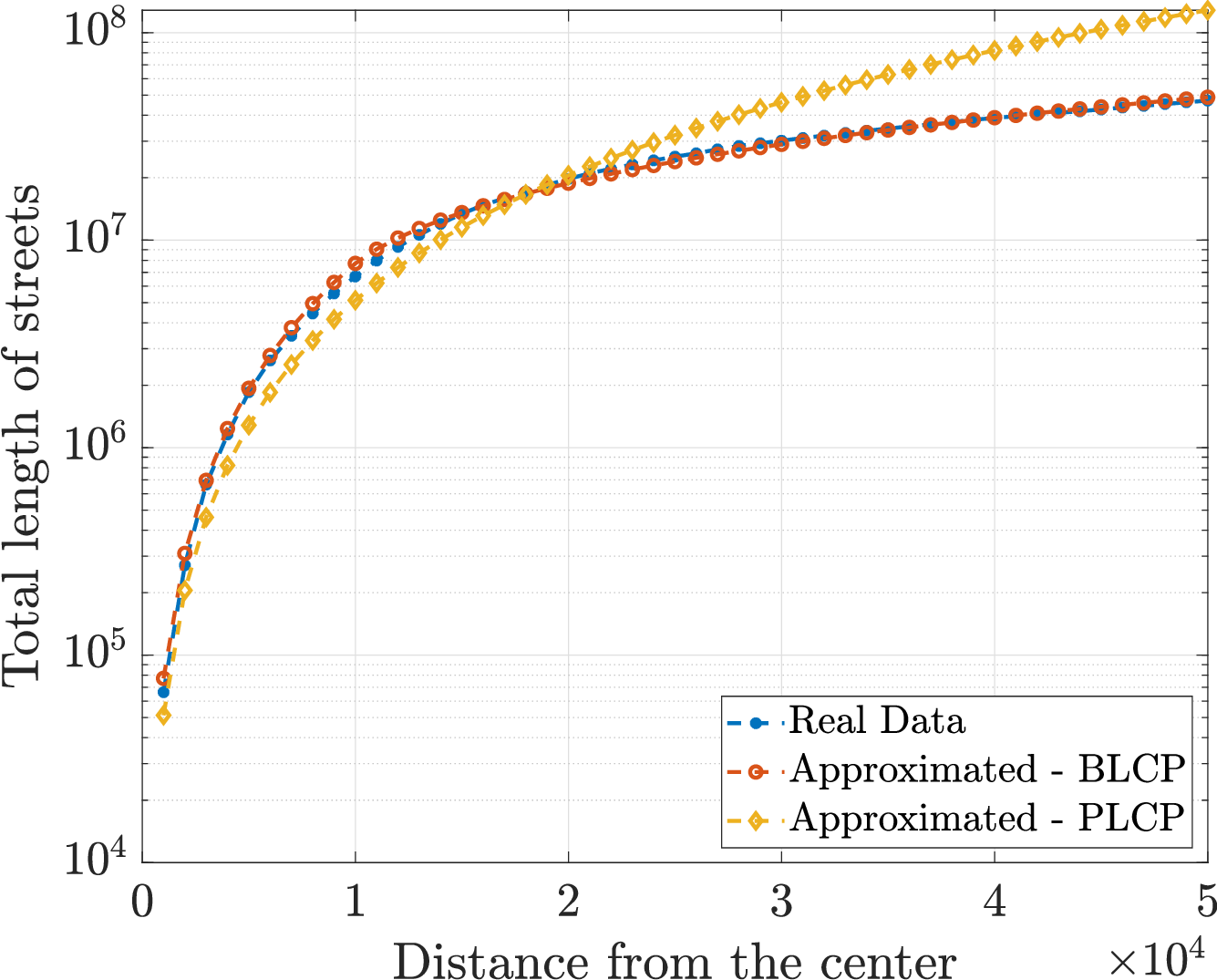} & \includegraphics[height=0.16\textwidth,width=0.2\textwidth]{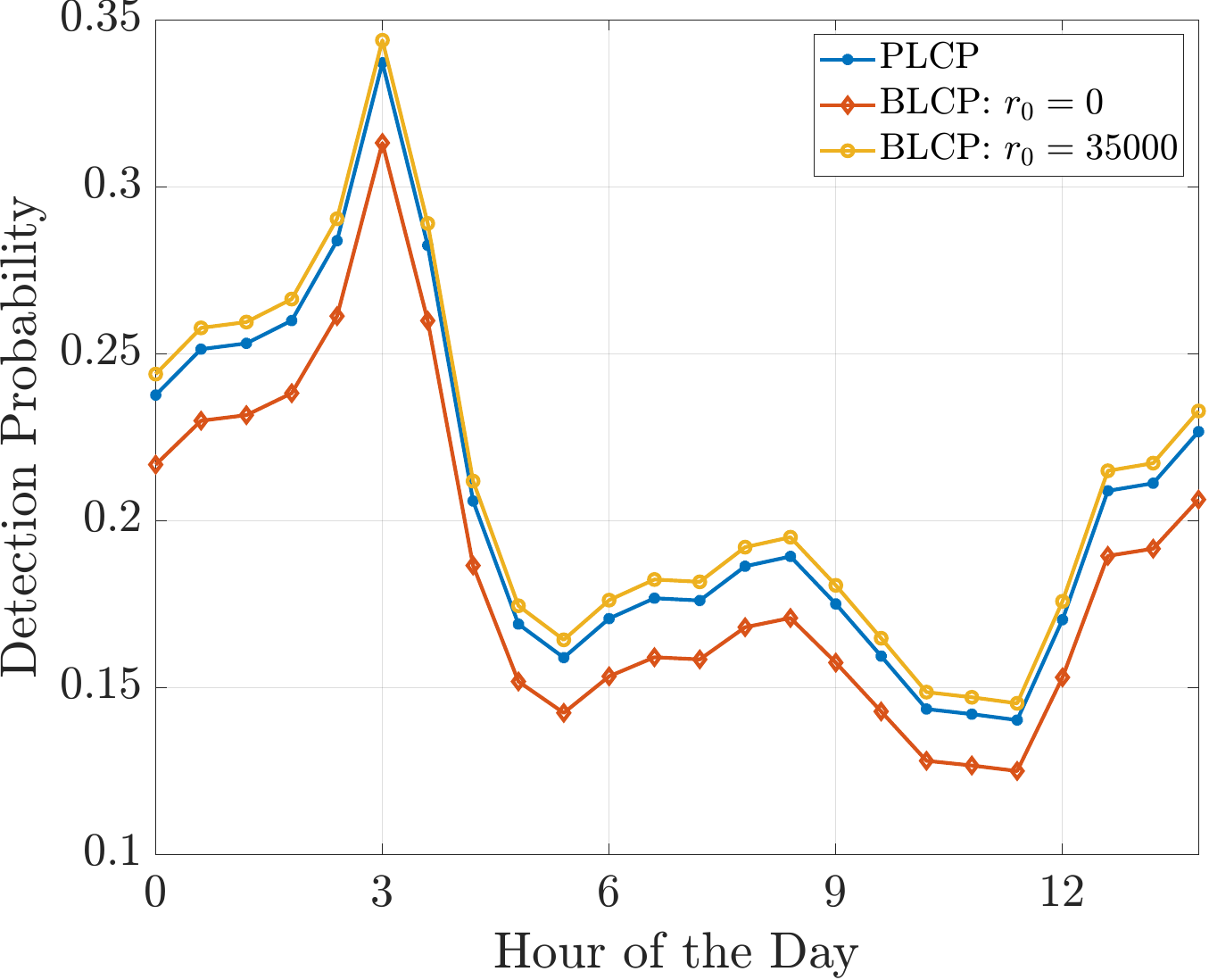} \\
\hline
\includegraphics[trim={7cm 0cm 0cm 0cm},clip,height=0.17\textwidth,width=0.2\textwidth]{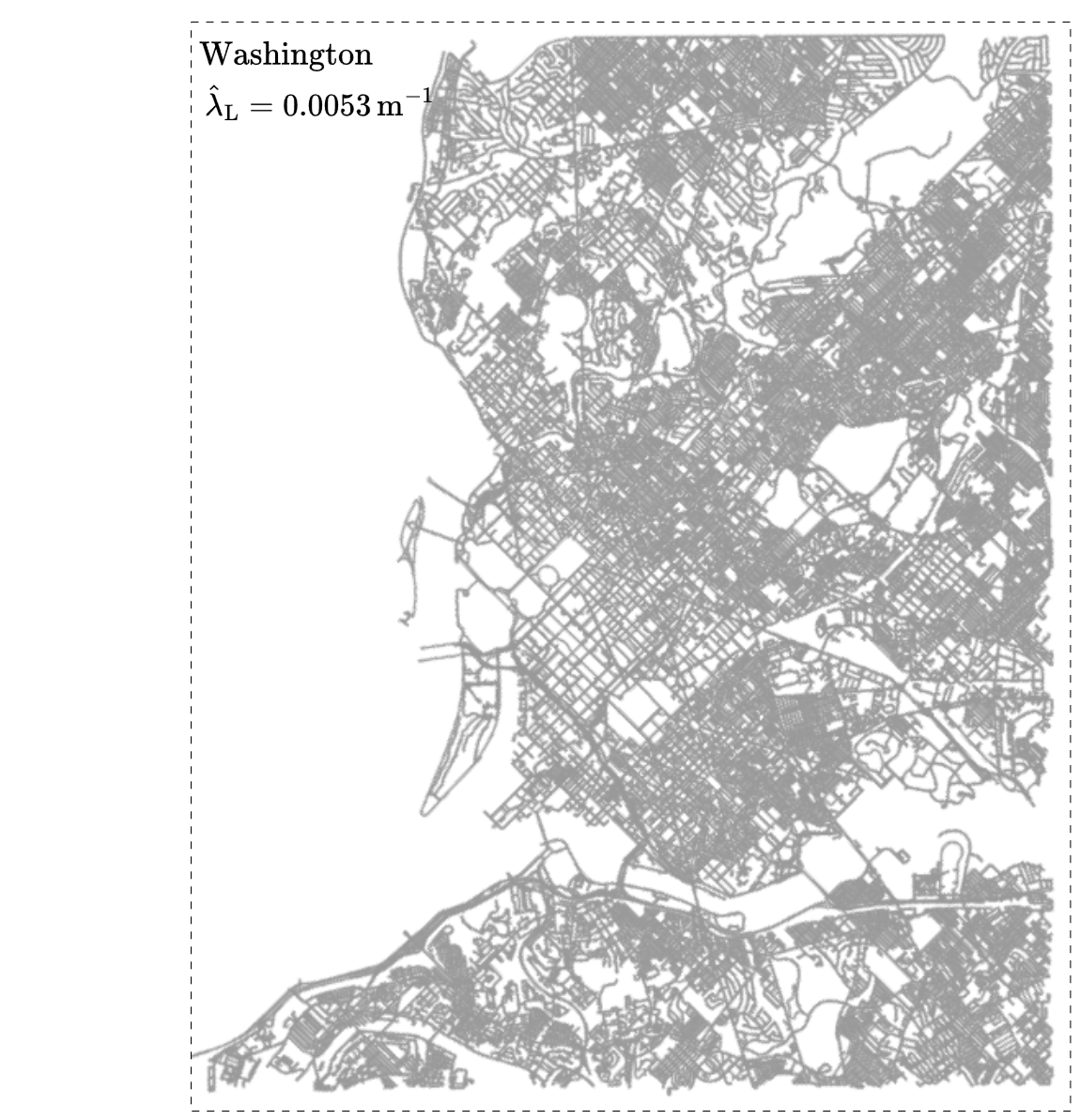} & \includegraphics[height=0.17\textwidth,width=0.2\textwidth]{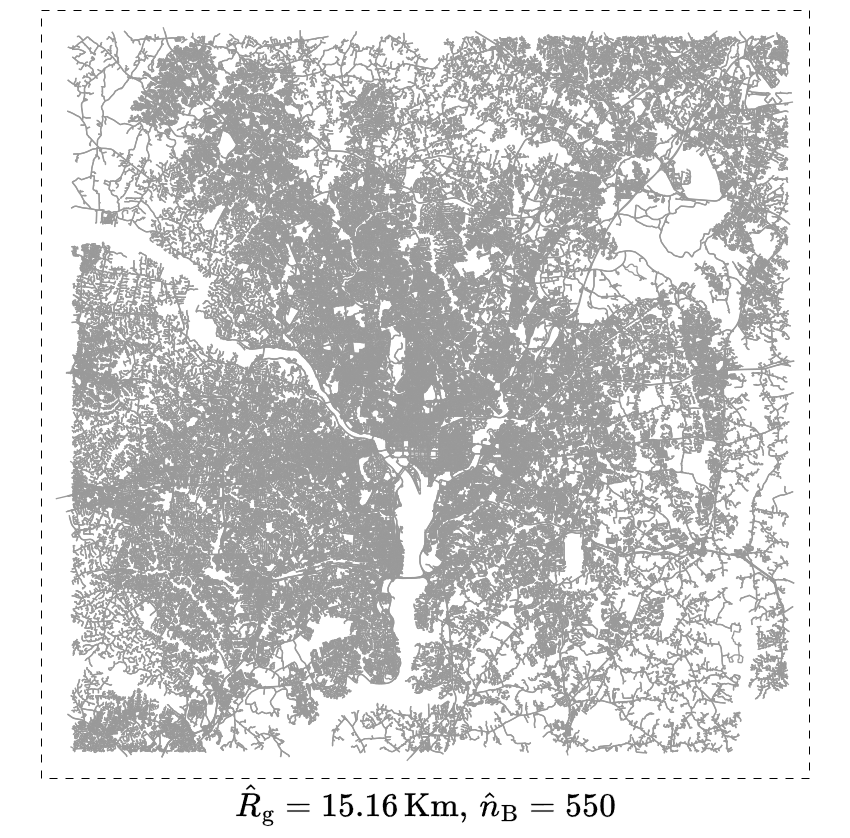} & \includegraphics[height=0.16\textwidth,width=0.2\textwidth]{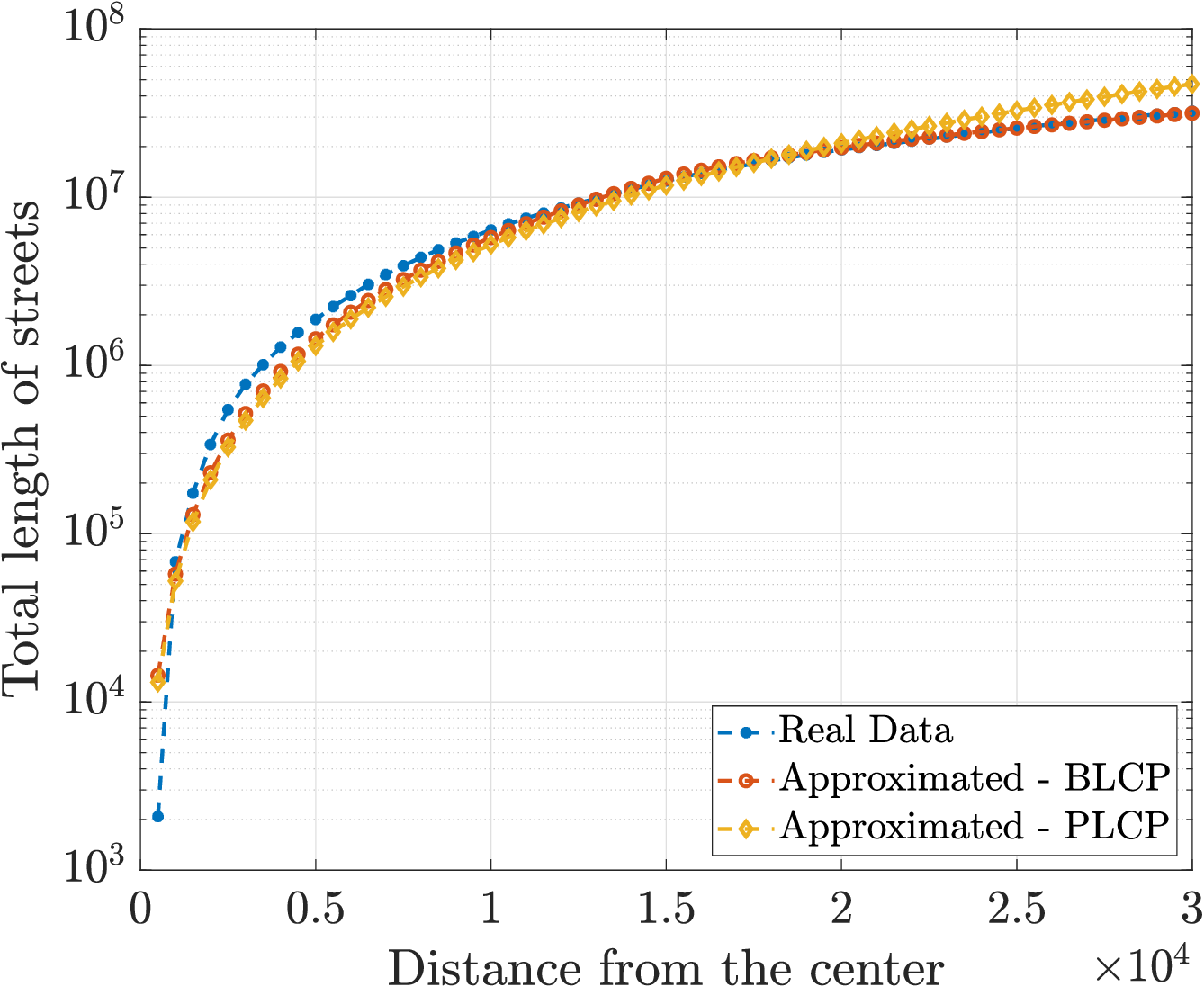} & \includegraphics[height=0.16\textwidth,width=0.2\textwidth]{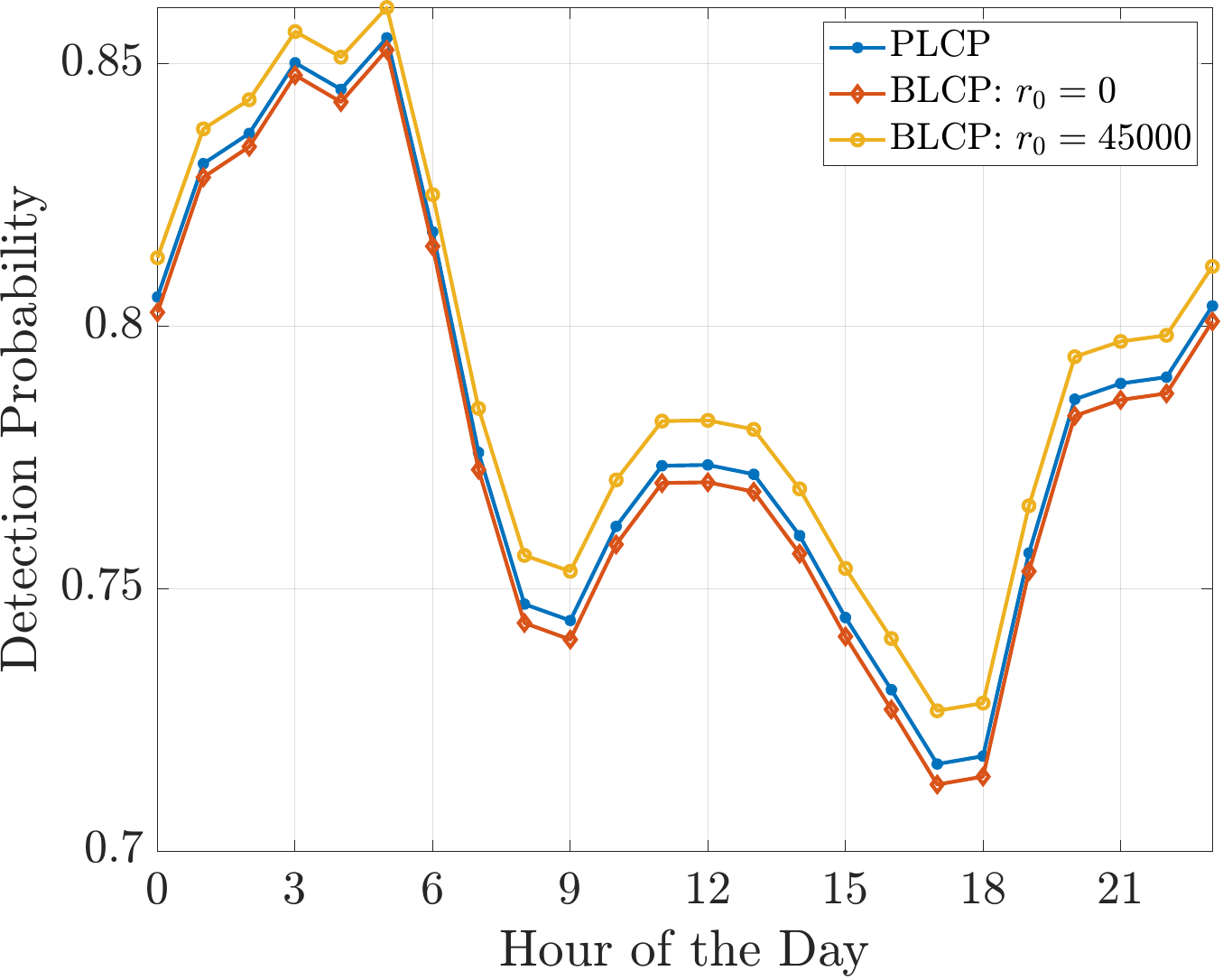} \\
\hline
\includegraphics[trim={8.5cm 0cm 0cm 0cm},clip,height=0.17\textwidth,width=0.2\textwidth]{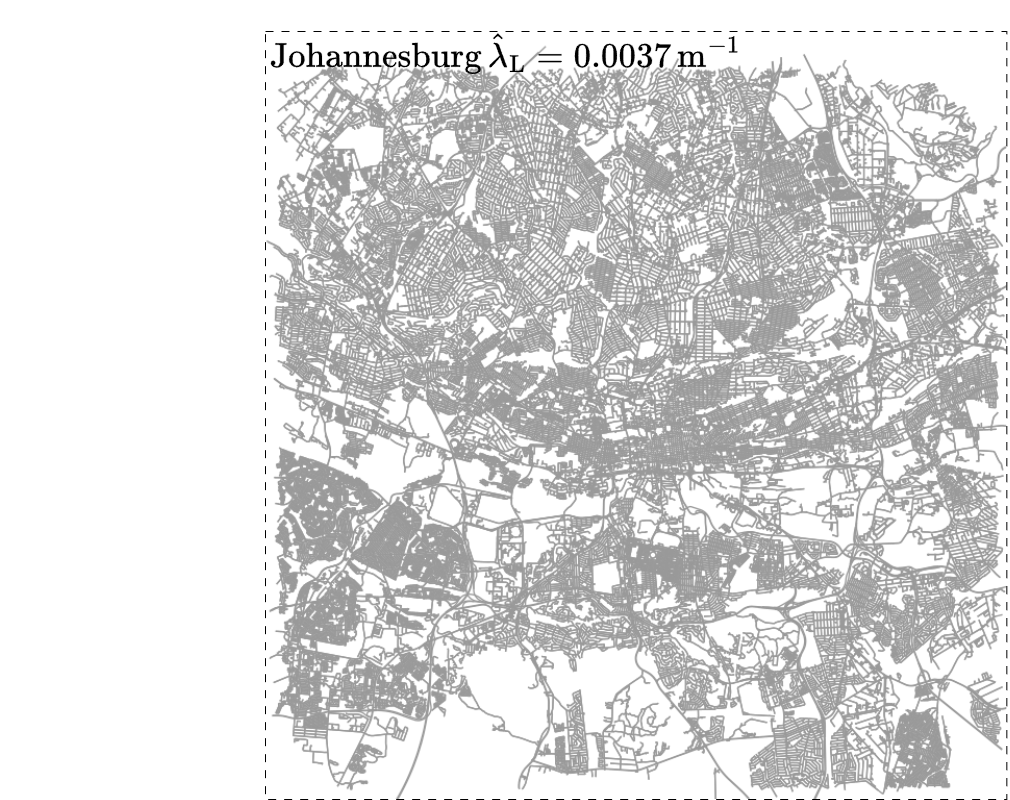} & \includegraphics[height=0.17\textwidth,width=0.2\textwidth]{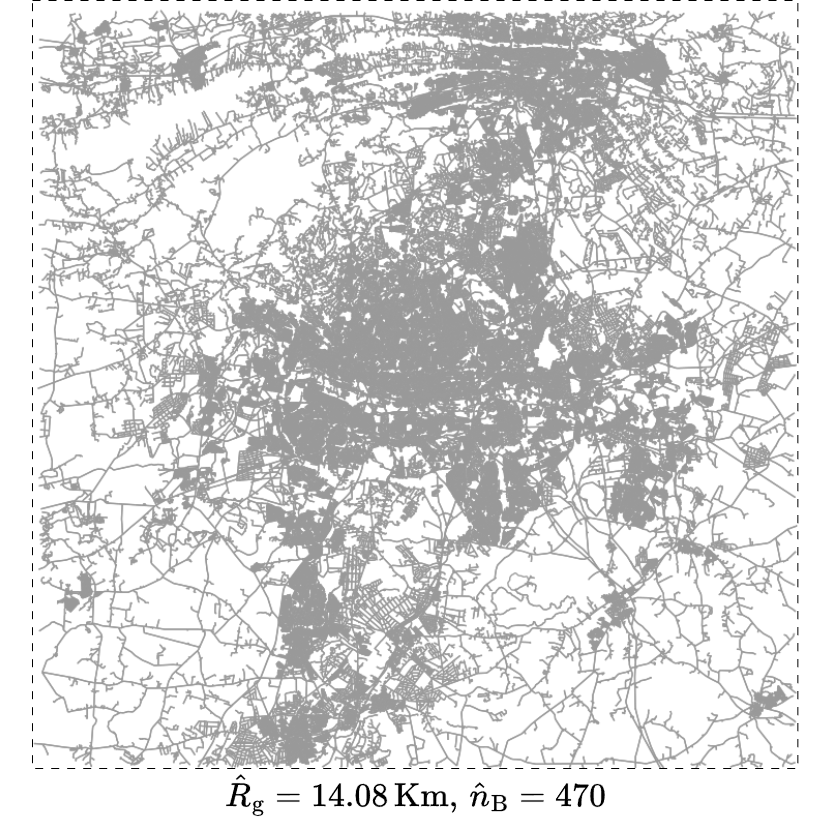} & \includegraphics[height=0.16\textwidth,width=0.2\textwidth]{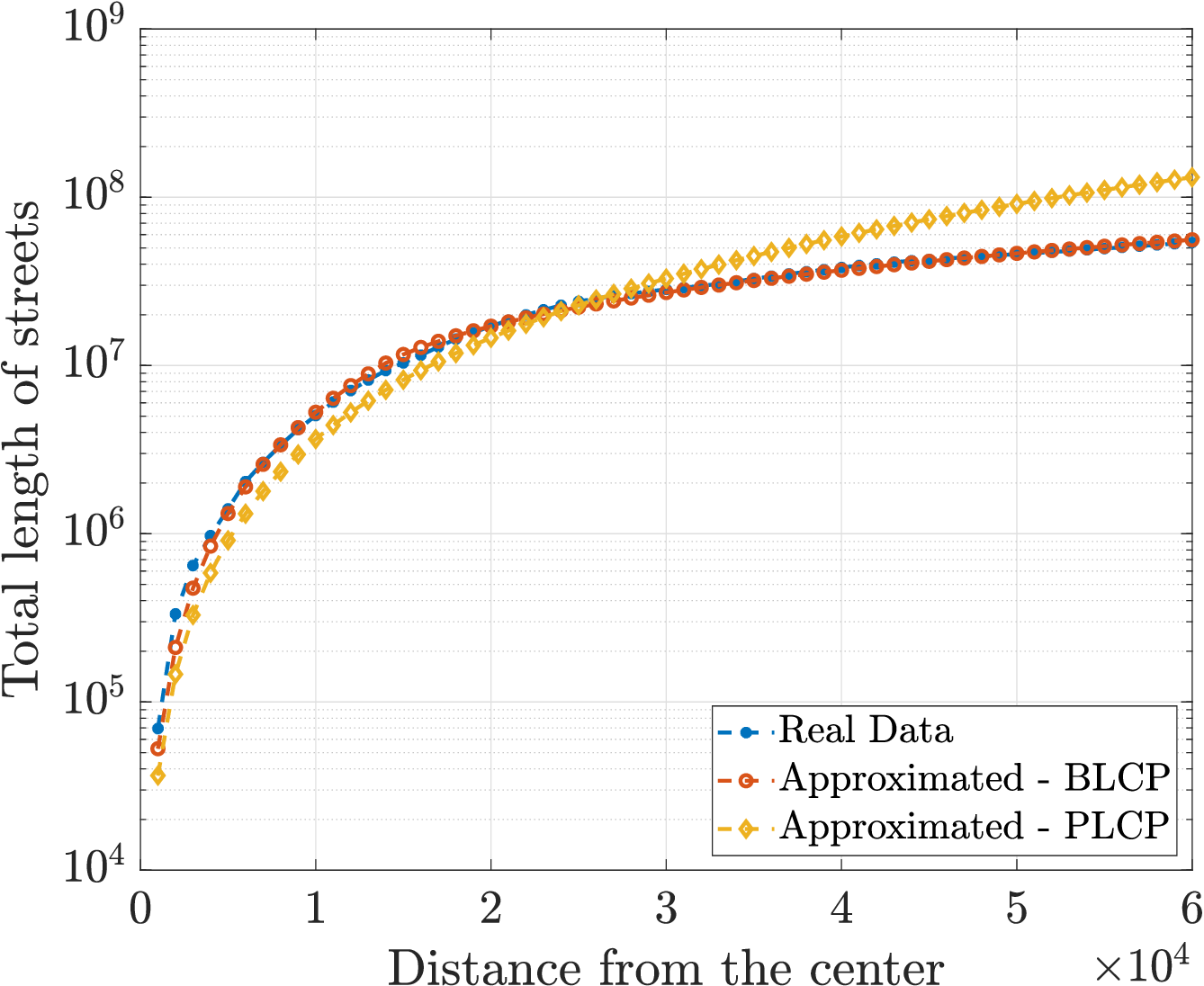} & \includegraphics[height=0.16\textwidth,width=0.2\textwidth]{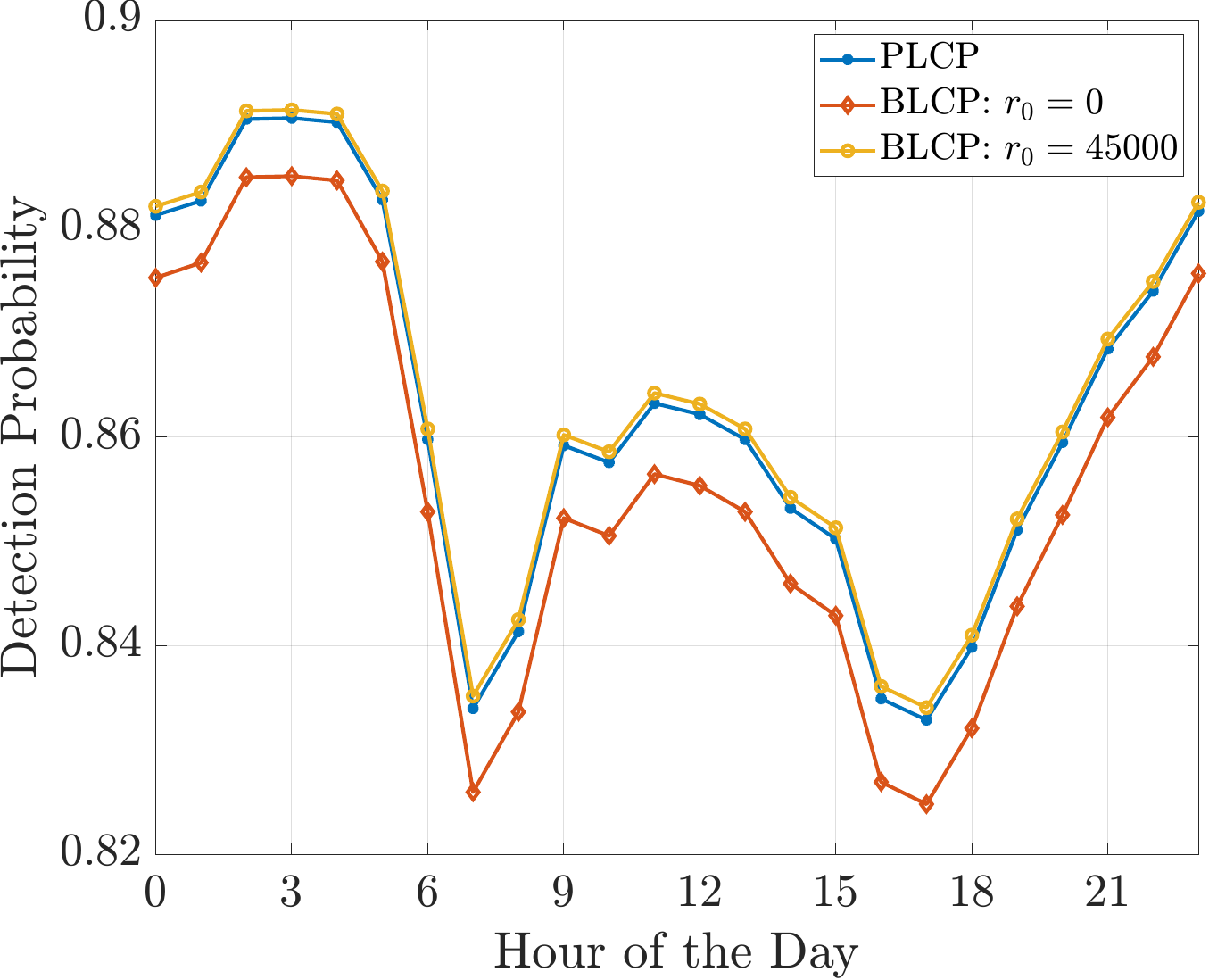} \\
\hline
\end{tabular}
\caption{City maps and their corresponding detection performance across the hour of the day}
\label{MainTable}
\end{table*}

Finally, we consider the vehicular density on the roads. This parameter is a function of the nature of a city - cities with extensive public transport generally have fewer vehicles - as well as the time of the day - peak hours have higher vehicular density than off-peak hours. 
Therefore, in the fourth column of Table~\ref{MainTable}, we plot the detection probability of an ego radar for the four cities modeled using the approximated parameters w.r.t to the hour of the day. To do so, we compute the vehicular density, i.e., $\lambda$ at each hour from the report in~\cite{hallenbeck1997vehicle}. The traffic density, or the hourly average number of vehicles, tends to have a bimodal profile with morning and evening peaks. In addition, we utilize the hourly congestion $C_i$ data provided by TomTom to factor the hourly fluctuations in congestion. From this data, we determine the hour of the day when the congestion is at its peak and then compute the average number of vehicles at this hour. Figure 12 of~\cite{hallenbeck1997vehicle} showcases the percent of average daily traffic w.r.t hour of the day, where the peak traffic occupancy percentage is around $8\%$. By taking the product of peak traffic occupancy percentage and total number of non-commercial vehicles present in the city and then dividing it by the total length of roads, it gives the approximated $\hat{\lambda}_{\max}$ at peak congestion hour. Assuming a linear relation between the intensity of vehicles and the congestion percentage, we approximate the $\lambda$ for all other hours employing the data from TomTom, using $\hat{\lambda}_i = \hat{\lambda}_{\max} C_i/C_{\max}$, where $\hat{\lambda}_i$ is the vehicular intensity, $C_i$ is the congestion percentage at $i\textsuperscript{th}$ hour, and $C_{\max}$ is the maximum congestion percentage. Fig.~\ref{fig:approxL} illustrates $\hat{\lambda}$ w.r.t hour of the day for the four cities. Here, Paris has a higher intensity of vehicles across all hours than the other three cities. In the fourth column, we plot the corresponding detection probability of an ego radar for both \ac{PLCP} and \ac{BLCP} frameworks with $r_0$ at the city center and city outskirts w.r.t to the hour of the day. As anticipated, the overall detection probability is the poorest in Paris, which has high traffic intensity, followed by Delhi, and Washingston, while Johannesburg performs best. These results are consistent with the traffic index ranking provided by TomTom. In each city, the radars perform best in the early morning hours between 3 and 5 am when vehicular traffic is at its lowest.

\subsection{Summary of Key Insights and Design Guidelines}
\label{sec:insights}
The numerical results provide several key physical insights and practical design guidelines. First, radar detection performance is not uniform but varies significantly with location, degrading by up to $40\%$ in dense urban cores compared to suburban areas. This strong spatial dependence underscores the need for cognitive radar systems that dynamically adjust their parameters based on the surrounding environment. Second, vehicular density emerges as the dominant factor affecting radar performance, outweighing other network parameters. This highlights that interference mitigation is fundamentally a congestion management challenge; reducing the number of simultaneous interferers is more effective than optimizing individual link quality alone. Third, extending $R_k$ beyond a certain threshold offers diminishing returns and can actually degrade performance due to heightened interference. Therefore, system designers should calibrate $R_k$ to match actual operational needs rather than pursuing maximal range. Finally, while the PLCP model suffices for analyzing homogeneous urban zones, the BLCP model is indispensable for city-scale planning, as it accurately captures the performance gradient from dense urban to suburban centers.

\section{Conclusion}
We presented a comprehensive analysis of large-scale automotive radar networks, considering the impact of street geometry and vehicular density. By employing \ac{SG}-based models, we account for homogeneous and inhomogeneous street distributions using \ac{PLCP} and \ac{BLCP} models, respectively, providing a complete understanding of radar performance in different urban contexts. Our models bridge simplistic, highway-based assumptions and the complexity of real urban environments and provide crucial insights into radar interference characterization. By deriving the effective interference set and the detection probability of an ego radar, we demonstrated how the radar performance can be significantly influenced by urban street layouts and traffic conditions. The \ac{PLCP} model excels at dense urban regions, while the \ac{BLCP} model captures the city center-to-outskirts transition. The validation of our models with real-world data further strengthens their applicability in urban radar network planning. This research helps automotive OEMs optimize radar systems for urban layouts and traffic conditions, improving ADAS safety and efficiency. 

Although the proposed Cox-process-based framework enables realistic and tractable modeling of large-scale radar interference, direct validation using existing real-world radar datasets or high-fidelity simulators is not feasible at the network scale considered in this work. Current public radar datasets are primarily designed for perception tasks and typically capture measurements from a single ego vehicle, without information on simultaneous multi-radar transmissions, beam interactions, or radar-to-radar interference. Consequently, they cannot be used to validate the SIR-based mutual interference statistics analyzed here. High-fidelity simulators also do not scale to dense, city-wide deployments. In contrast, our \ac{SG} framework enables scalable network-level evaluation by anchoring the spatial topology to real urban street layouts and time-of-day traffic intensities, thereby balancing realism and analytical tractability.


\bibliographystyle{ieeetr}
\bibliography{references}

@article{huang2019v2x,
  title={V2X-communication assisted interference minimization for automotive radars},
  author={Huang, Jingxuan and Fei, Zesong and Wang, Tianxiong and Wang, Xinyi and Liu, Fan and Zhou, Haijun and Zhang, J Andrew and Wei, Guohua},
  journal={China Communications},
  volume={16},
  number={10},
  pages={100--111},
  year={2019},
  publisher={IEEE}
}

@article{zhang2020vanet,
  title={VANET-assisted interference mitigation for millimeter-wave automotive radar sensors},
  author={Zhang, Mengyuan and He, Shibo and Yang, Chaoqun and Chen, Jiming and Zhang, Junshan},
  journal={IEEE Network},
  volume={34},
  number={2},
  pages={238--245},
  year={2020},
  publisher={IEEE}
}

@article{aydogdu2019radchat,
  title={RadChat: Spectrum sharing for automotive radar interference mitigation},
  author={Aydogdu, Canan and Keskin, Musa Furkan and Garcia, Nil and Wymeersch, Henk and Bliss, Daniel W},
  journal={IEEE Transactions on Intelligent Transportation Systems},
  volume={22},
  number={1},
  pages={416--429},
  year={2019},
  publisher={IEEE}
}

@article{wang2023performance,
  title={Performance analysis of uncoordinated interference mitigation for automotive radar},
  author={Wang, Yi and Zhang, Qixun and Wei, Zhiqing and Kui, Liping and Liu, Fan and Feng, Zhiyong},
  journal={IEEE Transactions on Vehicular Technology},
  year={2023},
  publisher={IEEE}
}

@article{wang2024network,
  title={Network-level analysis of integrated sensing and communication using stochastic geometry},
  author={Wang, Ruibo and Belmekki, Baha Eddine Youcef and Zhang, Xue and Alouini, Mohamed-Slim},
  journal={arXiv preprint arXiv:2404.13197},
  year={2024}
}

@inproceedings{xu2024stochastic,
  title={Stochastic geometry analysis for distributed RISs-assisted mmWave communications},
  author={Xu, Yuan and Li, Wei and Huang, Chongwen and Zhu, Yongxu and Yang, Zhaohui and Yang, Jun and He, Jiguang and Zhang, Zhaoyang and Debbah, M{\'e}rouane},
  booktitle={2024 IEEE 99th Vehicular Technology Conference (VTC2024-Spring)},
  pages={1--6},
  year={2024},
  organization={IEEE}
}

@article{armeniakos2025stochastic,
  title={Stochastic Geometry for Modeling and Analysis of Sensing and Communications: A Survey},
  author={Armeniakos, Harris K and Bithas, Petros S and Tegos, Sotiris A and Kanatas, Athanasios G and Karagiannidis, George K},
  journal={IEEE Communications Surveys \& Tutorials},
  year={2025},
  publisher={IEEE}
}

@article{nabil2024beamwidth,
  title={Beamwidth design tradeoffs in radar-aided millimeter-wave cellular networks: A stochastic geometry approach},
  author={Nabil, Yasser and ElSawy, Hesham and Al-Dharrab, Suhail and Mostafa, Hassan and Attia, Hussein},
  journal={IEEE Access},
  volume={12},
  pages={26196--26211},
  year={2024},
  publisher={IEEE}
}

@article{nazar2025multi,
  title={Multi-Modal Sensor Fusion for Proactive Blockage Prediction in mmWave Vehicular Networks},
  author={Nazar, Ahmad M and Celik, Abdulkadir and Selim, Mohamed Y and Abdallah, Asmaa and Qiao, Daji and Eltawil, Ahmed M},
  journal={arXiv preprint arXiv:2507.15769},
  year={2025}
}

@inproceedings{kumar2024stochastic,
  title={Stochastic Geometry Analysis of Radar-Communication Co-Existence in Vehicular Networks},
  author={Kumar, Ankit and Mehrotra, Astitva and Ghatak, Gourab},
  booktitle={2024 22nd International Symposium on Modeling and Optimization in Mobile, Ad Hoc, and Wireless Networks (WiOpt)},
  pages={38--45},
  year={2024},
  organization={IEEE}
}

@inproceedings{maccartney2013path,
  title={Path loss models for 5G millimeter wave propagation channels in urban microcells},
  author={MacCartney, George R and others},
  booktitle={2013 IEEE global communications conference (GLOBECOM)},
  pages={3948--3953},
  year={2013},
  organization={IEEE}
}

@article{chetlur2019coverage,
  title={{Coverage and rate analysis of downlink cellular vehicle-to-everything (C-V2X) communication}},
  author={Chetlur, Vishnu Vardhan and Dhillon, Harpreet S},
  journal={IEEE Transactions on Wireless Communications},
  volume={19},
  number={3},
  pages={1738--1753},
  year={2019},
  publisher={IEEE}
}

@article{choi2023coverage,
  title={Los coverage area in vehicular networks with cox-distributed roadside units and relays},
  author={Choi, Chang-Sik and Baccelli, Fran{\c{c}}ois},
  journal={IEEE Transactions on Vehicular Technology},
  volume={72},
  number={6},
  pages={7772--7782},
  year={2023},
  publisher={IEEE}
}

@article{jeyaraj2021transdimensional,
  title={The transdimensional Poisson process for vehicular network analysis},
  author={Jeyaraj, Jeya Pradha and others},
  journal={IEEE Transactions on Wireless Communications},
  volume={20},
  number={12},
  pages={8023--8038},
  year={2021},
  publisher={IEEE}
}

@techreport{hallenbeck1997vehicle,
  title={Vehicle Volume Distributions by Classifications},
  author={Hallenbeck, Mark E and others},
  year={1997},
  institution={United States. Federal Highway Administration}
}

@article{boeing2025modeling,
  title={Modeling and analyzing urban networks and amenities with OSMnx},
  author={Boeing, Geoff},
  journal={Geographical Analysis},
  year={2025},
  publisher={Wiley Online Library}
}

@inproceedings{mishra2020stochastic,
  title={Stochastic-geometry-Based interference modeling in automotive radars using matern hard-core process},
  author={Mishra, Kumar Vijay and others},
  booktitle={2020 IEEE Radar Conference (RadarConf20)},
  pages={1--5},
  year={2020},
  organization={IEEE}
}

@inproceedings{kui2021interference,
  title={Interference analysis for automotive radar using matern hard-core point process},
  author={Kui, Liping and others},
  booktitle={2021 IEEE/CIC International Conference on Communications in China (ICCC)},
  pages={817--822},
  year={2021},
  organization={IEEE}
}

@article{chu2020interference,
  title={Interference characterization and power optimization for automotive radar with directional antenna},
  author={Chu, Ping and others},
  journal={IEEE Transactions on Vehicular Technology},
  volume={69},
  number={4},
  pages={3703--3716},
  year={2020},
  publisher={IEEE}
}

@article{goppelt2010automotive,
  title={Automotive radar--investigation of mutual interference mechanisms},
  author={Goppelt, Markus and others},
  journal={Advances in Radio Science},
  volume={8},
  pages={55--60},
  year={2010},
  publisher={Copernicus Publications G{\"o}ttingen, Germany}
}

@article{hartenstein2008tutorial,
  title={A tutorial survey on vehicular ad hoc networks},
  author={Hartenstein, Hannes and Laberteaux, LP},
  journal={IEEE Communications magazine},
  volume={46},
  number={6},
  pages={164--171},
  year={2008},
  publisher={IEEE}
}

@article{lu2014connected,
  title={Connected vehicles: Solutions and challenges},
  author={Lu, Ning and others},
  journal={IEEE internet of things journal},
  volume={1},
  number={4},
  pages={289--299},
  year={2014},
  publisher={IEEE}
}

@article{waldschmidt2021automotive,
  title={Automotive radar—From first efforts to future systems},
  author={Waldschmidt, Christian and others},
  journal={IEEE Journal of Microwaves},
  volume={1},
  number={1},
  pages={135--148},
  year={2021},
  publisher={IEEE}
}

@article{bilik2019rise,
  title={The rise of radar for autonomous vehicles: Signal processing solutions and future research directions},
  author={Bilik, Igal and others},
  journal={IEEE signal processing Magazine},
  volume={36},
  number={5},
  pages={20--31},
  year={2019},
  publisher={IEEE}
}

@article{meinel2013automotive,
  title={Automotive Radar: From Its Origins to Future Directions.},
  author={Meinel, Holger H and Dickmann, Juergen},
  journal={Microwave Journal},
  volume={56},
  number={9},
  year={2013}
}

@article{grimes1974automotive,
  title={Automotive radar: A brief review},
  author={Grimes, Dale M and others},
  journal={Proceedings of the IEEE},
  volume={62},
  number={6},
  pages={804--822},
  year={1974},
  publisher={IEEE}
}

@article{shah2024binomial,
  title={Binomial line cox processes: Statistical characterization and applications in wireless network analysis},
  author={Shah, Mohammad Taha and others},
  journal={IEEE Transactions on Wireless Communications},
  year={2024},
  publisher={IEEE}
}

@article{series2014systems,
  title={Systems characteristics of automotive radars operating in the frequency band 76--81 GHz for intelligent transport. systems applications},
  author={Series, M},
  journal={Recommendation ITU-R, M},
  pages={2057--1},
  year={2014}
}

@article{brooker2007mutual,
  title={Mutual interference of millimeter-wave radar systems},
  author={Brooker, Graham M},
  journal={IEEE Transactions on Electromagnetic Compatibility},
  volume={49},
  number={1},
  pages={170--181},
  year={2007},
  publisher={IEEE}
}

@article{alland2019interference,
  title={Interference in automotive radar systems: Characteristics, mitigation techniques, and current and future research},
  author={Alland, Stephen and others},
  journal={IEEE Signal Processing Magazine},
  volume={36},
  number={5},
  pages={45--59},
  year={2019},
  publisher={IEEE}
}

@article{al2017stochastic,
  title={Stochastic geometry methods for modeling automotive radar interference},
  author={Al-Hourani, Akram and others},
  journal={IEEE Transactions on Intelligent Transportation Systems},
  volume={19},
  number={2},
  pages={333--344},
  year={2017},
  publisher={IEEE}
}

@article{munari2018stochastic,
  title={Stochastic geometry interference analysis of radar network performance},
  author={Munari, Andrea and others},
  journal={IEEE Communications Letters},
  volume={22},
  number={11},
  pages={2362--2365},
  year={2018},
  publisher={IEEE}
}

@article{fang2020stochastic,
  title={Stochastic geometry for automotive radar interference with RCS characteristics},
  author={Fang, Zixi and others},
  journal={IEEE Wireless Communications Letters},
  volume={9},
  number={11},
  pages={1817--1820},
  year={2020},
  publisher={IEEE}
}

@article{schipper2015simulative,
  title={Simulative prediction of the interference potential between radars in common road scenarios},
  author={Schipper, Tom and others},
  journal={IEEE Transactions on Electromagnetic Compatibility},
  volume={57},
  number={3},
  pages={322--328},
  year={2015},
  publisher={IEEE}
}

@article{shnidman2003expanded,
  title={Expanded Swerling target models},
  author={Shnidman, DA},
  journal={IEEE Transactions on Aerospace and Electronic Systems},
  volume={39},
  number={3},
  pages={1059--1069},
  year={2003},
  publisher={IEEE}
}

@inproceedings{goppelt2011analytical,
  title={Analytical investigation of mutual interference between automotive FMCW radar sensors},
  author={Goppelt, Markus and others},
  booktitle={2011 German Microwave Conference},
  pages={1--4},
  year={2011},
  organization={IEEE}
}

@article{ghatak2022radar,
  title={Radar detection in vehicular networks: Fine-grained analysis and optimal channel access},
  author={Ghatak, Gourab and others},
  journal={IEEE Transactions on Vehicular Technology},
  volume={71},
  number={6},
  pages={6671--6681},
  year={2022},
  publisher={IEEE}
}

@article{ram2022optimization,
  title={Optimization of network throughput of joint radar communication system using stochastic geometry},
  author={Ram, Shobha Sundar and others},
  journal={Frontiers in Signal Processing},
  volume={2},
  pages={835743},
  year={2022},
  publisher={Frontiers}
}

@book{dhillon2020poisson,
  title={Poisson line Cox process: Foundations and applications to vehicular networks},
  author={Dhillon, Harpreet S and Chetlur, Vishnu Vardhan},
  year={2020},
  publisher={Springer}
}

@inproceedings{ram2020estimating,
  title={Estimating radar detection coverage probability of targets in a cluttered environment using stochastic geometry},
  author={Ram, Shobha Sundar and others},
  booktitle={2020 IEEE International Radar Conference (RADAR)},
  pages={665--670},
  year={2020},
  organization={IEEE}
}

@article{ram2021optimization,
  title={Optimization of Radar Parameters for Maximum Detection Probability Under Generalized Discrete Clutter Conditions Using Stochastic Geometry},
  author={Ram, Shobha Sundar and others},
  journal={IEEE Open Journal of Signal Processing},
  volume={2},
  pages={571--585},
  year={2021},
  publisher={IEEE}
}

@inproceedings{singhal2023leo,
  title={LEO/MEO-Based Multi-static Passive Radar Detection Performance Analysis Using Stochastic Geometry},
  author={Singhal, Shubhi and others},
  booktitle={2023 IEEE Radar Conference (RadarConf23)},
  pages={1--6},
  year={2023},
  organization={IEEE}
}

@inproceedings{ram2022estimation,
  title={Estimation of Bistatic Radar Detection Performance Under Discrete Clutter Conditions Using Stochastic Geometry},
  author={Ram, Shobha Sundar and Ghatak, Gourab},
  booktitle={IEEE Radar Conference 2022},
  year={2022}
}

@ARTICLE{shah2024fine,
  author={Shah, Mohammad Taha and Ghatak, Gourab and Ram, Shobha Sundar},
  journal={IEEE Transactions on Signal Processing}, 
  title={Fine Grained Analysis and Optimization of Large Scale Automotive Radar Networks}, 
  year={2025},
  volume={73},
  number={},
  pages={4331-4346},
  doi={10.1109/TSP.2025.3586880}
}

@article{ghatak2025distribution,
  title={Distribution Bounds on the Conditional ROC in a Poisson Field of Interferers and Clutters},
  author={Ghatak, Gourab},
  journal={arXiv preprint arXiv:2505.21456},
  year={2025}
}

\appendices
\section{Proof of Theorem 1}
\label{pr:thm_1}
The objective of this proof is to determine the interfering distance bounds $a_{\mathrm{B},i}$ and $b_{\mathrm{B},i}$ which define the segment of line $L_i$ that lies within the ego radar's bounded radar sector $\mathcal{S}^{\rightarrow}_{\mathbf{u}_0,\mathrm{B}}$ (see Definition~1). Because this sector is both \textit{angularly limited} (half-beamwidth $\Omega$) and \textit{radially bounded} (maximum range $R_{\mathrm{B}}$), the intersection of an infinite line $L_i$ with the sector depends critically on two geometric quantities:
\begin{itemize}
    \item the signed distance $d_i = r_i / \sin\theta_i - r_0$ from the ego radar to the point where $L_i$ crosses the ego street $L_0$,
    \item the angle $\theta_i \in [0,2\pi)$ that orients $L_i$ relative to the ego boresight.
\end{itemize} 
The derivation is entirely geometric and follows from the exact intersection between a bounded radar sector and a randomly oriented line. Depending on the angle $\theta_i$ and the offset $d_i$ , the line may intersect (i) the circular arc of the sector, (ii) one of the two straight radial edges, or (iii) both simultaneously at a corner point. Furthermore, the intersection may lie ahead of, behind, or outside the operational range of the ego radar. These geometric configurations partition the $\left(\theta_i, d_i\right)$ space into disjoint regions, each requiring a distinct trigonometric treatment. Crucially, due to the four-fold symmetry of the plane, we can derive expressions for $\theta_i \in [0,\pi/2]$ and extend them to all quadrants using absolute values of sine and cosine, this avoids redundant case enumeration.

Without loss of generality, we place the ego radar at any arbitrary location $(0,r_0)$ and align its boresight along the positive $y$-axis i.e. $\mathbf{a} = (-1, 0)$. The sign of $d_i$ distinguishes whether the intersection between $L_i$ and the ego street $L_0$ occurs ahead of ($d_i \ge 0$) or behind ($d_i < 0$) the ego radar. Interference is possible only when $L_i$ intersects the radar sector. We therefore restrict attention to orientations $\theta_i \in \Theta$, where $\Theta$ is the union of angular intervals defined as 
\begin{align}
\mathrm{A}_i = 
\begin{cases}
    \mathrm{A_1} =  \left\{\theta_i \colon 0 \leq \theta_i \leq \Omega \;\mathrm{and}\; 0 \leq \theta_i \leq \frac{\pi}{2} \right\} \\
    \mathrm{A_2} =  \left\{\theta_i \colon \Omega < \theta_i \leq 2\Omega \;\mathrm{and}\; 0 \leq \theta_i \leq \frac{\pi}{2} \right\} \\
    \mathrm{A_3} =  \left\{\theta_i \colon \Omega \leq \pi - \theta_i \leq 2\Omega \;\mathrm{and}\; \frac{\pi}{2} < \theta_i \leq \pi \right\} \\
    \mathrm{A_4} =  \left\{\theta_i \colon 0 \leq \pi - \theta_i < \Omega \;\mathrm{and}\; \frac{\pi}{2} < \theta_i \leq \pi \right\}\\
    \mathrm{A_5} =  \left\{\theta_i \colon 0 \leq \theta_i - \pi \leq \Omega \;\mathrm{and}\; \pi < \theta_i \leq \frac{3\pi}{2} \right\}\\
    \mathrm{A_6} =  \left\{\theta_i \colon \Omega < \theta_i - \pi \leq 2\Omega \;\mathrm{and}\; \pi < \theta_i \leq \frac{3\pi}{2} \right\}\\
    \mathrm{A_7} =  \left\{\theta_i \colon \Omega \leq 2\pi - \theta_i \leq 2\Omega \;\mathrm{and}\; \frac{3\pi}{2} < \theta_i \leq 2\pi \right\}\\
    \mathrm{A_8} =  \left\{\theta_i \colon 0 \leq 2\pi - \theta_i < \Omega \;\mathrm{and}\; \frac{3\pi}{2} < \theta_i \leq 2\pi \right\}.
\end{cases}
\label{eq:app_1}
\end{align}
Let $\Theta$ be the union of events $\mathrm{A}_i$ i.e. $\Theta = \cup_i \mathrm{A}_i$. These intervals collectively represent all street orientations for which the line can intersect the radar sector. The first two events of~\eqref{eq:app_1}, i.e., $\left[\mathrm{A_1}\cup \mathrm{A_2}\right]$ corresponds to the scenario when generating angle is in the first quadrant, likewise event $\left[\mathrm{A_3}\cup \mathrm{A_4}\right]$, $\left[\mathrm{A_5}\cup \mathrm{A_6}\right]$, and $\left[\mathrm{A_7}\cup \mathrm{A_8}\right]$ corresponding to scenario when the generating angle lies in the second, third and fourth quadrant. We now begin to derive the interfering distance, starting with case 1. 

All possible configurations fall into three mutually exclusive geometric cases, illustrated in Fig.~\ref{fig:case1_1},~\ref{fig:case1_3},~\ref{fig:case2}, and~\ref{fig:case3}. Figure~\ref{fig:case1_1},~\ref{fig:case1_3} represent \textbf{case 1} i.e. intersection occurs inside the radar sector. Figure~\ref{fig:case2} and~\ref{fig:case3} represent \textbf{case 2} i.e. intersection occurs ahead of the ego radar but outside the sector and \textbf{case 3} i.e. intersection occurs behind the ego radar respectively.

\textbf{Case 1:} Point of intersection occurs within the radar sector. This is the most common and geometrically rich case. The line $L_i$ intersects the radar sector either through the circular arc, the sector edges, or both. Accordingly, we distinguish two subcases.\\
\textbf{Case 1(a): Arc Intersection}\\
Conditions:
\begin{align*}
    & 0 \leq d_i \leq R_{\rm B}, \;\;\text{and} \\
    & \theta_i \in [0, 2\pi] \setminus \big\{[\Omega, \pi - \Omega] \cup [\pi + \Omega, 2\pi - \Omega]\big\}
\end{align*}
In this subcase, events $\mathrm{A_1} \cup \mathrm{A_4} \cup \mathrm{A_5} \cup \mathrm{A_8}$ take place. Fig.~\ref{fig:case1_1} and~\ref{fig:case1_2} illustrates an example of this subcase, wherein the $\theta_i$ is generated in first quadrant i.e. $\theta_i \in \mathrm{A_1}$. In this subcase, $L_i$ intersects the curved boundary of the radar sector. From Fig.~\ref{fig:case1_2}, we observe that the minimum interfering distance $a_{{\rm B},i}$ is calculated from point $\mathrm{X}$ to $\mathrm{Y}$ where the negative sign reflects measurement along the line relative to the intersection point. Likewise maximum interfering distance $b_{{\rm },i}$ is calculated from point $\mathrm{X}$ to $\mathrm{V}$. The purple points represent the two end values of $d_i$. For simpler representation, $\mathrm{XY}$ represents the distance between points $\mathrm{X}$ and $\mathrm{Y}$. Thus $a_{{\rm B},i}$ is then found as
\begin{align}
    a_{{\rm B},i} &= -\mathrm{XY} = -\left(\mathrm{XZ} - \mathrm{YZ}\right) = -\left(d_i \cos\theta_i - \mathrm{OZ} \cot\Omega\right) \nonumber\\
    &=  -\left(d_i \cos\theta_i - d_i\sin\theta_i \cot\Omega\right).
    \label{eq:eq_appx_1}
\end{align}
To determine $a_{{\rm B},i}$ for events $[\mathrm{A_4}\; \mathrm{A_5}\; \mathrm{A_8}]$, we take the modulus of sine and cosine to accommodate all four quadrants. Likewise we can determine maximum interfering distance $b_{{\rm B},i}$ as
\begin{align}
    b_{{\rm B},i} &= \mathrm{XV} = \mathrm{ZV} - \mathrm{ZX} = \sqrt{\mathrm{OV}^2 - \mathrm{OZ}^2} - \mathrm{ZX} \nonumber\\
    &= \sqrt{R_{\rm B}^2 - \left(d_i \sin\theta_i\right)^2} - d_i\cos\theta_i.
    \label{eq:eq_appx_2}
\end{align}
\begin{figure*}[t]
\centering
\subfloat[]
{\includegraphics[width=0.26\textwidth]{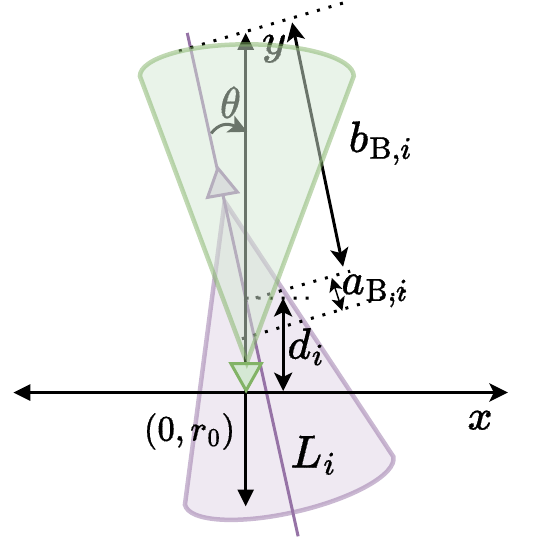}
\label{fig:case1_1}}
\hfil
\subfloat[]
{\includegraphics[trim={0cm 0cm 0cm 1cm},clip,width=0.2\textwidth]{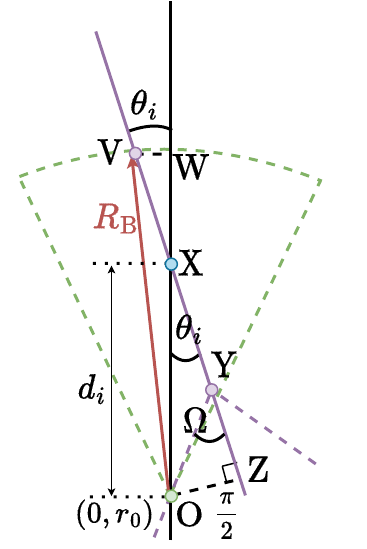}
\label{fig:case1_2}}
\hfil
\subfloat[]
{\includegraphics[width=0.28\textwidth]{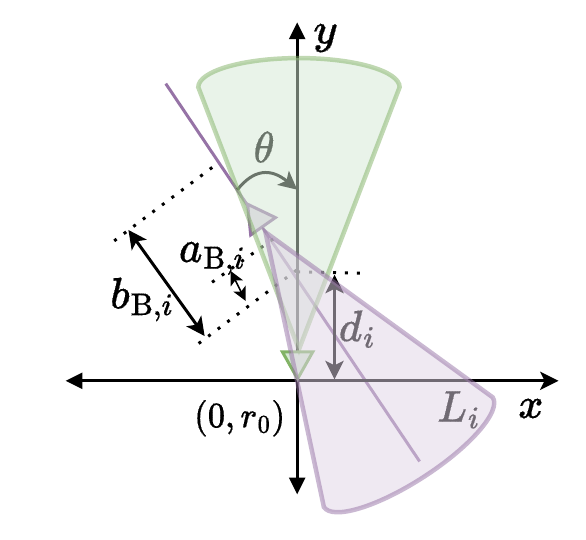}
\label{fig:case1_3}}
\hfil
\subfloat[]
{\includegraphics[trim={0cm 0cm 0cm 0.4cm},clip,width=0.2\textwidth]{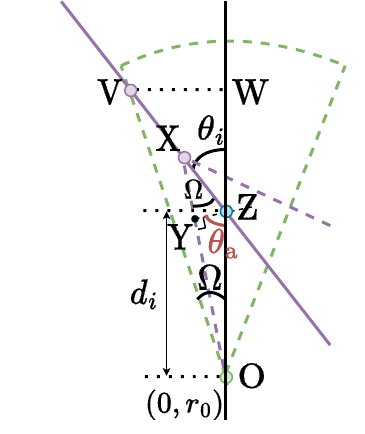}
\label{fig:case1_4}}
\caption{Geometric illustrations for Case 1: Intersection inside the radar sector. (a)-(b) depict \textbf{case  1(a)} where the line $L_i$ intersects the sector's circular arc. (c)-(d) depict \textbf{case 1(b)} where $L_i$ intersects one of the sector's edge lines. All figures show a generating angle in the first quadrant; analogous geometries hold for other quadrants after appropriate symmetry considerations.}
\label{fig:appx_01} 
\end{figure*}

\begin{figure}[t]
\centering
\subfloat[]
{\includegraphics[width=0.26\textwidth]{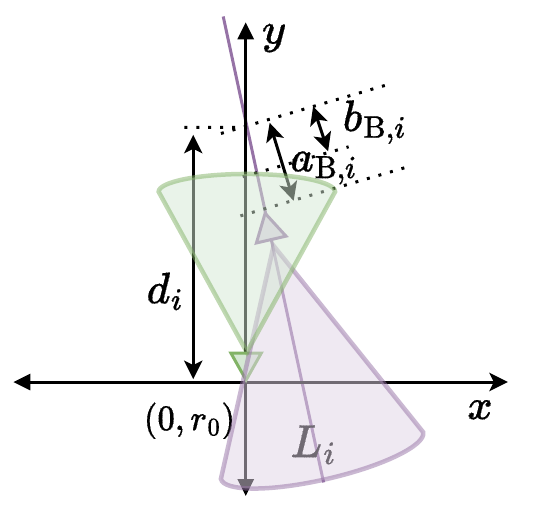}
\label{fig:case2}}
\hfil
\subfloat[]
{\includegraphics[width=0.21\textwidth]{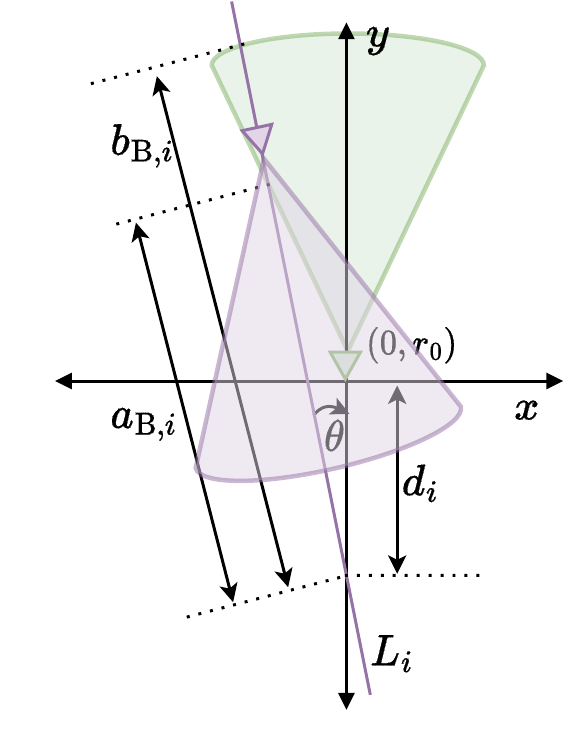}
\label{fig:case3}}
\caption{Geometric illustrations for Cases 2 and 3. (a) Case 2 with generating angle present in the first quadrant, and Case (3) with generating angle present in the third quadrant.}
\label{fig:appx_02} 
\end{figure}

\textbf{Case 1(b): Mixed Intersection}\\
Conditions:
\begin{align*}
    & 0 \leq d_i \leq c_2, \;\; \text{and} \\ 
    & \theta_i \in [0, 2\pi] \setminus \big\{[\Omega, \pi - \Omega] \cup [\pi + \Omega, 2\pi - \Omega]\big\}
\end{align*}
In this subcase events $\mathrm{A_2} \cup \mathrm{A_3} \cup \mathrm{A_6} \cup \mathrm{A_7}$ take place. Fig.~\ref{fig:case1_3} and~\ref{fig:case1_4} illustrates an example of this subcase, wherein the $\theta_i$ is generated in first quadrant i.e. $\theta_i \in \mathrm{A_2}$. In this subcase, $L_i$ may intersect one of the edge lines or the circular arc portion of the radar sector. From Fig.~\ref{fig:case1_4}, we observe that the minimum interfering distance $a_{{\rm B},i}$ is calculated from point $\mathrm{X}$ to $\mathrm{Y}$ and would be negative, likewise maximum interfering distance $b_{{\rm },i}$ is calculated from point $\mathrm{Z}$ to $\mathrm{X}$. The minimum interfering distance is found as
\begin{align*}
    \sin \Omega &= \frac{\mathrm{ZY}}{\mathrm{ZX}} = \frac{d_i \cos \left(\theta_{\rm a}\right)}{a_{{\rm B},i}} = \frac{d_i \cos \left(\frac{\pi}{2} + \Omega - \theta_i\right)}{a_{{\rm B},i}} \\
    \therefore\, a_{{\rm B},i} &= \frac{d_i \sin \left(\theta_i -  \Omega\right)}{\sin \Omega}
\end{align*}
In the above scenario where the interfering angle lies in the first quadrant, $\theta_{\rm a} = \frac{\pi}{2} + \Omega - \theta_i$, and the same is found if $\theta_i$ lies in the third quadrant. But for the second and fourth quadrant $\theta_{\rm a} = \frac{\pi}{2} + \Omega + \theta_i$. Also the modulus operator is applied in $d_i \sin \left(\theta_i -  \Omega\right)$ and $d_i \sin \left(\theta_i +  \Omega\right)$ to accommodate all four quadrants. Unlike $a_{{\rm B}, i}$, where there are two different formulas for the first/third and second/fourth quadrant, we have two different formulas for $b_{{\rm B}, i}$, but they depend on distance $d_i$. When $c_1 \leq d_i \leq c_2$, we see that line $L_i$ will intersect only the circular arc portion of the radar sector; hence, the distance $b_{{\rm B},i}$ is the same as of case 1(a). For the case of $0 \leq d_i < c_1$, the line $L_i$ does not intersect the circular arc, but one of the edge lines of the radar sector as shown in Fig.~\ref{fig:case1_3} and~~\ref{fig:case1_4}. The distance $d_i = c_1$ represents the edge case when line $L_i$ intersects simultaneously with the radar sector's circular arc and one of the edge lines. Therefore, for $d_i < c_1$, line $L_i$ intersects only the border lines. Thus by solving $y = \cot\Omega + r_0$, $x\cos\theta_i + y\sin\theta_i = r_i$, and $x^2 + (y - r_0^2) = R_{\rm B}^2$ for $y$ gives us the value of $c_1$. And as usual, take the modulus of the trigonometric functions, which are functions of generating angels. The value of $b_{{\rm B}, i}$ is then found as,
\begin{align*}
    \tan \Omega &= \frac{\mathrm{VW}}{d_i + \mathrm{WZ}} = \frac{b_{{\rm B}, i} \sin \theta_i}{d_i + b_{{\rm B}, i} \cos\theta_i} \\
    \therefore\; b_{{\rm B}, i} &= \frac{d_i \tan \Omega}{\sin\theta_i - \tan \Omega \cos\theta_i}
\end{align*}
Now, for any $\theta_i \in \Theta$, $c_2$ is the maximum value of $d_i$, beyond which any none of the interfering radars can cause interference at ego radar and is found by solving,
\begin{align*}
    c_2\sin\theta_i \cot\Omega - c_2 \cos\theta_i \leq \sqrt{R_{\rm B}^2 - \left(c_2 \sin\theta_i\right)^2} - c_2\cos\theta_i.
\end{align*}

\textbf{Case 2: Intersection Ahead but Outside the Sector}\\
Conditions
\begin{align*}
    & d_i > R_{\rm B}, \;\; \text{and} \\
    & \theta_i \in [0, 2\pi] \setminus \big\{[\Omega, \pi - \Omega] \cup [\pi + \Omega, 2\pi - \Omega]\big\}
\end{align*}
In this case, the intersection point lies ahead and outside the radar sector. This case arises only when event $\mathrm{A_1} \cup \mathrm{A_4} \cup \mathrm{A_5} \cup \mathrm{A_8}$ take place. Fig.~\ref{fig:case2} illustrates a scenario of Case 2 i.e. $\theta_i \in \mathrm{A_1}$, where we see that intersection is happening at a distance $d_i > R_{\rm B}$ i.e. the line intersects ahead of the ego radar but outside the radar sector. Despite this, portions of the line still pass through the radar sector. If we examine this scenario carefully, we notice that distance $a_{{\rm B},i}$, and $b_{{\rm B},i}$ are similar to earlier scenarios we have encountered in Case 1. The distance $a_{{\rm B},i}$ and $b_{{\rm B},i}$ is same as of distance given in~\eqref{eq:eq_appx_1} and~\eqref{eq:eq_appx_2} respectively, given in case 1(a). Although from Fig.~\ref{fig:case2}, it appears that $b_{{\rm B},i}$ is smaller than $a_{{\rm B},i}$, this is not the case. As the distance is measured from the perspective of the interfering point, both $b_{{\rm B},i}$ are negative such that $b_{{\rm B},i} > a_{{\rm B},i}$. Thus, by combining case 1(a) and case 2, we revise the range of $d_i$ as $c_1 \leq d_i \leq c_2$.

\textbf{Case 3: Intersection Behind the Ego Radar}\\ 
Conditions
\begin{align*}
    & d_i < 0, \;\; \text{and} \\
    & \theta_i \in [0, 2\pi] \setminus \big\{[\Omega, \pi - \Omega] \cup [\pi + \Omega, 2\pi - \Omega]\big\}
\end{align*}
This is the third case that arises when the point of intersection is beyond ego radar. Interference is possible only if the line still intersects the radar sector. This case arises only when event $\mathrm{A_1} \cup \mathrm{A_4} \cup \mathrm{A_5} \cup \mathrm{A_8}$ take place. Fig.~\ref{fig:case3} illustrates a scenario of Case 3, i.e., $\theta_i \in \mathrm{A_5}$ when the intersection point lies behind the ego radar. The distance $b_{{\rm B},i}$ is the same as the distance given in~\eqref{eq:eq_appx_2}, given in case 1(a), as $L_i$ is intersecting the circular arc region of radar sector. The value of $a_{{\rm B},i}$ can be derived in a similar manner to how we derived the $b_{{\rm B},i}$ in case 1(a) and 1(b). Importantly in case 3, the lines intersecting behind the ego radar cause interference only if the intersecting distance $d_i$  is greater than. If $d_i < c_1$, line $L_i$ does not fall within the radar sector of ego radar. 

Collecting the results from the three cases and their subcases, and noting the special situation of the ego radar's own street ($i=0$, where $a_{\mathrm{B},i}=R$ and $b_{\mathrm{B},i}=R_{\mathrm{B}}$), we obtain the piecewise expressions stated in Theorem~1. The thresholds $c_1$ and $c_2$ are as defined above, ensuring that the formulas are applied only in regimes where the line $L_i$ actually cuts through the radar sector.

This completes the proof of theorem~1.

\end{document}